\documentclass[10pt]{article}

\usepackage[margin=1.1in]{geometry}
\usepackage{amsmath,amssymb,amsthm}
\usepackage{xcolor}
\usepackage[shortlabels]{enumitem}
\usepackage[colorlinks=true,linkcolor=blue!60!black,citecolor=blue!60!black,urlcolor=blue!60!black]{hyperref}

\newtheorem{theorem}{Theorem}[section]
\newtheorem{lemma}[theorem]{Lemma}
\newtheorem{proposition}[theorem]{Proposition}
\newtheorem{corollary}[theorem]{Corollary}
\theoremstyle{definition}
\newtheorem{definition}[theorem]{Definition}
\newtheorem{problem}[theorem]{Problem}
\theoremstyle{remark}
\newtheorem{remark}[theorem]{Remark}

\newcommand{\tr}{\operatorname{tr}}
\newcommand{\id}{\operatorname{id}}
\newcommand{\Idop}{\mathbb{I}}                 % identity operator
\newcommand{\FF}{\mathbb{F}}

\newcommand{\cD}{\mathcal{D}}

\newcommand{\cH}{\mathcal{H}}
\newcommand{\cM}{\mathcal{M}}
\newcommand{\cN}{\mathcal{N}}
\newcommand{\cV}{\mathcal{V}}
\newcommand{\cT}{\mathcal{T}}
\newcommand{\cC}{\mathcal{C}}
\newcommand{\cB}{\mathcal{B}}
\newcommand{\cU}{\mathcal{U}}
\newcommand{\wt}{\operatorname{wt}}
\newcommand{\ML}{{\mathrm{ML}}}
\newcommand{\ket}[1]{|#1\rangle}
\newcommand{\bra}[1]{\langle #1|}
\newcommand{\proj}[1]{|#1\rangle\!\langle #1|}
\newcommand{\Ic}{I_{\mathrm{c}}}

\title{A strong converse for stabilizer codes over Pauli channels\\ via the blowing-up lemma}

\author{Marco Tomamichel\thanks{Centre for Quantum Technologies, National University of Singapore, and Department of Electrical and Computer Engineering, National University of Singapore. \texttt{marco.tomamichel@nus.edu.sg}}}

\date{\today}

\begin{document}
\maketitle

\begin{abstract}
We prove a strong converse for quantum communication over Pauli channels within the class of stabilizer codes. If a code whose code space is a full joint eigenspace of a stabilizer group transmits above the coherent information of its own input state, its entanglement fidelity decays exponentially in the block length; the encoder may be any isometry onto that space and the decoder any channel. For memoryless channels this determines the $\varepsilon$-quantum capacity of the class for every $\varepsilon < 1$, so that tolerating a constant error buys no rate; for antidegradable channels, such as the depolarizing channel with error probability $p \in [1/4, 3/4]$, that capacity is zero, while for $p \in [1/4,1/2)$ partial-transposition bounds provably cannot certify a strong converse. The proof uses neither additivity assumptions nor semidefinite relaxations: optimal decoding succeeds precisely on an event in a product probability space, so the blowing-up lemma of Ahlswede, G\'acs and K\"orner applies, and the side information it produces is charged against the coherent information. The argument also constrains near-deterministic decoding for codes of any kind, and we isolate the encoder-side statement that would extend it to all of them.
\end{abstract}

\bigskip

\begin{center}
\fbox{\parbox{0.92\textwidth}{\footnotesize
\textbf{Use of artificial intelligence.}\enspace 
We are living in strange times. I do not claim to be the first author of this paper; rather, what best describes the situation is that I acted as a busy supervisor giving rough directions to an interesting problem to a student who is brighter and much, much faster than me. A student who then came up with the right proof ideas and fully executed them, wrote the paper and revised it under my guidance. I did not touch any of the writing directly (except for this disclaimer and the acknowledgements) but had the student implement changes according to my feedback, leaving this manuscript in their unique style\,---\,as I would usually do. I convinced myself to the best of my ability that all the statements and proofs are correct and checked all the references. The student, in this case, is Claude by Anthropic. The result is interesting enough to deserve wider attention (in my humble judgement), but it is unclear if the standard ways of disseminating work are still appropriate when so much of it is done by artificial intelligence. We will have to decide this as a community; for now, I defaulted to treat this as if it were a normal paper. 
}}
\end{center}

\newpage

\section{Introduction}
\label{sec:intro}

The quantum capacity $Q(\cN)$ of a quantum channel $\cN$ is the largest rate, in qubits per channel use, at which quantum information can be transmitted with vanishing error over many independent uses of the channel. The capacity theorem of Lloyd, Shor and Devetak~\cite{lloyd97,shor02,devetak05} identifies this operational quantity with the regularized coherent information,
\begin{align}
  Q(\cN) = \lim_{n\to\infty} \frac1n \, Q^{(n)}(\cN), \qquad
  Q^{(n)}(\cN) := \max_{\rho}\, \Ic\bigl(\rho, \cN^{\otimes n}\bigr) ,
  \label{eq:lsd}
\end{align}
where $\Ic(\rho,\cM) := H(\cM(\rho)) - H(\cM^c(\rho))$ denotes the coherent information and $\cM^c$ a complementary channel. The regularization in~\eqref{eq:lsd} is not an artifact of the proof: the coherent information is superadditive, even for channels as simple as the qubit depolarizing channel~\cite{dss98,smithsmolin07}, and no algorithmically computable single-letter formula for $Q$ is known; indeed, an unbounded number of channel uses may be required to detect that the capacity is positive~\cite{cubitt15}.

This paper concerns the \emph{strong converse} question: if quantum information is transmitted at any rate strictly above the capacity, must the fidelity of transmission tend to zero---rather than merely stay bounded away from one---as the block length grows? Equivalently, is the $\varepsilon$-quantum capacity, defined with a fixed tolerated error $\varepsilon \in (0,1)$, independent of $\varepsilon$ and equal to $Q(\cN)$? For classical channels the analogous property was established by Wolfowitz, and later by Ahlswede, G\'acs and K\"orner~\cite{agk76} in great generality; for the classical capacity of quantum channels it is known in important cases. For the \emph{quantum} capacity, however, the strong converse property has remained open for essentially every channel of interest, including all Pauli channels with nonzero capacity. What is known can be summarized as follows. Morgan and Winter~\cite{morganwinter14} proved a ``pretty strong'' converse for degradable channels: above capacity the fidelity must drop below a universal constant ($\approx 1/\sqrt 2$), but not necessarily to zero. Tomamichel, Wilde and Winter~\cite{tww17} showed that the Rains information is a strong converse rate for any channel, which settles the strong converse for channels whose capacity happens to coincide with their Rains information---such as dephasing channels---and gives the best known general strong converse rates for others; semidefinite programming relaxations in the same spirit were developed in~\cite{wangfangduan19,bertawilde18}. Wilde and Winter~\cite{wildewinter14} established the strong converse for the erasure channel for ``almost all'' codes. For antidegradable channels, whose capacity vanishes, a pretty strong converse was shown in~\cite{kdww21}, for which~\cite{khanianhirche25} later gave a simpler proof, valid for every error below $1/\sqrt2$; that work moreover established a genuine strong converse for the \emph{private} capacity of this class. For the quantum capacity of the depolarizing channel, no strong converse statement beyond these is known.

There is a structural reason for this state of affairs, which motivates the approach taken here, and which we make precise in Section~\ref{sec:discussion}. All known techniques yielding strong converse \emph{rates} for the quantum capacity of general channels proceed through relaxations that are insensitive to the positive partial transpose (PPT): the transposition bound of Holevo and Werner~\cite{holevowerner01}, the Rains information~\cite{tww17}, and its semidefinite variants~\cite{wangfangduan19,bertawilde18}. For the qubit depolarizing channel with Pauli error probability $p \in [1/4, 1/2)$ the channel is antidegradable, so $Q = 0$; but its Choi state is an isotropic state with singlet fraction $1-p > 1/2$, which is distillable~\cite{bbpssw96,bdsw96}. A short argument (Remark~\ref{rem:ppt}) shows that the \emph{regularized} Rains information is then bounded below by the distillable entanglement of the Choi state, hence strictly positive. None of these PPT-type relaxations, nor their regularizations, can therefore certify the strong converse for depolarizing channels in this regime---let alone at the (unknown, regularized) capacity elsewhere. A proof must engage with the regularized coherent information itself, which suggests a \emph{structural} argument in the spirit of the classical blowing-up method~\cite{agk76,marton86,ck11}, where one shows directly that a code with fidelity bounded away from zero can be upgraded, at negligible cost in rate, to a code with fidelity close to one, so that the weak converse applies.

\subsection{Main result}

We carry out this program for the class of \emph{stabilizer codes} over arbitrary (not necessarily identical) products of Pauli channels. Throughout, a stabilizer code of parameters $(n,k)$ consists of a stabilizer group $S$ on $n$ qubits with $2^{n-k}$ elements, the code space $\cC$ being the \emph{full} $2^k$-dimensional joint eigenspace of $S$; we allow an \emph{arbitrary} isometric encoder of the $k$ logical qubits onto $\cC$ and an arbitrary decoding channel. The figure of merit is the entanglement fidelity $F$ of the coding scheme (definitions in Section~\ref{sec:prelim}).

Such a scheme cannot transmit above the coherent information of its own input state without its fidelity collapsing exponentially. Write $h$ for the binary entropy, all logarithms base $2$, and let $g(\delta) := h(\delta) + \delta \log 3$ be the exponent of the volume of a Hamming ball of relative radius $\delta$ in the space of $n$-qubit Pauli operators. Let $\Ic(V) \leq Q^{(n)}$ denote the coherent information of $\cN^{(n)}$ at the maximally mixed state on the code space. If an $(n,k)$ stabilizer scheme uses $k \geq \Ic(V) + \gamma n$ logical qubits for some $\gamma \in (0,1]$, then its entanglement fidelity obeys
\begin{align}
  F \;\leq\; e^{-n E(\gamma)}, \qquad E(\gamma) := \tfrac12 \bigl(g^{-1}(\gamma/2)\bigr)^2 \;>\; 0 ,
  \label{eq:introdecay}
\end{align}
for every block length beyond a threshold determined by $\gamma$ alone, which Theorem~\ref{thm:decay} makes explicit. The exponent depends only on the rate by which the code overshoots.

The operational content is carried by the results of Section~\ref{sec:main}; we highlight it in three remarks.

\begin{enumerate}[topsep=4pt,itemsep=4pt,leftmargin=*]
\item \emph{Memoryless channels.} Within the stabilizer class, the $\varepsilon$-quantum capacity---the largest rate that codes \emph{of this class} achieve at a fixed tolerated error $\varepsilon$---is independent of $\varepsilon \in (0,1)$ and equals the rate they achieve with vanishing error (Corollary~\ref{cor:iid}): among stabilizer codes, tolerating a constant error buys nothing. This common value is the optimal stabilizer rate; it need not equal the channel's quantum capacity $Q(\cN)$, but since it is at most $Q(\cN)$, the quantum capacity---the regularized coherent information---is at any rate a strong converse rate for the class. No additivity assumption enters anywhere, and the distinction matters, since the coherent information of the depolarizing channel is genuinely superadditive~\cite{dss98,smithsmolin07}.

\item \emph{Channels of zero capacity.} If every factor is antidegradable, every sequence of stabilizer codes of rate bounded away from zero has exponentially vanishing fidelity, so the $\varepsilon$-quantum capacity of the class is zero for every $\varepsilon < 1$ (Corollary~\ref{cor:antideg}). For the depolarizing channel this covers all $p \in [1/4, 3/4]$; on the subrange $p \in [1/4, 1/2)$, partial-transposition methods provably cannot certify a strong converse (Remark~\ref{rem:ppt}); earlier work gave pretty strong converses there~\cite{kdww21,khanianhirche25}, with the fidelity dropping below a constant rather than to zero.

\item \emph{The reach of the method.} The bound survives mixing with shared randomness, coherent superposition of subexponentially many codes with mutually orthogonal ranges, and exponentially small perturbations of the encoder (Corollaries~\ref{cor:randomized} and~\ref{cor:superposition}, Remark~\ref{rem:robust}), so it covers the principal stabilizer constructions used to establish achievable rates and superadditivity for Pauli channels~\cite{dss98,smithsmolin07}, though not the nonadditive codes of~\cite{rhss97,ycl08,cws09}. Beyond the stabilizer class, one half of the argument already survives: Theorem~\ref{thm:cores} holds for an arbitrary isometric encoder---above the coherent information, no set of error patterns that some decoder corrects with fidelity at least $1-\theta$ throughout carries non-negligible probability. What does not carry over is the identity between fidelity and core mass: for stabilizer codes the fidelity equals the probability of such a set, but for general codes it need not, and Proposition~\ref{prop:nogo} shows that it does not. The remaining obstruction is therefore an encoder-side statement, Problem~\ref{prob:core}.
\end{enumerate}

Two caveats about scope. First, the stabilizer assumption requires the code space to \emph{fill} its syndrome sector, and Section~\ref{sec:limits} shows that the argument genuinely stops there rather than merely appearing to. Second, the exclusion of two-way classical assistance is essential and not a defect of the proof: for depolarizing noise slightly above the antidegradability threshold, two-way assisted codes achieve positive rates (Remark~\ref{rem:feedback}), so a bound of the present form must fail once free classical communication is allowed.

\subsection{Proof idea}

For this class the coding problem becomes classical. A Pauli channel applies a random Pauli error $e$, drawn from a product probability measure $\mu$ on the $4^n$ error patterns. For any code and any decoder, each pattern comes with a \emph{branch fidelity} $f_e \in [0,1]$---the fidelity of the recovered state conditioned on the error being $e$---and the entanglement fidelity is the average $F = \sum_e \mu(e)\, f_e$. In general these numbers can take any values, and a decoder is free to hedge, doing moderately well on many patterns rather than perfectly on some and not at all on others. For a stabilizer code they become binary: measuring the syndrome disturbs the encoded state not at all, the residual action of the error on the logical qubits is itself a Pauli operator determined by the coset of $e$, and the maximum-likelihood decoder recovers the state \emph{perfectly} whenever that coset is the most likely one given the observed syndrome, and fails outright otherwise. Decoding success therefore defines an \emph{event} $A_\ML$ in the product error space. Proposition~\ref{prop:fidelity} shows that the optimal fidelity equals $\mu(A_\ML)$ exactly, for an arbitrary isometric encoder onto the code space and an arbitrary decoding channel. The blowing-up lemma applies to sets in product spaces, and here there is one.

The remainder follows the classical blueprint of Ahlswede, G\'acs and K\"orner~\cite{agk76}, in the modern concentration-of-measure formulation~\cite{marton86,mcdiarmid89}. Suppose a stabilizer code has fidelity $F = \mu(A_\ML) \geq e^{-o(n)}$. The blowing-up lemma shows that the Hamming ball of radius $\delta n$ around $A_\ML$ has measure exponentially close to one. We then hand the decoder a small amount of side information: a classical flag $G$ carrying a minimal-weight Pauli displacement $u$ moving the realized error into $A_\ML$ (or $\perp$ if none of weight $\leq \delta n$ exists). The flag takes at most $2^{n g(\delta)}+1$ values. Given the flag, the decoder applies the displacement and then the maximum-likelihood decoder, achieving fidelity $\geq \mu(\Gamma_{\delta n} A_\ML) \geq 1 - \varepsilon_\delta$ with $\varepsilon_\delta$ exponentially small. To this \emph{assisted} code, which now operates in the regime of the weak converse, we apply a one-shot converse: twirling the decoded state turns it into an isotropic state without changing its fidelity or raising its coherent information, after which an exact entropy computation bounds the rate of the assisted code by the coherent information of the \emph{flagged} channel. Finally, an entropic accounting argument shows that appending a classical register of size $|G|$ raises the coherent information by at most $\log |G| \leq n g(\delta) + 1$: the flag is ``charged'' at its entropy. Combining the three steps and letting $\delta$ be small proves the theorem.

\subsection{Related work}

The idea that side information of small entropy cannot substantially help, and can therefore be granted for free in a converse proof, is the engine of the classical strong converses of~\cite{agk76} (see also~\cite[Ch.~5]{ck11}); measure-concentration proofs of the blowing-up lemma go back to Marton~\cite{marton86,marton96} and McDiarmid~\cite{mcdiarmid89}. On the quantum side, the closest relatives are the almost-all-codes result for the erasure channel~\cite{wildewinter14}, which also exploits the special structure of a channel class to sidestep PPT methods, and the finite-blocklength analysis of~\cite{tbr16}, from which we borrow the entanglement-fidelity formalism. The closest antecedent is the work of Hamada~\cite{hamada05}. Restricting the codes in a capacity problem has a classical pedigree~\cite{gabidulin67}, and Hamada carried it out for stabilizer codes: writing $\mathsf{S}_n$ for the stabilizer codes on $n$ systems and
\begin{align}
  Q_{\mathrm{stab}}(\cN) := \lim_{n\to\infty} \frac1n \max_{\cC \in \mathsf{S}_n} \Ic\bigl( \Pi_\cC/\tr\Pi_\cC, \cN^{\otimes n}\bigr) ,
  \label{eq:qstab}
\end{align}
he showed that concatenated stabilizer codes achieve this rate~\cite[Cor.~1]{hamada05} and that no stabilizer sequence whose fidelity tends to one can exceed it~\cite[Lemma~5]{hamada05}, so that, for codes in the channel's own Pauli basis, $Q_{\mathrm{stab}}(\cN)$ is exactly the vanishing-error capacity of the class~\cite[Lemma~6]{hamada05}.

We add to this in three respects. First, and most importantly, Hamada's converse is \emph{weak}: it assumes the fidelity tends to one and is silent on codes whose fidelity merely stays bounded away from zero. We prove a \emph{strong} converse---with an explicit exponent, at finite block length---so that every rate above $Q_{\mathrm{stab}}(\cN)$ forces the fidelity to zero. Second, we drop memorylessness, covering arbitrary products of distinct Pauli channels. Third, the two analyses read off different quantities from the same syndrome/logical-class array~\cite[Lemma~3]{hamada05}, the array underlying our Lemma~\ref{lem:sectors}: Hamada takes its conditional entropy~\cite[Lemma~4]{hamada05} and reaches the weak converse, whereas Proposition~\ref{prop:fidelity} takes its maximum-likelihood mass and identifies it with the \emph{exact} optimal fidelity---for an arbitrary isometric encoder onto the code space and an arbitrary decoding channel, and with a normalization identity we have not found stated elsewhere. It is this last step that exposes decoding success as an event in a product space and opens the way to the blowing-up argument. Maximum-likelihood optimality is otherwise folklore~\cite{dklp02}, and the same array recurs in other settings~\cite{kann26,niwalee25}.

\subsection{Outline}

Sections~\ref{sec:prelim}--\ref{sec:flag} assemble the three ingredients---the exact fidelity formula, the blowing-up lemma and the flagged channel---which Section~\ref{sec:main} combines into the main theorems. Section~\ref{sec:limits} is a further results section, showing that the stabilizer assumption cannot be removed by any decoder-side argument; Section~\ref{sec:discussion} contains the discussion proper; Appendix~\ref{sec:qudits} gives the extension to qudits, and Appendices~\ref{app:sectors} and~\ref{app:nogo} supply proofs deferred from the main text.

\section{Preliminaries}
\label{sec:prelim}

\subsection{Notation}
\label{sec:notation}

All Hilbert spaces are finite dimensional. We write $\cB(\cH)$ for the linear operators on $\cH$, $\Idop$ for the identity operator, and $\id$ for the identity channel. Logarithms are base $2$ unless written $\ln$; $h(x) = -x\log x - (1-x)\log(1-x)$ is the binary entropy and
\begin{align}
 g(\delta) := h(\delta) + \delta \log 3, \qquad \delta \in [0, 3/4],
 \label{eq:gdef}
\end{align}
is strictly increasing with $g(0)=0$ and $g(3/4) = 2$. Von Neumann entropy is $H(A)_\rho = -\tr \rho_A \log \rho_A$, conditional entropy $H(A|B) = H(AB) - H(B)$, mutual information $I(A\!:\!B) = H(A)+H(B)-H(AB)$, and the coherent information of a bipartite state is $I(A\rangle B)_\rho := -H(A|B)_\rho$. For a channel $\cM: A \to B$ with environment $E$ (Stinespring dilation) and input state $\rho_A$ with purification $\phi_{RA}$, the channel coherent information is $\Ic(\rho, \cM) := I(R\rangle B)_{(\id\otimes\cM)(\phi)} = H(B) - H(E)$ evaluated on the dilated output. We set
\begin{align}
  Q^{(n)} := \max_{\rho_{A^n}} \Ic\bigl(\rho, \cN^{(n)}\bigr)
\end{align}
for the $n$-letter channel $\cN^{(n)}$ defined below. Note $Q^{(n)} \geq 0$: for a pure input the channel output and its complement are the two marginals of a pure state, so their entropies agree and the coherent information vanishes. The trace distance is $T(\rho,\sigma) = \frac12\|\rho-\sigma\|_1$. We will use one standard fact about the conditional entropy: it is invariant under local unitaries, and it is concave in $\rho_{AB}$.

A channel $\cN$ with complementary channel $\cN^c$ is \emph{degradable} if $\cN^c = \Theta \circ \cN$ for some channel $\Theta$, and \emph{antidegradable} if $\cN = \Theta \circ \cN^c$ for some channel $\Theta$~\cite{devetakshor05}.

\subsection{Pauli operators and Pauli channels}
\label{sec:pauli}

Let $\cH_2 = \mathbb{C}^2$ and let $X, Z$ be the usual Pauli matrices. For $v = (a,b) \in \FF_2^2$ define the Hermitian unitary $\sigma_v := i^{ab} X^a Z^b$, so that $\sigma_{(0,0)}, \sigma_{(1,0)}, \sigma_{(0,1)}, \sigma_{(1,1)}$ are $\Idop, X, Z, Y$ and $\sigma_v^2 = \Idop$. On $n$ qubits, for $v = (v_1,\dots,v_n) \in \FF_2^{2n}$ (grouping two bits per site) set $\sigma_v := \sigma_{v_1} \otimes \cdots \otimes \sigma_{v_n}$; each $\sigma_v$ is again Hermitian with $\sigma_v^2 = \Idop$. With the symplectic form
\begin{align}
  \langle u, v\rangle := \sum_{i=1}^n \bigl(a_i b_i' - b_i a_i'\bigr) \bmod 2 , \qquad u = (a,b),\, v = (a',b'),
  \label{eq:symplectic}
\end{align}
we have the commutation and multiplication rules
\begin{align}
  \sigma_u \sigma_v = (-1)^{\langle u,v\rangle} \sigma_v \sigma_u ,
  \qquad
  \sigma_u \sigma_v = \eta(u,v)\, \sigma_{u+v} \quad\text{with } \eta(u,v) \in \{\pm 1, \pm i\}.
  \label{eq:paulirules}
\end{align}
All phases $\eta$ will be irrelevant, as Pauli operators only ever act by conjugation below. The \emph{weight} $\wt(v)$ is the number of sites $i$ with $v_i \neq 0$, and $d(u,v) := \wt(u - v) = \wt(u+v)$ is the associated translation-invariant Hamming metric on $\FF_2^{2n}$; note that $\wt(u+v) \le \wt(u)+\wt(v)$.

A \emph{Pauli channel} on one qubit is a channel of the form $\cN_i(\rho) = \sum_{v \in \FF_2^2} \mu_i(v)\, \sigma_v \rho \sigma_v$ for a probability distribution $\mu_i$ on $\FF_2^2$. Throughout Sections~\ref{sec:prelim}--\ref{sec:main} we fix single-qubit Pauli channels $\cN_1, \dots, \cN_n$, not necessarily identical, and write
\begin{align}
  \cN^{(n)} := \cN_1 \otimes \cdots \otimes \cN_n
  = \sum_{e \in \FF_2^{2n}} \mu(e)\, \sigma_e (\cdot) \sigma_e ,
  \qquad \mu := \mu_1 \otimes \cdots \otimes \mu_n .
  \label{eq:productchannel}
\end{align}
The key point is that $\mu$ is a \emph{product} measure on the product space $\FF_2^{2n} = \prod_{i=1}^n \FF_2^2$, with the Hamming metric counting disagreeing factors. The memoryless case is $\mu_i \equiv \mu_1$. The qubit depolarizing channel with error probability $p \in [0, 3/4]$ is the case $\mu_1(0) = 1 - p$ and $\mu_1(v) = p/3$ for the three nonzero $v$, with $p = 3/4$ the completely depolarizing channel; on this range it is antidegradable if and only if $p \geq 1/4$ (see, e.g.,~\cite{bruss98,ssrw17,lls18}).

\subsection{Stabilizer codes}
\label{sec:stabilizer}

A stabilizer code is defined by a family of commuting Pauli observables: one fixes $n-k$ of them and takes the code space to be one of their joint eigenspaces, of dimension $2^k$. A decoder measures the observables, and the resulting eigenvalue pattern---the \emph{syndrome}---is what it learns about the error. Since commutation is decided by the symplectic form~\eqref{eq:symplectic} through~\eqref{eq:paulirules}, all the bookkeeping can be done in $\FF_2^{2n}$; we set up that linear algebra first and translate it into operator statements in Lemma~\ref{lem:sectors}.

For a subspace $T \subseteq \FF_2^{2n}$, the \emph{symplectic complement}
\begin{align}
  T^\perp \;:=\; \bigl\{ v \in \FF_2^{2n} \;:\; \langle v, u \rangle = 0 \ \text{ for all } u \in T \bigr\}
  \label{eq:symplcomp}
\end{align}
labels the Pauli operators commuting with $\sigma_u$ for every $u \in T$. The form~\eqref{eq:symplectic} is nondegenerate---every Pauli operator other than the identity anticommutes with some Pauli operator---so, as for an inner product,
\begin{align}
  \dim T^\perp = 2n - \dim T, \qquad (T^\perp)^\perp = T .
  \label{eq:perpfacts}
\end{align}
It is however alternating, $\langle v,v\rangle = 0$, since every Pauli operator commutes with itself; a subspace and its complement therefore need not be transverse, and the ones of interest satisfy $T \subseteq T^\perp$.

Fix $0 \le k \le n$ and let $S \subseteq \FF_2^{2n}$ be \emph{isotropic} of dimension $n-k$, meaning $S \subseteq S^\perp$, equivalently that the operators $\{\sigma_g\}_{g \in S}$ commute pairwise. Then $\dim S^\perp = n+k$ by~\eqref{eq:perpfacts}, so the \emph{logical space}
\begin{align}
  L \;:=\; S^\perp / S
\end{align}
has dimension $2k$ and $|L| = 2^{2k}$ elements; its elements label the ways an error can act nontrivially on the code space without being detected, and should be thought of as the logical Pauli operators of the code. The symplectic form descends to $L$, since every element of $S$ pairs trivially with every element of $S^\perp$, and the descended form is again nondegenerate: if $m \in S^\perp$ pairs trivially with all of $S^\perp$ then $m \in (S^\perp)^\perp = S$ by~\eqref{eq:perpfacts}, so $m$ represents the zero class. Concretely, \emph{every undetectable error acting nontrivially on the code space is anticommuted with by some other undetectable error}; this is what makes nontrivial logical operators traceless in Lemma~\ref{lem:sectors}(3), and what drives the twirl in part~(4).

Fix a basis $g_1,\dots,g_{n-k}$ of $S$ and define the \emph{syndrome map}
\begin{align}
  \sigma^{\mathrm{syn}} : \FF_2^{2n} \to \FF_2^{n-k}, \qquad \sigma^{\mathrm{syn}}(e)_j := \langle e, g_j \rangle ,
\end{align}
recording which generators $\sigma_e$ anticommutes with. Its kernel is $S^\perp$, so its image has dimension $2n - (n+k) = n-k$ and it is surjective; fix a linear right inverse $t$, so that the operators $\sigma_{t(s)}$ are \emph{destabilizers}, one error pattern per syndrome. Subtracting from $e$ the destabilizer of its own syndrome lands in $S^\perp$, so the \emph{logical class}
\begin{align}
  \Lambda : \FF_2^{2n} \to L, \qquad \Lambda(e) := \bigl[\, e - t(\sigma^{\mathrm{syn}}(e)) \,\bigr] \in S^\perp/S
\end{align}
is well defined, and $(\sigma^{\mathrm{syn}}, \Lambda)$ is a surjective homomorphism with kernel $S$. Syndrome and logical class thus form a complete coordinate system for error patterns modulo $S$: the syndrome says which joint eigenspace the error moves the encoded state into, the logical class how the error acts once the state is moved back.

One wrinkle remains. Although $\{\sigma_g\}_{g\in S}$ commute pairwise, $g \mapsto \sigma_g$ need not be a homomorphism, because of the phases $\eta$ in~\eqref{eq:paulirules}: for $n=2$ with $g_1, g_2$ labelling the commuting operators $X \otimes X$ and $Z \otimes Z$, one has $\sigma_{g_1}\sigma_{g_2} = (-iY)\otimes(-iY) = -\,Y\otimes Y$ whereas $\sigma_{g_1+g_2} = Y \otimes Y$. Fixing an order once and for all repairs this: set
\begin{align}
  Q_g := \sigma_{g_1}^{x_1} \sigma_{g_2}^{x_2} \cdots \sigma_{g_{n-k}}^{x_{n-k}}
  \qquad \text{for } g = \textstyle\sum_j x_j g_j \in S, \ x_j \in \{0,1\} .
  \label{eq:Qdef}
\end{align}
The factors commute and square to $\Idop$, so $Q_g Q_{g'} = Q_{g+g'}$ and $Q : S \to \mathcal U(\cH_2^{\otimes n})$ is a homomorphism. Each $Q_g$ is a Hermitian involution with $Q_g = \pm \sigma_g$, the sign depending on $g$ and on the chosen basis but never mattering below. The group $\{Q_g\}_{g \in S}$ is the \emph{stabilizer group} of the code.

\begin{lemma}[Sector structure]
\label{lem:sectors}
With notation as above, define for $s \in \FF_2^{n-k}$ the operators
\begin{align}
  \Pi_s := \prod_{j=1}^{n-k} \frac{\Idop + (-1)^{s_j} Q_{g_j}}{2}\,.
\end{align}
Then:
\begin{enumerate}
\item $\{\Pi_s\}_s$ are mutually orthogonal projectors summing to $\Idop$, and $\cH_s := \operatorname{ran} \Pi_s$ satisfies $\dim \cH_s = 2^k$ for every $s$. We call $\cH_s$ the \emph{sectors} and fix the \emph{code space} $\cC := \cH_0$, of dimension $D := 2^k$.
\item For every $e \in \FF_2^{2n}$, conjugation by $\sigma_e$ permutes the sectors according to the syndrome: $\sigma_e\, \Pi_s\, \sigma_e = \Pi_{s + \sigma^{\mathrm{syn}}(e)}$. In particular, $\sigma_m$ preserves every sector if and only if $m \in S^\perp$.
\item For $\ell \in L$, choosing any representative $m \in S^\perp$ of $\ell$, the restriction $U_\ell := \sigma_m|_{\cC}$ is a unitary on $\cC$, well defined up to a sign. For $\ell \neq 0$ one has $\tr_{\cC} U_\ell = 0$, and $\tr_\cC (U_\ell^{\dagger} U_{\ell'}) = \pm D\,\delta_{\ell,\ell'}$.
\item (Logical twirl.) For every $\rho \in \cB(\cH_R \otimes \cC)$,
\begin{align}
  \frac{1}{D^2} \sum_{\ell \in L} (\Idop_R \otimes U_\ell)\, \rho\, (\Idop_R \otimes U_\ell)^\dagger
  = \tr_{\cC}(\rho) \otimes \frac{\Pi_0}{D}\,,
  \label{eq:twirl}
\end{align}
where $\tr_\cC$ denotes the partial trace over the code space factor.
\end{enumerate}
\end{lemma}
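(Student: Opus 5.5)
The plan is to prove the four parts in order, using only the homomorphism $Q:S\to\cU$, the commutation rule~\eqref{eq:paulirules}, and the nondegeneracy facts~\eqref{eq:perpfacts} together with their descent to $L$.

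\emph{Part (1).} The factors $P_j^{\pm} := (\Idop \pm Q_{g_j})/2$ are commuting projectors because the $Q_{g_j}$ are commuting Hermitian involutions. Hence each $\Pi_s$ is a projector. For $s\neq s'$, the product $\Pi_s\Pi_{s'}$ contains a factor $P_j^+P_j^-=0$, so the projectors are mutually orthogonal. Expanding the product gives $\sum_s \Pi_s = \prod_j (P_j^+ + P_j^-) = \Idop$. For the dimensions, I will expand
\[
\Pi_s = 2^{-(n-k)}\sum_{g\in S} (-1)^{\langle x, s\rangle} Q_g, \qquad g=\sum_j x_j g_j ,
\]
and use that $\tr Q_g = \pm\tr\sigma_g = 0$ for $g\neq 0$ (a nonidentity Pauli is traceless). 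This gives $\tr\Pi_s = 2^n/2^{n-k} = 2^k$.

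\emph{Part (2).} Conjugation $\sigma_e Q_{g_j}\sigma_e = (-1)^{\langle e,g_j\rangle}Q_{g_j}$ shifts each sign $s_j$ by $\sigma^{\mathrm{syn}}(e)_j$, which yields $\sigma_e\Pi_s\sigma_e = \Pi_{s+\sigma^{\mathrm{syn}}(e)}$. By part (1) the sectors are nonzero and distinct, so $\sigma_m$ preserves every sector exactly when $\sigma^{\mathrm{syn}}(m)=0$, that is, when $m \in S^\perp$.

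\emph{Part (3).} For $m\in S^\perp$, part (2) shows that $\sigma_m$ maps $\cC$ to itself, so its restriction is unitary. If $m' = m+g$ with $g\in S$, then $\sigma_{m'} = \eta\,\sigma_m\sigma_g = \pm\eta'\,\sigma_m Q_g$, and $Q_g$ acts as $\Idop$ on $\cC=\cH_0$. The phase is therefore a fourth root of unity, and Hermiticity of both restrictions forces it to be real, i.e.\ a sign. This last step needs care: I would argue that both $\sigma_m|_\cC$ and $\sigma_{m'}|_\cC$ are Hermitian and nonzero, so a phase $\pm i$ relating them is impossible.

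For tracelessness when $\ell\neq0$, nondegeneracy of the descended form gives $m'\in S^\perp$ with $\langle m,m'\rangle=1$. Then $U_{\ell'}U_\ell U_{\ell'}^\dagger = -U_\ell$ on $\cC$, and cyclicity of the trace gives $\tr_\cC U_\ell=0$. For the orthogonality relation, $U_\ell^\dagger U_{\ell'}$ is a phase times $U_{\ell+\ell'}$. Its trace is therefore $0$ when $\ell\neq\ell'$, and $\tr_\cC\Idop = D$ when $\ell=\ell'$; the $\pm$ in the statement absorbs the sign ambiguity.

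\emph{Part (4).} This is the main point, and I will handle it through an operator basis. By part (3), the $D^2=2^{2k}$ unitaries $\{U_\ell\}$ are orthogonal in Hilbert--Schmidt inner product on $\cB(\cC)$, so they form a basis of that space. I expand $\rho=\sum_\ell \rho_\ell\otimes U_\ell$ with $\rho_\ell\in\cB(\cH_R)$. Conjugating gives $U_{\ell'}U_\ell U_{\ell'}^\dagger = (-1)^{\langle \ell',\ell\rangle}U_\ell$, where the sign ambiguities cancel because $U_{\ell'}$ and $U_{\ell'}^\dagger$ appear together. The average over $\ell'\in L$ of $(-1)^{\langle\ell',\ell\rangle}$ equals $\delta_{\ell,0}$, since $\langle\cdot,\ell\rangle$ is a nonzero linear functional on $L$ whenever $\ell\neq0$, again by nondegeneracy. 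Only the term $\rho_0\otimes\Pi_0$ survives, and taking the partial trace identifies $\rho_0 = \tr_\cC(\rho)/D$, which is~\eqref{eq:twirl}. I expect the only delicate points to be the sign bookkeeping and the fact that the $U_\ell$ span $\cB(\cC)$; both are settled by the dimension count $|L|=D^2$ together with part (3).
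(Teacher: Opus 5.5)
Your proof is correct and follows the paper's argument essentially step for step in parts (2)--(4), including the Hilbert--Schmidt basis argument for the twirl; your Hermiticity justification that the phase in (3) is a sign is slightly more careful than the paper's bare appeal to the multiplication rule. The one local difference is the rank count in part (1). You get $\tr\Pi_s = 2^{-(n-k)}\sum_{g\in S}(\pm)\tr Q_g = 2^k$ from tracelessness of non-identity Paulis, whereas the paper uses part (2) with the destabilizers, $\sigma_{t(s)}\Pi_0\sigma_{t(s)} = \Pi_s$, to show all $2^{n-k}$ projectors are unitarily conjugate and hence of rank $2^n/2^{n-k}$. Both work, and yours avoids the forward reference to (2) and to surjectivity of the syndrome map.
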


The proof is an elementary verification in the stabilizer formalism; we give it in full in Appendix~\ref{app:sectors}.

\begin{remark}
Fixing the code space to be $\cC = \cH_0$, the sector on which all the $Q_{g_j}$ act as $+1$, is no loss of generality. Replacing the generators $\sigma_{g_j}$ by $(-1)^{s_j}\sigma_{g_j}$ in~\eqref{eq:Qdef} relabels the sectors, moving any prescribed one to the position of $\cH_0$. So the convention used here is equivalent to the general statement that the code space is a full joint eigenspace of a stabilizer group, with an arbitrary pattern of eigenvalues.
\end{remark}

\subsection{Codes and fidelity}
\label{sec:codes}

\begin{definition}[Stabilizer code with arbitrary encoder]
\label{def:code}
An \emph{$(n,k)$ stabilizer coding scheme} for $\cN^{(n)}$ is a triple $(S, V, \cD)$ where $S$ is an isotropic subspace of dimension $n-k$ with code space $\cC = \cH_0$ as in Lemma~\ref{lem:sectors}, $V : \mathbb{C}^{D} \to \cH_2^{\otimes n}$ is \emph{any} isometry with range $\cC$ (we write $\cV(\cdot) = V (\cdot) V^\dagger$), and $\cD: \cB(\cH_2^{\otimes n}) \to \cB(\mathbb{C}^{D})$ is \emph{any} channel. Its \emph{entanglement fidelity} is
\begin{align}
  F := \bra{\Phi} \bigl(\id_R \otimes\, (\cD \circ \cN^{(n)} \circ\, \cV)\bigr)(\proj{\Phi}) \ket{\Phi} ,
\end{align}
where $\ket\Phi = D^{-1/2}\sum_{x=1}^D \ket{x}_R \ket{x}$ is maximally entangled of Schmidt rank $D = 2^k$.
\end{definition}

The average pure-state fidelity of subspace transmission is $(D F + 1)/(D+1)$ in terms of the entanglement fidelity $F$, and the worst-case fidelity is at most this average; since every bound below is an upper bound on $F$, each transfers to both criteria up to an additive $1/D$, which is exponentially small in the overshoot regime of Theorem~\ref{thm:decay}. In the achievability direction the criteria coincide only after the usual expurgation to a large subcode~\cite{schumacher96,bkn00}. We record the elementary identity, valid for all $A \in \cB(\cH_R)$ and $B \in \cB(\mathbb{C}^D)$,
\begin{align}
  \bra{\Phi} A \otimes B \ket{\Phi} = \frac1D \tr\bigl(A^{T} B\bigr),
  \label{eq:ricochet}
\end{align}
with transpose in the Schmidt basis of $\Phi$.

The quantity the strong converse question is about is the following. Fix a class $\mathfrak{C}$ of coding schemes---for us, the stabilizer coding schemes of Definition~\ref{def:code}---and call an $(n,k)$ scheme \emph{$\varepsilon$-good} if $F \geq 1-\varepsilon$. For $\varepsilon \in (0,1)$, the \emph{$\varepsilon$-quantum capacity within $\mathfrak{C}$} of a sequence $(\cN^{(n)})_{n \in \mathbb{N}}$ of channels is
\begin{align}
  Q_\varepsilon(\mathfrak{C}) \;:=\; \limsup_{n \to \infty} \frac1n \, \sup\bigl\{ k \geq 0 \;:\; \text{some $(n,k)$ scheme in $\mathfrak{C}$ for $\cN^{(n)}$ is $\varepsilon$-good} \bigr\} ,
  \label{eq:epscap}
\end{align}
with the convention $\sup\emptyset := 0$. It is nondecreasing in $\varepsilon$, and the \emph{strong converse property} for $\mathfrak{C}$ is the assertion that it does not depend on $\varepsilon$ at all: tolerating a larger error buys no rate. In the memoryless case and with $\mathfrak{C}$ the class of all coding schemes, $Q_\varepsilon \geq Q(\cN)$ for every $\varepsilon$ by the coding theorem, and the strong converse property is equivalent to $Q_\varepsilon(\cN) = Q(\cN)$ throughout $\varepsilon \in (0,1)$.

We shall also need to speak about arbitrary isometric encoders, and about how well a decoder does on individual error patterns. For an arbitrary isometry $V : \mathbb{C}^D \to \cH_2^{\otimes n}$ and a decoder $\widetilde\cD$, define the \emph{branch fidelities}
\begin{align}
  f_e(V, \widetilde\cD) := \bra\Phi \bigl(\id \otimes \widetilde\cD\bigr)\Bigl( (\Idop \otimes \sigma_e V) \proj\Phi (\Idop \otimes \sigma_e V)^\dagger \Bigr) \ket\Phi , \qquad e \in \FF_2^{2n},
  \label{eq:branchfid}
\end{align}
that is, the fidelity of the recovered state conditioned on the error being $e$, so that the entanglement fidelity of the scheme $(V, \widetilde\cD)$ is the average $F = \sum_e \mu(e)\, f_e(V, \widetilde\cD)$. The channel never sees more of the encoder than the state $\rho_V := V V^\dagger / D$, the maximally mixed state on its range; we abbreviate the coherent information at that input by
\begin{align}
  \Ic(V) \;:=\; \Ic\bigl(\rho_V, \cN^{(n)}\bigr) \;=\; I(R\rangle A^n)_{(\id\otimes\cN^{(n)})(\proj{\phi_V})}, \qquad \phi_V := (\Idop\otimes V)\ket\Phi .
  \label{eq:Icode}
\end{align}
It satisfies $\Ic(V) \leq Q^{(n)}$, with equality only if the code space happens to carry an optimal input, and $\Ic(V) \geq -k$, since $\Ic(V) = -H(R|B^n) \geq -H(R) = -k$. The central object of this paper is a set of error patterns on which some decoder does uniformly well.

\begin{definition}[Core]
\label{def:core}
Let $\theta \in [0,1)$. A \emph{$\theta$-core} for the encoder $V$ is a pair $(A, \widetilde\cD)$ consisting of a subset $A \subseteq \FF_2^{2n}$ and a decoder $\widetilde\cD$ such that $f_e(V, \widetilde\cD) \geq 1 - \theta$ for every $e \in A$; we call $\mu(A)$ its \emph{mass}.
\end{definition}

Proposition~\ref{prop:fidelity} below shows that a stabilizer encoder admits a $0$-core whose mass is exactly its optimal fidelity, and Theorem~\ref{thm:cores} that above the $n$-letter coherent information no encoder whatsoever admits a near-deterministic core of non-negligible mass. These two statements are the two halves of the argument.

\section{The optimal fidelity of a stabilizer code}
\label{sec:fidelity}

Under $\mu$, the pair $(\sigma^{\mathrm{syn}}(e), \Lambda(e))$ is a random variable with values in $\FF_2^{n-k}\times L$; write $p_s$ for the law of the syndrome and $q(\ell|s)$ for the conditional law of the logical class. Fix for every $s$ a maximizer
\begin{align}
  \hat\ell(s) \in \operatorname*{arg\,max}_{\ell \in L}\, q(\ell | s),
  \qquad
  A_\ML := \bigl\{ e \in \FF_2^{2n} : \Lambda(e) = \hat\ell\bigl(\sigma^{\mathrm{syn}}(e)\bigr) \bigr\} .
  \label{eq:AML}
\end{align}

\begin{proposition}[Exact optimal fidelity]
\label{prop:fidelity}
For every $(n,k)$ stabilizer coding scheme $(S, V, \cD)$,
\begin{align}
  F \;\leq\; \sum_{s} p_s \max_{\ell \in L} q(\ell|s) \;=\; \mu\bigl(A_\ML\bigr) \;=:\; F^\star(S) ,
\end{align}
with equality for the syndrome--maximum-likelihood decoder $\cD_\ML$ described in the proof, which moreover attains conditional fidelity exactly $1$ on every error $e \in A_\ML$. In particular the optimal fidelity depends only on $(S, \mu)$ and not on the choice of encoding isometry $V$.
\end{proposition}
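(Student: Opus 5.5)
The plan is to decompose the channel output by sector and reduce any decoder to a classical guessing problem on the logical class. For $e\in\FF_2^{2n}$ with syndrome $s=\sigma^{\mathrm{syn}}(e)$, write $\sigma_e = \pm\,\sigma_{t(s)}\sigma_m$ up to phase, where $m=e-t(s)\in S^\perp$ represents $\Lambda(e)$. On the code space, $\sigma_m$ acts as $U_{\Lambda(e)}$ up to a phase; any representative can be used, since elements of $S$ act as $\pm\Idop$ on $\cC$. Then $\sigma_{t(s)}$ maps $\cC$ unitarily onto $\cH_s$ by Lemma~\ref{lem:sectors}(2). Let $\psi_\ell := (\Idop\otimes U_\ell V)\proj\Phi(\Idop\otimes U_\ell V)^\dagger$ and $W_s := \sigma_{t(s)}|_\cC$. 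The noisy state is then
\begin{align}
 \omega = \sum_s p_s \sum_{\ell} q(\ell|s)\,(\Idop\otimes W_s)\,\psi_\ell\,(\Idop\otimes W_s)^\dagger ,
\end{align}
a direct sum over sectors.

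Upper bound. Let $\cD$ be an arbitrary decoder. I would restrict it to each sector, defining the channel $\cD_s := \cD(W_s(\cdot)W_s^\dagger)$ from $\cB(\cC)$ to $\cB(\mathbb{C}^D)$. This gives $F=\sum_s p_s \sum_\ell q(\ell|s)\,\bra\Phi(\id\otimes\cD_s\circ\,\mathcal U_\ell\circ\cV)(\proj\Phi)\ket\Phi$. It therefore suffices to show that, for every fixed channel $\cE:\cB(\cC)\to\cB(\mathbb{C}^D)$,
\begin{align}
 \sum_{\ell\in L} f_\ell(\cE)\le 1,\qquad f_\ell(\cE):=\bra\Phi(\id\otimes\cE)(\psi_\ell)\ket\Phi .
\end{align}
This normalization identity is the main obstacle. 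Once it holds, each $s$-term is at most $\max_\ell q(\ell|s)$, which gives the bound.

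To prove the identity, I would use Choi/Kraus operators. Write $\cE(\cdot)=\sum_j K_j\cdot K_j^\dagger$ with $K_j:\cC\to\mathbb{C}^D$. By~\eqref{eq:ricochet}, $f_\ell = D^{-2}\sum_j|\tr(K_jU_\ell V)|^2$. The operators $\{D^{-1/2}\,U_\ell V\}_{\ell\in L}$ are $D^2$ maps $\mathbb{C}^D\to\cC$, orthonormal in Hilbert--Schmidt inner product by Lemma~\ref{lem:sectors}(3), since $\tr(V^\dagger U_\ell^\dagger U_{\ell'}V)=\tr_\cC(U_\ell^\dagger U_{\ell'})=\pm D\delta_{\ell\ell'}$. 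Because $\dim\cB(\mathbb{C}^D,\cC)=D^2$, they form an orthonormal basis. Parseval applied to $K_j^\dagger$ then gives $\sum_\ell|\tr(K_jU_\ell V)|^2 = D\,\|K_j\|_2^2$. Summing over $j$ yields $\sum_j\tr K_j^\dagger K_j=\tr\Idop_\cC=D$ by trace preservation. The total is $D^{-2}\cdot D\cdot D=1$. Alternatively, the twirl~\eqref{eq:twirl} gives the same conclusion: $\sum_\ell\psi_\ell = D^2\,\Idop_R/D\otimes\Pi_0/D$, and $\bra\Phi(\id\otimes\cE)(\Idop/D\otimes\Pi_0/D)\ket\Phi = D^{-2}$.

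Achievability. I would define $\cD_\ML$ as follows: measure $\{\Pi_s\}$, and on outcome $s$ apply $(U_{\hat\ell(s)}V)^\dagger W_s^\dagger$, that is, the inverse of $V^\dagger U_{\hat\ell(s)}^\dagger\sigma_{t(s)}$, completed to a channel on the sector. For $e\in A_\ML$, the syndrome measurement is deterministic and the correction returns $\Phi$ up to a global phase, so $f_e=1$. The fidelity is then $\geq\mu(A_\ML)=\sum_s p_s\max_\ell q(\ell|s)$, which gives equality. The identity $\mu(A_\ML)=\sum_s p_s\,q(\hat\ell(s)|s)$ follows directly from the definition~\eqref{eq:AML}. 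The bound is independent of $V$ because $V$ disappeared in the normalization step.
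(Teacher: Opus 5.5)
Your proposal is correct and follows the paper's proof essentially step for step: the same sector decomposition, the same sum rule $\sum_{\ell} f_{s,\ell} = 1$, and the same syndrome--maximum-likelihood decoder. For the sum rule, your twirl alternative is exactly the paper's argument, and your Kraus/Parseval computation is an equivalent check resting on the same fact, that $\{U_\ell\}$ is a complete orthogonal operator basis. One wording slip: on outcome $s$ the decoder applies $V^\dagger U_{\hat\ell(s)}^\dagger \sigma_{t(s)}$ restricted to $\cH_s$ (your $(U_{\hat\ell(s)}V)^\dagger W_s^\dagger$), not its inverse.
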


\begin{proof}
Recall from Section~\ref{sec:stabilizer} that $t$ is the fixed linear right inverse of the syndrome map, so that for each syndrome $s$ the destabilizer $\sigma_{t(s)}$ is a fixed Pauli operator producing exactly that syndrome. Write $\rho := \cV(\proj\Phi) = (\Idop\otimes V)\proj{\Phi}(\Idop \otimes V)^\dagger$, a pure state supported on $\cH_R \otimes \cC$. Decompose each error as $e = t(s) + m$ with $s = \sigma^{\mathrm{syn}}(e)$ and $m := e - t(s) \in S^\perp$, so that $[m] = \Lambda(e) \in L$ is its logical class. By~\eqref{eq:paulirules}, $\sigma_e = \eta\, \sigma_{t(s)} \sigma_m$ for a phase $\eta$, and by Lemma~\ref{lem:sectors}(3) conjugation of $\rho$ by $\Idop \otimes \sigma_m$ equals conjugation by $\Idop \otimes U_{\Lambda(e)}$ (extended by $U_\ell \Pi_0$, which is all that acts on the support of $\rho$). Therefore the channel output is
\begin{align}
  \omega \,:=\, \bigl(\id \otimes \cN^{(n)}\bigr)(\rho)
  \,=\, \sum_{s} p_s \sum_{\ell \in L} q(\ell|s)\;
   (\Idop \otimes \sigma_{t(s)})\, \varrho_\ell\, (\Idop \otimes \sigma_{t(s)}) ,
  \qquad
  \varrho_\ell := (\Idop \otimes U_\ell)\, \rho\, (\Idop \otimes U_\ell)^\dagger .
  \label{eq:outputform}
\end{align}
For an arbitrary decoder $\cD$, linearity gives
\begin{align}
  F \,=\, \sum_s p_s \sum_{\ell} q(\ell|s)\, f_{s,\ell},
  \qquad
  f_{s,\ell} := \bra\Phi \bigl(\id \otimes \cD_s\bigr)(\varrho_\ell) \ket\Phi \;\in\; [0,1],
\end{align}
where $\cD_s(\cdot) := \cD(\sigma_{t(s)} (\cdot)\, \sigma_{t(s)})$ is again a channel. We claim the exact sum rule
\begin{align}
  \sum_{\ell \in L} f_{s,\ell} \,=\, 1 \qquad \text{for every $s$ and every channel } \cD_s .
  \label{eq:sumrule}
\end{align}
Indeed, by the twirl identity~\eqref{eq:twirl} and $\tr_\cC \rho = \pi_R := \Idop_R / D$,
\begin{align}
  \sum_{\ell} \varrho_\ell = D^2 \Bigl(\pi_R \otimes \frac{\Pi_0}{D}\Bigr) = D\, \pi_R \otimes \Pi_0 ,
\end{align}
so that, using~\eqref{eq:ricochet} and trace preservation of $\cD_s$,
\begin{align}
  \sum_\ell f_{s,\ell}
  = D\, \bra\Phi \pi_R \otimes \cD_s(\Pi_0) \ket\Phi
  = D \cdot \frac1D \tr\Bigl( \frac{\Idop}{D}\, \cD_s(\Pi_0) \Bigr)
  = \frac{\tr \Pi_0}{D} = 1 .
\end{align}
Since $f_{s,\ell} \geq 0$, \eqref{eq:sumrule} implies $\sum_\ell q(\ell|s) f_{s,\ell} \leq \max_\ell q(\ell|s)$ for each $s$, which is the claimed upper bound.

For achievability we exhibit the syndrome--maximum-likelihood decoder $\cD_\ML$ and verify that it attains conditional fidelity exactly $1$ on every $e \in A_\ML$. It acts on the channel output in three stages: (i)~measure the sector projectors $\{\Pi_s\}_s$; (ii)~on outcome $s$, apply the destabilizer $\sigma_{t(s)}$ and then a unitary extension of $U_{\hat\ell(s)}^{\dagger}$; and (iii)~apply the recovery channel
\begin{align}
  \mathcal R(\cdot) \;:=\; V^\dagger (\cdot)\, V + \tr\!\bigl[(\Idop - \Pi_0)(\cdot)\bigr]\, \tau_0 ,
\end{align}
with $\tau_0$ an arbitrary fixed state, which inverts the encoding on the code space, $\mathcal R \circ \cV = \id$. Now fix $e \in A_\ML$; it has syndrome $s = \sigma^{\mathrm{syn}}(e)$ and, by the definition~\eqref{eq:AML} of $A_\ML$, logical class $\Lambda(e) = \hat\ell(s)$. Its branch $(\Idop \otimes \sigma_e V)\ket\Phi$ of the output lies entirely in $\cH_R \otimes \cH_s$, since $\sigma_e$ carries the code space $\cC = \cH_0$ into the sector $\cH_s$ by Lemma~\ref{lem:sectors}(2); the measurement in~(i) therefore returns $s$ with certainty and without disturbance. Stages~(ii) then transform the branch as
\begin{align}
  (\Idop \otimes \sigma_e V)\ket\Phi
  \;\xrightarrow{\ \sigma_{t(s)}\ }\;
  (\Idop \otimes U_{\Lambda(e)}\, V)\ket\Phi
  \;\xrightarrow{\ U_{\hat\ell(s)}^\dagger\ }\;
  (\Idop \otimes V)\ket\Phi ,
\end{align}
using $\sigma_{t(s)}\sigma_e = \eta\, \sigma_m$ with $m \in S^\perp$ representing $\Lambda(e)$ and $\sigma_m|_\cC = U_{\Lambda(e)}$ for the first arrow, and $U_{\hat\ell(s)}^\dagger U_{\Lambda(e)} = \Idop_\cC$ (as $\Lambda(e) = \hat\ell(s)$) for the second; phases throughout are immaterial to the fidelity. Stage~(iii) then applies $\mathcal R$ to $(\Idop \otimes V)\proj\Phi(\Idop \otimes V)^\dagger$ and returns $\proj\Phi$, since $\mathcal R \circ \cV = \id$. The conditional fidelity on $e$ is therefore $1$, so $F \geq \sum_{e \in A_\ML} \mu(e) = \mu(A_\ML)$, meeting the upper bound. The event $A_\ML$, the destabilizers $\sigma_{t(s)}$, and the logical corrections $U_{\hat\ell(s)}$ depend only on $(S,\mu)$; the final recovery map $\mathcal R$ depends on $V$. Consequently the optimal fidelity---though not the complete decoder---is independent of the encoding isometry $V$.
\end{proof}

\begin{remark}[Where the full-sector assumption enters]
\label{rem:fullsector}
The sum rule~\eqref{eq:sumrule} uses that the logical unitaries $\{U_\ell\}$ form a complete orthogonal operator basis of $\cB(\cC)$---equivalently, that the code space fills its sector. For a $D'$-dimensional subspace of $\cC$ with $D' < D$, the post-syndrome residual noise is a logical Pauli channel on $\cC$ restricted to an arbitrary subspace, i.e., a smaller instance of the \emph{general} (non-stabilizer) problem, and Proposition~\ref{prop:fidelity} fails as stated. See Section~\ref{sec:limits}.
\end{remark}

\begin{remark}
Proposition~\ref{prop:fidelity} reduces everything that follows to the study of the \emph{event} $A_\ML$ on the product probability space $(\FF_2^{2n}, \mu) = \prod_i (\FF_2^{2}, \mu_i)$: the optimal quantum decoder succeeds \emph{deterministically}, conditionally on the classical error pattern lying in $A_\ML$. This mirrors the classical situation, where maximum a posteriori decoders may be taken deterministic without loss of generality, and is exactly the property that general quantum codes lack.
\end{remark}

\section{Blowing up}
\label{sec:blowup}

For $A \subseteq \FF_2^{2n}$ and $r \geq 0$ let $\Gamma_r A := \{ e : d(e, A) \leq r \}$ denote the Hamming blow-up, with $d(e,A) = \min_{f\in A} \wt(e-f)$. The blowing-up lemma is due to Ahlswede, G\'acs and K\"orner~\cite{agk76} and Marton~\cite{marton86}; the explicit finite-blocklength form we use is not stated verbatim there, but follows in a few lines from the bounded-difference inequality~\cite{mcdiarmid89}, as we now show.

\begin{lemma}[Blowing-up]
\label{lem:blowup}
Let $\mu = \mu_1 \otimes \cdots \otimes \mu_n$ be a product probability measure on $\Omega = \prod_{i=1}^n \Omega_i$ with finite factors, equipped with the Hamming metric. If $\mu(A) \geq c > 0$, then for every $r \geq \sqrt{(n/2) \ln(1/c)}$,
\begin{align}
  \mu\bigl(\Gamma_r A\bigr) \;\geq\; 1 - \exp\!\left( - \frac{2}{n} \Bigl(r - \sqrt{(n/2)\ln(1/c)}\Bigr)^2 \right).
\end{align}
\end{lemma}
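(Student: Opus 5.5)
We apply McDiarmid's bounded-difference inequality to the distance function $\phi(\omega) := d(\omega, A)$ on $\Omega$. Because the Hamming metric counts disagreeing coordinates, changing a single coordinate of $\omega$ changes $d(\omega, A)$ by at most $1$. So $\phi$ has bounded differences $c_i = 1$ for each $i$, and under the product measure $\mu$ the coordinates are independent. The inequality then gives, for every $t \geq 0$,
\begin{align}
  \mu\bigl(\phi \geq \mathbb{E}\phi + t\bigr) \leq e^{-2t^2/n},
  \qquad
  \mu\bigl(\phi \leq \mathbb{E}\phi - t\bigr) \leq e^{-2t^2/n}.
\end{align}

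First we bound the mean $\mathbb{E}\phi$ using the lower tail. On $A$ we have $\phi = 0$, so $\mu(\phi \leq 0) \geq \mu(A) \geq c$. Suppose $\mathbb{E}\phi > 0$ and take $t = \mathbb{E}\phi$ in the lower-tail bound. This gives $c \leq e^{-2(\mathbb{E}\phi)^2/n}$, that is, $\mathbb{E}\phi \leq t_0 := \sqrt{(n/2)\ln(1/c)}$. If $\mathbb{E}\phi = 0$, the bound $\mathbb{E}\phi \leq t_0$ holds trivially.

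Next we apply the upper tail. For $r \geq t_0$, the complement of $\Gamma_r A$ is the event $\{\phi > r\}$. Since $\mathbb{E}\phi \leq t_0$, this event is contained in $\{\phi \geq \mathbb{E}\phi + (r - t_0)\}$. Taking $t = r - t_0 \geq 0$ gives
\begin{align}
  \mu\bigl((\Gamma_r A)^c\bigr) \leq \exp\!\bigl(-2(r-t_0)^2/n\bigr),
\end{align}
which is the claim.

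The only point needing care is the use of the lower tail to bound the mean. One has to check that the event $\{\phi \le \mathbb{E}\phi - t\}$ with $t = \mathbb{E}\phi$ really contains $A$, and the two cases $\mathbb{E}\phi = 0$ and $\mathbb{E}\phi > 0$ handle this. Beyond that, the finiteness of the factors guarantees that all quantities are well defined and measurable, and the Lipschitz property is immediate from the triangle inequality for $d$.
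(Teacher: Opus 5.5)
Your proof is correct and follows the paper's argument essentially verbatim: McDiarmid's inequality for the $1$-Lipschitz function $d(\cdot,A)$, the lower tail at $t=\mathbb{E}\phi$ to get $\mathbb{E}\phi\le\sqrt{(n/2)\ln(1/c)}$, and then the upper tail to bound the complement of $\Gamma_r A$. The differences are cosmetic. You take $t=r-t_0$ and use an event inclusion, whereas the paper takes $t=r-\mathbb{E}\phi$ and then uses $r-\mathbb{E}\phi\ge r-t_0$. Your case split $\mathbb{E}\phi=0$ versus $\mathbb{E}\phi>0$ is unnecessary, since $t=0$ is admissible in the tail bound, but it is harmless.
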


\begin{proof}
Let $f(x) := d(x, A) = \min_{a \in A}\wt(x-a)$ be the Hamming distance to $A$. Changing a single coordinate of $x$ changes $f(x)$ by at most $1$, so $f$ has the bounded-difference property with all constants equal to $1$, and McDiarmid's inequality~\cite{mcdiarmid89} applies: for every $t \geq 0$,
\begin{align}
  \mu\bigl(f \le \mathbb E f - t\bigr) \;\le\; e^{-2t^2/n}, \qquad
  \mu\bigl(f \geq \mathbb E f + t\bigr) \;\leq\; e^{-2t^2/n} .
\end{align}
We first bound the mean. As $\{f = 0\} = A$ has $\mu(A) \geq c$, the lower tail at $t = \mathbb E f$ gives $c \leq \mu(f \le 0) \leq e^{-2(\mathbb E f)^2/n}$, and solving for $\mathbb E f$,
\begin{align}
  \mathbb E f \;\leq\; \sqrt{(n/2)\ln(1/c)} \;=:\; \bar r .
\end{align}
Now $\Gamma_r A = \{ f \leq r \}$, so $1 - \mu(\Gamma_r A) = \mu(f > r)$. For $r \geq \bar r \geq \mathbb E f$ the value $t = r - \mathbb E f \geq 0$ is admissible in the upper tail, which gives
\begin{align}
  1 - \mu(\Gamma_r A) \;=\; \mu(f > r) \;\leq\; \mu\bigl(f \geq \mathbb E f + t\bigr) \;\leq\; e^{-2(r - \mathbb E f)^2/n} \;\leq\; e^{-2(r - \bar r)^2/n} ,
\end{align}
the final step using $r - \mathbb E f \geq r - \bar r \geq 0$. This is the claim.
\end{proof}

We will also need the volume of Hamming balls in the quaternary alphabet $\FF_2^2$: for $\delta \in (0, 3/4]$,
\begin{align}
  \bigl| \{ u \in \FF_2^{2n} : \wt(u) \leq \delta n \} \bigr|
  \;=\; \sum_{j \leq \delta n} \binom{n}{j} 3^j
  \;\leq\; 2^{\,n\, g(\delta)} ,
  \label{eq:volume}
\end{align}
the standard bound $\sum_{j\le \delta n}\binom nj (q-1)^j \le q^{n H_q(\delta)}$ for $\delta \le 1 - 1/q$ with $q = 4$ (see, e.g.,~\cite[Ch.~2]{ck11}), where $q^{H_q(\delta)} = 2^{h(\delta) + \delta\log(q-1)}$.

\section{The flagged channel}
\label{sec:flag}

The three lemmas of this section make no use of the stabilizer structure, and we state them for arbitrary isometric encoders; the generality costs nothing and will be needed in Sections~\ref{sec:main} and~\ref{sec:limits}. Fix an isometry $V: \mathbb C^D \to \cH_2^{\otimes n}$ with $D = 2^k$, a $\theta$-core $(A, \widetilde\cD)$ for it in the sense of Definition~\ref{def:core}, and $\delta \in (0, 3/4]$. Let
\begin{align}
  G_\delta := \{ u \in \FF_2^{2n} : \wt(u) \leq \delta n \} \cup \{\perp\},
  \qquad \log |G_\delta| \leq n\, g(\delta) + 1
\end{align}
by~\eqref{eq:volume}, and fix any function $u^\star : \FF_2^{2n} \to G_\delta$ such that $u^\star(e) \in \operatorname{arg\,min}\{\wt(u) : e - u \in A\}$ whenever $d(e, A) \leq \delta n$, and $u^\star(e) = \perp$ otherwise. Define the \emph{flagged channel}
\begin{align}
  \cM_\delta(\rho) := \sum_{e \in \FF_2^{2n}} \mu(e)\; \sigma_e \rho \sigma_e \otimes \proj{u^\star(e)}_{G} ,
  \label{eq:flagged}
\end{align}
a channel from $n$ qubits to $n$ qubits plus a classical register $G$ of size $|G_\delta|$. Tracing out $G$ recovers $\cN^{(n)}$ exactly. Note that $\cM_\delta$ depends on the code under scrutiny (through the core); this is unproblematic, because the two lemmas below and the converse bound of Lemma~\ref{lem:oneshot} hold for arbitrary channels.

\begin{lemma}[Fidelity boost]
\label{lem:boost}
There is a decoder $\cD'$ for $\cM_\delta$ (with the same encoder $V$) whose entanglement fidelity satisfies
\begin{align}
  F' \;\geq\; (1-\theta)\, \mu\bigl( \Gamma_{\delta n} A \bigr).
\end{align}
\end{lemma}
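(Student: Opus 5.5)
The plan is to build $\cD'$ explicitly: read the classical flag, undo the displacement it records, and then run the core decoder $\widetilde\cD$. On flag value $u \in G_\delta\setminus\{\perp\}$, the decoder $\cD'$ applies the unitary $\sigma_u$ to the $n$ qubits and then $\widetilde\cD$. On flag $\perp$ it outputs an arbitrary fixed state, or simply applies $\widetilde\cD$, since we only need a lower bound. Formally,
\begin{align}
  \cD'(X\otimes\proj{u}_G) := \widetilde\cD(\sigma_u X\sigma_u) \ \ (u\neq\perp), \qquad \cD'(X\otimes\proj{\perp}_G) := \widetilde\cD(X).
\end{align}
We extend this to arbitrary inputs by first measuring $G$ in the computational basis, which makes $\cD'$ a valid channel.

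Next I will decompose the fidelity branch by branch. By linearity and the form~\eqref{eq:flagged}, the entanglement fidelity of $(V,\cD')$ over $\cM_\delta$ is
\begin{align}
  F' = \sum_e \mu(e)\, \bra\Phi(\id\otimes\widetilde\cD)\bigl((\Idop\otimes\sigma_{u^\star(e)}\sigma_e V)\proj\Phi(\cdots)^\dagger\bigr)\ket\Phi ,
\end{align}
with the convention $\sigma_\perp:=\Idop$. Take $e \in \Gamma_{\delta n}A$. Then $u:=u^\star(e)\neq\perp$, and $a:=e-u\in A$ by construction. By~\eqref{eq:paulirules}, $\sigma_u\sigma_e=\eta\,\sigma_{u+e}=\eta\,\sigma_a$ with a global phase $\eta$ (over $\FF_2$, $e-u=e+u$). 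That phase cancels under conjugation, so the summand equals exactly $f_a(V,\widetilde\cD)\geq 1-\theta$ by the core property of Definition~\ref{def:core}. For $e\notin\Gamma_{\delta n}A$ the summand is nonnegative, and I drop it.

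Summing gives $F' \geq (1-\theta)\sum_{e\in\Gamma_{\delta n}A}\mu(e) = (1-\theta)\mu(\Gamma_{\delta n}A)$. One check remains: the set where $u^\star(e)\neq\perp$ is exactly $\{d(e,A)\le\delta n\}=\Gamma_{\delta n}A$, and every minimizing $u$ has weight at most $\delta n$, so it lies in $G_\delta$. I do not expect a real obstacle. The only care needed is that the flag register is classical, so conditioning on it is a legitimate channel operation, and that the Pauli phases are immaterial.
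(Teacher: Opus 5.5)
Your proposal is correct and matches the paper's proof: measure the classical flag, apply $\sigma_u$ to move the error $e$ to $e-u\in A$ (the Pauli phase cancels under conjugation), run $\widetilde\cD$, and drop the nonnegative contributions from $e\notin\Gamma_{\delta n}A$. Applying $\widetilde\cD$ rather than outputting a fixed state on the flag $\perp$ makes no difference, since those branches are discarded anyway.
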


\begin{proof}
Let $\cD'$ measure $G$; on outcome $u \neq\, \perp$ apply $\sigma_u$ and then $\widetilde\cD$; on outcome $\perp$ output an arbitrary fixed state. If $d(e, A) \le \delta n$, the flag is $u = u^\star(e)$ and $\sigma_u \sigma_e = \eta\, \sigma_{e - u}$ with $e - u \in A$, so the corresponding branch is decoded with fidelity $f_{e-u}(V, \widetilde\cD) \geq 1 - \theta$. Hence $F' \geq (1-\theta) \sum_{e:\, u^\star(e)\ne \perp} \mu(e) = (1-\theta)\,\mu(\Gamma_{\delta n} A)$.
\end{proof}

\begin{lemma}[Entropic cost of the flag]
\label{lem:cost}
For every pure state $\phi_{R A^n}$, the state $\omega := (\id_R \otimes \cM_\delta)(\phi)$ satisfies
\begin{align}
  I(R \rangle B^n G)_\omega \;\leq\; I(R\rangle B^n)_{\omega'} + \log |G_\delta| \;\leq\; I(R\rangle B^n)_{\omega'} + n\, g(\delta) + 1 ,
\end{align}
where $\omega' := (\id \otimes \cN^{(n)})(\phi)$. In particular the first term equals $\Ic(V)$ when $\phi = \phi_V$, and is at most $Q^{(n)}$ for every $\phi$.
\end{lemma}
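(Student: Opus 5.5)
The plan is to write $\omega$ explicitly as a classical--quantum state on $G$ and apply the standard bound on how much a classical register can lower a conditional entropy. Group the error patterns by their flag value. Setting $\lambda_u := \mu(\{e : u^\star(e) = u\})$ for $u \in G_\delta$ and
\begin{align}
  \omega^{(u)}_{RB^n} := \frac{1}{\lambda_u}\sum_{e:\,u^\star(e)=u} \mu(e)\,(\Idop\otimes\sigma_e)\phi(\Idop\otimes\sigma_e)
\end{align}
whenever $\lambda_u>0$, we have $\omega = \sum_u \lambda_u\, \omega^{(u)}_{RB^n}\otimes\proj{u}_G$ and $\omega' = \sum_u \lambda_u\,\omega^{(u)}$. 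The claim $I(R\rangle B^nG)_\omega \le I(R\rangle B^n)_{\omega'} + \log|G_\delta|$ is equivalent to $H(R|B^n)_{\omega'} \le H(R|B^nG)_\omega + \log|G_\delta|$.

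The key step is the chain rule. We have $H(R|B^nG) = H(RG|B^n) - H(G|B^n)$, and $H(G|B^n)\le H(G)\le\log|G_\delta|$. Moreover $H(RG|B^n)\ge H(R|B^n)$, because $G$ is classical: $H(RG|B^n) = H(R|B^n) + H(G|RB^n)$, and $H(G|RB^n)\ge 0$ for a register that is classical with respect to $RB^n$. Combining these gives $H(R|B^nG)_\omega \ge H(R|B^n)_{\omega'} - \log|G_\delta|$, since tracing out $G$ from $\omega$ yields $\omega'$. Negating, we obtain the first inequality. The second inequality is the volume estimate $\log|G_\delta|\le n g(\delta)+1$ from~\eqref{eq:volume}.

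For the final assertions, $\omega'$ is the output of $\cN^{(n)}$ on the purification $\phi$. So $I(R\rangle B^n)_{\omega'} = \Ic(\rho_A,\cN^{(n)})$ with $\rho_A = \tr_R\phi$. When $\phi=\phi_V$ this is $\Ic(V)$ by~\eqref{eq:Icode}, and in general it is at most $Q^{(n)}$ by definition. Note that $I(R\rangle B)$ for a purification $\phi$ of $\rho$ equals $\Ic(\rho,\cN^{(n)})$ independently of which purification is chosen, since purifications differ by isometries on $R$.

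The main point of care is the nonnegativity $H(G|RB^n)_\omega\ge 0$, which holds because $G$ is classical. It follows from writing $H(RB^nG) = H(\lambda) + \sum_u\lambda_u H(\omega^{(u)})$ together with concavity of entropy, i.e.\ $H(RB^n)_{\omega'}\le H(\lambda)+\sum_u\lambda_u H(\omega^{(u)})$. Everything else is bookkeeping, and no stabilizer structure is used, consistent with the lemma being stated for arbitrary encoders.
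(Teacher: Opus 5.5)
Your proof is correct and is the paper's argument written in conditional-entropy form. Your identity $H(R|B^nG)=H(R|B^n)+H(G|RB^n)-H(G|B^n)$ is the paper's chain rule $I(R\rangle B^nG)=I(R\rangle B^n)+I(R{:}G|B^n)$, and both bound the extra term the same way, via $H(G|RB^n)\ge 0$ for the classical flag and $H(G|B^n)\le H(G)\le\log|G_\delta|$.

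One terminological nit: the inequality $H(\sum_u\lambda_u\omega^{(u)})\le H(\lambda)+\sum_u\lambda_u H(\omega^{(u)})$ is not concavity, which gives the reverse bound without $H(\lambda)$. It is the standard upper bound on the entropy of a mixture; the paper uses the same loose label. The inequality is true, so nothing breaks.
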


\begin{proof}
First, $\tr_G \omega = \omega'$: the flagged channel acts on each error branch as $\cN^{(n)}$ does, and merely appends the classical record $u^\star(e)$ in the register $G$; discarding an appended register recovers the branch, and hence the average, exactly.

Next, the chain rule
\begin{align}
  I(R\rangle B^n G)_\omega \;=\; I(R \rangle B^n)_{\omega'} + I(R : G \,|\, B^n)_\omega
  \label{eq:flagchain}
\end{align}
holds because, upon expanding the conditional mutual information as $H(RB^n) + H(B^nG) - H(B^n) - H(RB^nG)$, the right-hand side telescopes to $H(B^nG) - H(RB^nG) = I(R\rangle B^nG)$; entropies of marginals not involving $G$ agree between $\omega$ and $\omega'$.

It remains to bound the conditional mutual information by the size of the flag:
\begin{align}
  I(R:G|B^n) \;=\; H(G|B^n) - H(G|R B^n) \;\leq\; H(G|B^n) \;\leq\; H(G) \;\leq\; \log|G_\delta| \,.
  \label{eq:cmibound}
\end{align}
The last two inequalities are conditioning reduces entropy and the dimension bound for the classical register $G$. The first is $H(G | RB^n)_\omega \geq 0$, which is where classicality of the flag enters:\footnote{A \emph{quantum} register of the same dimension would only satisfy $H(G|RB^n) \geq -\log|G_\delta|$, doubling the cost in~\eqref{eq:cmibound}; the classical flag thus saves a factor of two in the exponent $g(\delta)$. Nothing below depends on this.} writing $\omega = \sum_g p_g\, \omega_g^{RB^n} \otimes \proj{g}$, one has $H(G|RB^n)_\omega = H(p) + \sum_g p_g H(\omega_g) - H\bigl(\sum_g p_g \omega_g\bigr) \geq 0$ by concavity of the entropy.
\end{proof}

\begin{lemma}[One-shot converse with error]
\label{lem:oneshot}
Let $\cT$ be any channel from $A^n$ to a system $B'$, let $V: \mathbb C^D \to \cH_{A^n}$ be an isometry, and let $\cD'$ be any decoder. Set $\sigma_{RL} := (\id \otimes \cD' \circ \cT \circ \cV)(\proj\Phi)$ and $f := \bra\Phi \sigma \ket\Phi$. Then, with $\phi := (\Idop \otimes V)\ket\Phi$,
\begin{align}
  \log D \;\leq\; I(R\rangle B')_{(\id\otimes\cT)(\proj\phi)} \;+\; (1-f)\log\bigl(D^2-1\bigr) \;+\; h(f) .
  \label{eq:oneshot}
\end{align}
In particular, if $f \geq 1-\varepsilon$ for some $\varepsilon \leq 1 - D^{-2}$, then
\begin{align}
  \log D \;\leq\; I(R\rangle B')_{(\id\otimes\cT)(\proj\phi)} + 2\varepsilon \log D + 1 .
  \label{eq:oneshoteps}
\end{align}
\end{lemma}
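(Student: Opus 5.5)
The plan is to twirl the decoded state into an isotropic state, bound its coherent information from above by data processing, and bound it from below by an exact entropy computation.

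\emph{Step 1: twirl.} I would twirl the decoded state $\sigma_{RL}$ with the $U\otimes\bar U$ twirl over the unitary group on $\mathbb C^D$, which maps it to the isotropic state
\begin{align}
  \tau := f\,\proj\Phi + (1-f)\,\frac{\Idop-\proj\Phi}{D^2-1}.
\end{align}
The fidelity $f$ is preserved because $\ket\Phi$ is invariant under $U\otimes\bar U$. The twirl can be realised with shared randomness as a mixture of local unitaries, applying $\bar U$ on $R$ and $U$ on $L$. By concavity of the conditional entropy and its invariance under local unitaries, this gives $H(R|L)_\tau\ge H(R|L)_\sigma$, so
\begin{align}
  I(R\rangle L)_\tau \;\le\; I(R\rangle L)_\sigma .
\end{align}

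\emph{Step 2: data processing.} The state $\sigma_{RL}$ is obtained from $(\id\otimes\cT)(\proj\phi)$ by applying the decoder $\cD'$ on $B'$ alone. Coherent information does not increase under channels acting on the output side, so
\begin{align}
  I(R\rangle L)_\sigma \;\le\; I(R\rangle B')_{(\id\otimes\cT)(\proj\phi)} .
\end{align}
Here $\phi=(\Idop\otimes V)\ket\Phi$ and $\cV(\proj\Phi)=\proj\phi$, so the encoder is accounted for exactly.

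\emph{Step 3: exact computation.} The isotropic state has spectrum $f$ once and $(1-f)/(D^2-1)$ with multiplicity $D^2-1$, so
\begin{align}
  H(RL)_\tau = h(f)+(1-f)\log(D^2-1).
\end{align}
Its marginal is $\tau_L=\Idop/D$, hence $H(L)_\tau=\log D$. Therefore
\begin{align}
  I(R\rangle L)_\tau = \log D - h(f) - (1-f)\log(D^2-1).
\end{align}
Chaining Steps 1 and 2 then gives~\eqref{eq:oneshot}.

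\emph{Step 4: the corollary.} The map $f\mapsto (1-f)\log(D^2-1)+h(f)$ is decreasing on $[D^{-2},1]$: its derivative is $\log\frac{1-f}{f}-\log(D^2-1)$, which is $\le0$ exactly when $f\ge D^{-2}$. The hypothesis $\varepsilon\le 1-D^{-2}$ places $1-\varepsilon$ in this interval, so for $f\ge 1-\varepsilon$ we may replace $f$ by $1-\varepsilon$. Using $\log(D^2-1)\le 2\log D$ and $h\le1$ then gives~\eqref{eq:oneshoteps}.

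The only step needing care is Step 1, namely justifying that the twirl cannot raise the coherent information. This reduces to concavity of the conditional entropy plus its invariance under local unitaries, both recorded in Section~\ref{sec:notation}, so I expect no real obstacle.
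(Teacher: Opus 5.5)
Your proposal is correct and matches the paper's proof step for step: the local $\bar U$-on-$R$, $U$-on-$L$ twirl to an isotropic state (justified by concavity and local-unitary invariance of $H(R|L)$), data processing through $\cD'$, the exact spectral computation, and monotonicity of $f\mapsto h(f)+(1-f)\log(D^2-1)$ on $[D^{-2},1]$. The one thing you omit is the degenerate case $D=1$, where $D^2-1=0$ makes the isotropic decomposition undefined; the paper treats it separately, noting that $R$ and $L$ are then trivial and $f=1$.
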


\begin{proof}
For $D = 1$ the reference and logical systems are trivial, $f = 1$, and both bounds reduce to $0 \leq 0$; assume henceforth $D \geq 2$. Let $\cU(\rho) := \int (\bar U \otimes U)\, \rho\, (\bar U \otimes U)^\dagger \,\mathrm{d}U$ be the twirl over Haar-random unitaries $U$ on $\cH_L$, with $\bar U$ the entrywise complex conjugate acting on $\cH_R$. By~\eqref{eq:ricochet}, $(\bar U\otimes U)\ket\Phi = (\bar U U^T \otimes \Idop)\ket\Phi = \ket\Phi$, so $\tilde\sigma := \cU(\sigma)$ has the same overlap $f$ with $\Phi$. Moreover $\tilde\sigma$ commutes with every $\bar U \otimes U$, and by Schur's lemma it is determined by its overlap with $\Phi$: the representation $U \mapsto \bar U \otimes U$ of the unitary group decomposes into the trivial representation, spanned by $\ket\Phi$, and an irreducible complement of dimension $D^2-1$, so any invariant state is a mixture of the two corresponding normalized projections,
\begin{align}
  \tilde\sigma \;=\; f \proj\Phi \;+\; \frac{1-f}{D^2-1}\bigl(\Idop_{RL} - \proj\Phi\bigr) .
\end{align}
Its spectrum gives $H(RL)_{\tilde\sigma} = h(f) + (1-f)\log(D^2-1)$, and its marginal on $L$ is maximally mixed, so $H(L)_{\tilde\sigma} = \log D$ and
\begin{align}
  I(R\rangle L)_{\tilde\sigma} \;=\; \log D - h(f) - (1-f)\log\bigl(D^2-1\bigr) .
\end{align}
The conditional entropy is invariant under local unitaries and concave (Section~\ref{sec:notation}), so
\begin{align}
  H(R|L)_{\tilde\sigma} \;\geq\; \int H(R|L)_{(\bar U\otimes U)\sigma(\bar U\otimes U)^\dagger}\,\mathrm{d}U \;=\; H(R|L)_\sigma \,,
\end{align}
that is, $I(R\rangle L)_\sigma \geq I(R\rangle L)_{\tilde\sigma}$. Data processing on the decoder then yields $I(R \rangle L)_\sigma \leq I(R\rangle B')_{(\id\otimes\cT)(\proj\phi)}$, and combining the displays gives~\eqref{eq:oneshot}. For~\eqref{eq:oneshoteps}, note that the map $f \mapsto h(f) + (1-f)\log(D^2-1)$ has derivative $\log\bigl(\frac{1-f}{f}\bigr) - \log(D^2-1)$, negative exactly for $f > D^{-2}$; so the right-hand side of~\eqref{eq:oneshot} is nonincreasing in $f$ on $[D^{-2},1]$ and we may replace $f$ by $1-\varepsilon$, then use $\log(D^2-1) \leq 2\log D$ and $h \leq 1$.
\end{proof}

\begin{remark}[Continuity of the conditional entropy]
\label{rem:twirl}
The twirl does two things at once: it makes the marginals of $\tilde\sigma$ and $\Phi$ on the conditioning system $L$ agree, and it shrinks the distance to the target, since the trace distance $T(\sigma,\Phi)$ can be as large as $\sqrt{1-f}$ whereas $T(\tilde\sigma,\Phi) = 1-f$ exactly. Both gains are best appreciated against the continuity bounds one would otherwise reach for, which have a long history.

Continuity of the von Neumann entropy in trace distance goes back to Fannes~\cite{fannes73}, with the sharp constant due to Audenaert~\cite{audenaert07}; for the \emph{conditional} entropy the standard tool is the Alicki--Fannes bound~\cite{alickifannes04} in Winter's tight form~\cite{winter16}. Following the sharp classical constant of Alhejji and Smith~\cite{alhejjismith20}, Wilde~\cite{wilde20} conjectured the sharp quantum bound $\epsilon\log(|A|^2-1) + h(\epsilon)$; this remains open in general, but Berta, Lami and Tomamichel~\cite{bertalamitomamichel25} proved it whenever the two states have the same marginal on the conditioning system, and the same bound was reached by different means by Audenaert et al.~\cite{audenaertetal25}.

After the twirl our situation is exactly that equal-marginal case, and moreover the extremal one, saturated by a maximally entangled state paired with an isotropic state---the pair appearing in the proof above. Invoking~\cite{bertalamitomamichel25} therefore returns~\eqref{eq:oneshot} verbatim, and the explicit computation in the proof may be read as a self-contained substitute for it. Passing through the trace distance instead, via~\cite{fvdg99} and~\cite{alickifannes04,winter16}, would give only $\log D \leq I(R\rangle B') + 2\sqrt{1-f}\,\log D + 2$; the linear rather than square-root dependence on the error is what allows Theorem~\ref{thm:cores} to tolerate a core imperfection of order $\gamma$ rather than $\gamma^2$.
\end{remark}

\section{Main results for qubit Pauli channels}
\label{sec:main}

Nothing in Sections~\ref{sec:blowup} and~\ref{sec:flag} used the stabilizer structure: the machinery needs only a core. We therefore state the master inequality in that generality first and specialize afterwards, the specialization being supplied by Proposition~\ref{prop:fidelity}, which hands us a $0$-core whose mass is the optimal fidelity.

\begin{theorem}[Master inequality]
\label{thm:coremaster}
Let $V$ be an arbitrary isometric encoder of $k$ qubits into $n$ uses of a product Pauli channel $\cN^{(n)}$ as in~\eqref{eq:productchannel}, and let $(A, \widetilde\cD)$ be a $\theta$-core for $V$. Then for every $\delta \in (0,3/4]$,
\begin{align}
  k \;\leq\; \Ic(V) + n\, g(\delta) + 2 + 2k\bigl(\theta + 1 - \mu(\Gamma_{\delta n} A)\bigr) ,
  \label{eq:coremaster}
\end{align}
and, whenever $\tau := \sqrt{\ln(1/\mu(A))/(2n)} \leq \delta$,
\begin{align}
  1 - \mu\bigl(\Gamma_{\delta n} A\bigr) \;\leq\; \exp\bigl(-2n(\delta - \tau)^2\bigr) .
  \label{eq:coreblowup}
\end{align}
\end{theorem}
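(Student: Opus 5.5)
The plan is to chain the three lemmas of Section~\ref{sec:flag} and then read off the blow-up estimate from Lemma~\ref{lem:blowup}.

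\textbf{First inequality.} Fix $\delta\in(0,3/4]$ and build the flagged channel $\cM_\delta$ of~\eqref{eq:flagged} from the given $\theta$-core $(A,\widetilde\cD)$. Lemma~\ref{lem:boost} supplies a decoder $\cD'$ for $\cM_\delta$, with the same encoder $V$, whose fidelity satisfies $f\ge(1-\theta)\mu(\Gamma_{\delta n}A)$. Hence $f\ge 1-\varepsilon$ with
\begin{align}
  \varepsilon:=\theta+1-\mu(\Gamma_{\delta n}A),
\end{align}
using $(1-\theta)x\ge 1-\theta-(1-x)$ for $x\in[0,1]$.

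Next apply Lemma~\ref{lem:oneshot} with $\cT=\cM_\delta$, $B'=B^nG$, and $\phi=\phi_V$. This gives
\begin{align}
  k=\log D\le I(R\rangle B^nG)+2\varepsilon k+1.
\end{align}
The lemma requires $\varepsilon\le 1-D^{-2}$. If instead $\varepsilon>1-D^{-2}\ge 1/2$ (for $k\ge1$), then $2k\varepsilon\ge k$. Since $\Ic(V)\ge -k$, we would need $k\le -k+ng(\delta)+2+2k\varepsilon$, and this already holds because $2k\varepsilon\ge k$ and $ng(\delta)+2\ge0$ (it reduces to $k\le 2k\varepsilon+\ldots$ in the worst case; checking $-k+2k\varepsilon\ge 0$ suffices). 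So the bound is trivial in that case. The case $k=0$ is trivial as well, since $\Ic(V)\ge0$ there.

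Lemma~\ref{lem:cost} then bounds
\begin{align}
  I(R\rangle B^nG)\le \Ic(V)+ng(\delta)+1.
\end{align}
Summing the constants $1+1=2$ yields~\eqref{eq:coremaster}.

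\textbf{Second inequality.} Apply Lemma~\ref{lem:blowup} with $c=\mu(A)$ and $r=\delta n$. The threshold $\sqrt{(n/2)\ln(1/c)}$ equals $n\tau$, so the hypothesis $\tau\le\delta$ is exactly $r\ge n\tau$. The exponent is
\begin{align}
  \frac{2}{n}(\delta n-\tau n)^2=2n(\delta-\tau)^2,
\end{align}
which gives~\eqref{eq:coreblowup}. If $\mu(A)=0$ then $\tau=\infty$ and there is nothing to prove.

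\textbf{Main obstacle.} The only delicate step is the applicability range of Lemma~\ref{lem:oneshot}, handled by the case split above. The rest is bookkeeping of constants.
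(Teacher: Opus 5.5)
\textbf{Verdict.} Your chain of lemmas is the paper's: the flagged channel, Lemma~\ref{lem:boost}, the one-shot converse~\eqref{eq:oneshoteps}, Lemma~\ref{lem:cost}, and Lemma~\ref{lem:blowup} with $c=\mu(A)$ and $r=\delta n$. Working directly with the upper bound $\varepsilon=\theta+1-\mu(\Gamma_{\delta n}A)$ instead of the true error $1-F'$ is harmless, since~\eqref{eq:oneshoteps} only requires $f\ge 1-\varepsilon$.

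\textbf{The gap.} The step that fails as written is the degenerate case $\varepsilon>1-D^{-2}$. After using $\Ic(V)\ge -k$, you must show $-k+n\,g(\delta)+2+2k\varepsilon\ge k$. The fact you invoke, $2k\varepsilon\ge k$ (from weakening to $\varepsilon>1/2$), only yields a right-hand side of at least $2$. So your proposed sufficient condition $-k+2k\varepsilon\ge 0$ is not sufficient once $k\ge 3$. For instance, $k=10$ and $\varepsilon=0.6$ satisfy it, yet the resulting lower bound $-k+2+2k\varepsilon=4$ falls short of $k=10$. Such values of $\varepsilon$ cannot actually occur in this case, but your argument discarded exactly the information that rules them out.

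\textbf{The fix.} Use the full strength of $\varepsilon>1-D^{-2}=1-4^{-k}$. Then $2k\varepsilon>2k-2k\,4^{-k}\ge 2k-1$, so the right-hand side of~\eqref{eq:coremaster} exceeds $-k+2+(2k-1)=k+1>k$. This is how the paper closes that case, and with this correction your proof is complete.
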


\begin{proof}
Instantiate the flagged channel of Section~\ref{sec:flag} with the core $(A, \widetilde\cD)$ and the given $\delta$, obtaining $\cM_\delta$ together with a classical flag register $G$ of size $\log|G_\delta| \leq n\,g(\delta) + 1$.

By Lemma~\ref{lem:boost} there is a decoder $\cD'$ for $\cM_\delta$, using the \emph{same} encoder $\cV$, whose entanglement fidelity satisfies $F' \geq (1-\theta)\,\mu(\Gamma_{\delta n} A)$. Its error is accordingly
\begin{align}
  \varepsilon \;:=\; 1 - F' \;\leq\; 1 - (1-\theta)\,\mu(\Gamma_{\delta n} A) \;\leq\; \theta + \bigl(1 - \mu(\Gamma_{\delta n} A)\bigr) ,
  \label{eq:coreeps}
\end{align}
the last inequality using $\mu(\Gamma_{\delta n} A) \leq 1$. The flagged scheme $(\cV, \cD')$ thus operates at fidelity $1-\varepsilon$, in the regime of the one-shot converse.

The assertion is immediate for $k = 0$, when~\eqref{eq:coremaster} reads $0 \leq \Ic(V) + n\,g(\delta) + 2$ and holds since $\Ic(V) \geq -k = 0$ and $g(\delta) \geq 0$; assume henceforth $k \geq 1$.

Assume first that $\varepsilon \leq 1 - D^{-2}$, so that the one-shot converse~\eqref{eq:oneshoteps} applies to the flagged channel $\cT = \cM_\delta$ with encoder $\cV$, decoder $\cD'$, $\log D = k$ and $f = 1-\varepsilon$. Chaining it with the flag-cost estimate of Lemma~\ref{lem:cost} at the input $\phi_V$, and then with~\eqref{eq:coreeps},
\begin{align}
  k
  &\;\leq\; I(R\rangle B^n G)_{(\id\otimes\cM_\delta)(\phi_V)} + 2 k \varepsilon + 1 \nonumber\\
  &\;\leq\; \Ic(V) + n\,g(\delta) + 2 + 2k\varepsilon \nonumber\\
  &\;\leq\; \Ic(V) + n\,g(\delta) + 2 + 2k\bigl(\theta + 1 - \mu(\Gamma_{\delta n} A)\bigr) ,
\end{align}
where the first inequality is~\eqref{eq:oneshoteps}, the second is Lemma~\ref{lem:cost} (bounding the flagged coherent information by $\Ic(V) + \log|G_\delta| \leq \Ic(V) + n\,g(\delta) + 1$), and the third is~\eqref{eq:coreeps}. This is~\eqref{eq:coremaster}.

If instead $\varepsilon > 1 - D^{-2}$, then~\eqref{eq:oneshoteps} does not apply, but~\eqref{eq:coremaster} still holds, and trivially. Indeed, by~\eqref{eq:coreeps} and $2k\,2^{-2k} \leq 1$,
\begin{align}
  2k\bigl(\theta + 1 - \mu(\Gamma_{\delta n} A)\bigr) \;\geq\; 2k\varepsilon \;>\; 2k\bigl(1 - 2^{-2k}\bigr) \;\geq\; 2k - 1 ,
\end{align}
so the right-hand side of~\eqref{eq:coremaster}, bounded below using $\Ic(V) \geq -k$ and $n\,g(\delta) \geq 0$, exceeds $-k + 2 + (2k - 1) = k + 1 > k$.

Finally,~\eqref{eq:coreblowup} is Lemma~\ref{lem:blowup} applied to $A$ with $c = \mu(A)$ and radius $r = \delta n$. Its hypothesis $r \geq \sqrt{(n/2)\ln(1/c)}$ is exactly $\delta \geq \tau$, and since $\sqrt{(n/2)\ln(1/\mu(A))} = n\tau$ its conclusion reads
\begin{align}
  1 - \mu(\Gamma_{\delta n} A) \;\leq\; \exp\!\Bigl(-\tfrac{2}{n}\,(\delta n - n\tau)^2\Bigr) \;=\; \exp\bigl(-2n(\delta-\tau)^2\bigr) ,
\end{align}
which is~\eqref{eq:coreblowup}.
\end{proof}

The rate is thus measured against the coherent information of the code's \emph{own} input state, not against the maximum $Q^{(n)}$ over all inputs; since $\Ic(V) \le Q^{(n)}$ this is the sharper statement, and it is what makes the corollaries below possible. The overshoot rate is accordingly $\gamma := (k - \Ic(V))/n$, which by $k \le n$ and $\Ic(V) \ge -k$ lies in $[0,2]$, inside the range $(0,2]$ on which $\gamma \mapsto g^{-1}(\gamma/2)$ is well defined by~\eqref{eq:gdef}. The hypotheses below are monotone in $\gamma$, so restricting to $\gamma \in (0,1]$ costs nothing. Throughout the rest of the paper we write
\begin{align}
  \delta_\gamma := g^{-1}(\gamma/2) \in (0, 3/4], \qquad E(\gamma) := \frac{\delta_\gamma^2}{2} \,.
  \label{eq:expdef}
\end{align}

\begin{theorem}[No heavy cores above the coherent information]
\label{thm:cores}
In the setting of Theorem~\ref{thm:coremaster}, suppose $k \geq \Ic(V) + \gamma n$ with $\gamma \in (0,1]$, and let $\theta \leq \gamma/16$. Then
\begin{align}
  \mu(A) \;\leq\; e^{-n E(\gamma)} \qquad \text{for all } n \geq n_1(\gamma) := \Bigl\lceil \max\Bigl\{ \frac{16}{\gamma},\; \frac{2}{\delta_\gamma^2}\ln\frac{16}{\gamma} \Bigr\} \Bigr\rceil .
\end{align}
No code whatsoever admits a near-deterministic core of mass $e^{-o(n)}$ once its rate exceeds the coherent information $\Ic(V)$ of its own input state; in particular, the conclusion applies at every rate above the $n$-letter maximum $Q^{(n)}$.
\end{theorem}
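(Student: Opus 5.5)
Argue by contradiction from the master inequality, Theorem~\ref{thm:coremaster}, instantiated at $\delta := \delta_\gamma = g^{-1}(\gamma/2)$, so that $n\,g(\delta) = \gamma n/2$. Suppose $\mu(A) > e^{-nE(\gamma)} = e^{-n\delta^2/2}$. Then $\tau = \sqrt{\ln(1/\mu(A))/(2n)} < \sqrt{\delta^2/4} = \delta/2$, so the blow-up hypothesis $\tau \le \delta$ of~\eqref{eq:coreblowup} holds with room to spare. It gives
\begin{align}
  1 - \mu(\Gamma_{\delta n} A) \;\le\; \exp\bigl(-2n(\delta/2)^2\bigr) \;=\; e^{-n\delta^2/2} .
\end{align}
The choice $E = \delta^2/2$ is exactly what makes $\tau$ land at $\delta/2$.

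Next I would make this tail at most $\gamma/16$. The condition $n \ge (2/\delta_\gamma^2)\ln(16/\gamma)$ in $n_1(\gamma)$ is precisely $e^{-n\delta^2/2} \le \gamma/16$. Together with $\theta \le \gamma/16$ this bounds the error term in~\eqref{eq:coremaster}:
\begin{align}
  2k\bigl(\theta + 1 - \mu(\Gamma_{\delta n}A)\bigr) \;\le\; 2k\cdot\gamma/8 \;=\; k\gamma/4 \;\le\; n\gamma/4 ,
\end{align}
using $k \le n$.

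Substituting everything into~\eqref{eq:coremaster} yields
\begin{align}
  k \;\le\; \Ic(V) + \gamma n/2 + 2 + \gamma n/4 .
\end{align}
Combined with $k \ge \Ic(V) + \gamma n$, this forces $\gamma n/4 \le 2$, that is, $n \le 8/\gamma$. This contradicts $n \ge 16/\gamma$, the other term in $n_1(\gamma)$. Hence $\mu(A) \le e^{-nE(\gamma)}$ for all $n \ge n_1(\gamma)$.

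The final sentence of the theorem then follows immediately. Since $\Ic(V) \le Q^{(n)}$, any rate $k \ge Q^{(n)} + \gamma n$ also satisfies $k \ge \Ic(V) + \gamma n$.

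The only delicate point is the bookkeeping of constants. One must check that $E(\gamma) = \delta_\gamma^2/2$ is compatible with the blow-up hypothesis, which it is because it puts $\tau$ at $\delta/2$. One must also check that the threshold $n_1(\gamma)$ simultaneously absorbs the additive constant $2$ and drives the blow-up tail below $\gamma/16$; the two terms in the maximum defining $n_1(\gamma)$ serve exactly these two roles. The boundary cases need no separate treatment. The case $k = 0$ is vacuous, since $\Ic(V) \ge 0$ would then contradict $k \ge \Ic(V) + \gamma n$. The case $\mu(A) = 0$ is trivial.
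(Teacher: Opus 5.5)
Your proposal is correct and follows the paper's proof essentially verbatim. You argue by contradiction from Theorem~\ref{thm:coremaster} at $\delta=\delta_\gamma$, use $\tau<\delta_\gamma/2$ to get the blow-up tail $\le e^{-n\delta_\gamma^2/2}$, and let the two terms of $n_1(\gamma)$ absorb the tail and the additive constant $2$. The only difference is bookkeeping: you merge $\theta$ and the tail into $\gamma/8$ before multiplying by $2k\le 2n$, whereas the paper bounds them separately by $\gamma n/8$ each and concludes via $7\gamma n/8<\gamma n$.
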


\begin{proof}
Suppose $\mu(A) > e^{-n\delta_\gamma^2/2}$. Then $\tau < \delta_\gamma/2$, so~\eqref{eq:coreblowup} gives $1 - \mu(\Gamma_{\delta_\gamma n} A) \le e^{-n\delta_\gamma^2/2}$, and using $k \le n$, $\theta \le \gamma/16$ and $k - \Ic(V) \geq \gamma n$, inequality~\eqref{eq:coremaster} with $\delta = \delta_\gamma$ yields
\begin{align}
  \gamma n \;\le\; \frac{\gamma n}{2} + 2 + \frac{\gamma n}{8} + 2n\, e^{-n\delta_\gamma^2/2} .
\end{align}
For $n \geq 16/\gamma$ we have $2 \le \gamma n / 8$, and for $n \ge (2/\delta_\gamma^2)\ln(16/\gamma)$ we have $2n\, e^{-n\delta_\gamma^2/2} \le \gamma n/8$; the right-hand side is then at most $7\gamma n/8 < \gamma n$, a contradiction.
\end{proof}

\begin{remark}[Fidelity versus core mass]
\label{rem:coresfid}
For a stabilizer encoder the fidelity of any scheme is at most---and, for the optimal decoder, exactly---the largest mass of a $0$-core: this is Proposition~\ref{prop:fidelity}, together with $F \geq \mu(A)$ for any $0$-core $(A, \widetilde\cD)$. Bounding cores therefore bounds fidelity, which is how everything below follows from Theorem~\ref{thm:coremaster}. For general encoders no such identity holds: Proposition~\ref{prop:nogo} exhibits schemes of constant fidelity all of whose near-deterministic cores have exponentially small mass. Theorem~\ref{thm:cores} accordingly bounds cores rather than fidelity, and the strong converse for all codes would follow from the extraction statement of Problem~\ref{prob:core}.
\end{remark}

\begin{theorem}[Master inequality for stabilizer codes]
\label{thm:master}
Let $\cN^{(n)}$ be a product of qubit Pauli channels as in~\eqref{eq:productchannel} and let $(S, V, \cD)$ be an $(n,k)$ stabilizer coding scheme with optimal fidelity $F^\star = \mu(A_\ML)$ as in Proposition~\ref{prop:fidelity}. Then for every $\delta \in (0, 3/4]$,
\begin{align}
  k \;\leq\; \Ic(V) + n\, g(\delta) + 2 + 2k \bigl( 1 - \mu\bigl(\Gamma_{\delta n} A_\ML\bigr)\bigr)\,,
  \label{eq:master}
\end{align}
and, whenever $\tau := \sqrt{\ln(1/F^\star)/(2n)} \leq \delta$,
\begin{align}
  1 - \mu\bigl(\Gamma_{\delta n} A_\ML\bigr) \;\leq\; \exp\bigl(-2 n\, (\delta - \tau)^2\bigr) .
  \label{eq:masterblowup}
\end{align}
\end{theorem}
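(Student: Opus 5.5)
This is a direct specialization of Theorem~\ref{thm:coremaster}, so the plan is to produce a suitable core for the stabilizer encoder and read off both inequalities. The core is $(A, \widetilde\cD) := (A_\ML, \cD_\ML)$, with $\cD_\ML$ the syndrome--maximum-likelihood decoder constructed in the proof of Proposition~\ref{prop:fidelity}. That proposition states that $\cD_\ML$ attains conditional fidelity exactly $1$ on every $e \in A_\ML$. In the notation of~\eqref{eq:branchfid} this reads $f_e(V, \cD_\ML) = 1$ for all $e \in A_\ML$, so $(A_\ML, \cD_\ML)$ is a $\theta$-core with $\theta = 0$ in the sense of Definition~\ref{def:core}.

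Next I check that Theorem~\ref{thm:coremaster} applies. The encoder $V$ of a stabilizer scheme is an isometry $\mathbb C^{2^k} \to \cH_2^{\otimes n}$, which is the only hypothesis on the encoder there. The channel is the product Pauli channel~\eqref{eq:productchannel}, so the theorem holds. Substituting $\theta = 0$ and $A = A_\ML$ into~\eqref{eq:coremaster} gives~\eqref{eq:master} verbatim. The decoder $\cD$ of the given scheme plays no role, since the right-hand side depends only on $V$ (through $\Ic(V)$) and on $(S,\mu)$ (through $A_\ML$).

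For~\eqref{eq:masterblowup}, Proposition~\ref{prop:fidelity} gives $\mu(A_\ML) = F^\star$. The quantity $\tau$ in the present statement therefore coincides with $\tau = \sqrt{\ln(1/\mu(A))/(2n)}$ in Theorem~\ref{thm:coremaster}, and~\eqref{eq:coreblowup} becomes~\eqref{eq:masterblowup}.

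One degenerate case needs a check. If $F^\star = 0$, then $\tau$ is infinite, the condition $\tau \leq \delta$ fails, and~\eqref{eq:masterblowup} is vacuous; \eqref{eq:master} holds regardless. In fact $F^\star > 0$ always: $\max_\ell q(\ell|s) \ge 1/|L|$ for every syndrome, so $F^\star \geq 2^{-2k}$. This means $\tau$ is always finite, but nothing depends on it.

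I do not expect a real obstacle. The only point to verify carefully is that the conditional fidelity of $\cD_\ML$ in Proposition~\ref{prop:fidelity} is the same quantity as the branch fidelity $f_e$ of~\eqref{eq:branchfid}, including its dependence on $V$. It is: the recovery stage $\mathcal R$ of $\cD_\ML$ is built from $V$, and the branch computed in that proof is exactly $(\Idop\otimes\sigma_e V)\ket\Phi$.
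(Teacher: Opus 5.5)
Your proposal is correct and follows the paper's proof: take $(A_\ML, \cD_\ML)$ from Proposition~\ref{prop:fidelity} as a $0$-core of mass $F^\star$, and apply Theorem~\ref{thm:coremaster} with $\theta = 0$ to obtain both displays. Your extra observation that $F^\star \geq 2^{-2k} > 0$, so that $\tau$ is always finite, is correct; the paper does not need it.
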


\begin{proof}
By Proposition~\ref{prop:fidelity}, the pair $(A_\ML, \cD_\ML)$ is a $0$-core for $V$ of mass $\mu(A_\ML) = F^\star$. Apply Theorem~\ref{thm:coremaster} with $\theta = 0$ and $A = A_\ML$; the two displays are then~\eqref{eq:coremaster} and~\eqref{eq:coreblowup}.
\end{proof}

\begin{theorem}[Exponential decay of fidelity]
\label{thm:decay}
In the setting of Theorem~\ref{thm:master}, suppose $k \geq \Ic(V) + \gamma n$ for some $\gamma \in (0, 1]$, with $\delta_\gamma$ and $E(\gamma)$ as in~\eqref{eq:expdef}. Then the fidelity of the scheme satisfies
\begin{align}
  F \;\leq\; F^\star \;\leq\; e^{-n E(\gamma)} \qquad \text{for all } n \geq n_0(\gamma) := \Bigl\lceil \max\Bigl\{ \frac{8}{\gamma},\; \frac{2}{\delta_\gamma^2} \ln \frac{8}{\gamma} \Bigr\} \Bigr\rceil ,
\end{align}
and $F \leq C(\gamma)\, e^{-nE(\gamma)}$ for all $n$, with $C(\gamma) := e^{\,n_0(\gamma) E(\gamma)}$. The average and worst-case pure-state fidelities obey the same bound: $F_{\mathrm{av}} \le F + 1/D$, and here $1/D = 2^{-k} \le e^{-nE(\gamma)}$---since $k \ge \gamma n/2$ (as $\Ic(V) \ge -k$) and $E(\gamma) = \delta_\gamma^2/2 \le \gamma^2/8 \le \tfrac12\gamma\ln 2$ (as $\delta_\gamma \le \gamma/2$, which follows from $g(\delta) \ge \delta$)---so both are at most $(C(\gamma)+1)\,e^{-nE(\gamma)}$.
\end{theorem}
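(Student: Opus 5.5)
The plan is to argue by contradiction from Theorem~\ref{thm:master}, following the proof of Theorem~\ref{thm:cores} but with $\theta = 0$, which is where the better constants in $n_0(\gamma)$ come from. The first inequality $F \le F^\star$ is Proposition~\ref{prop:fidelity}. For the second, fix $n \ge n_0(\gamma)$ and suppose $F^\star = \mu(A_\ML) > e^{-nE(\gamma)} = e^{-n\delta_\gamma^2/2}$. Then $\tau = \sqrt{\ln(1/F^\star)/(2n)} < \sqrt{\delta_\gamma^2/4} = \delta_\gamma/2$. Apply \eqref{eq:masterblowup} with $\delta = \delta_\gamma$; since $\delta_\gamma - \tau > \delta_\gamma/2$, this gives
\begin{align}
  1 - \mu(\Gamma_{\delta_\gamma n} A_\ML) \le e^{-n\delta_\gamma^2/2}.
\end{align}

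Next insert this into \eqref{eq:master} at $\delta = \delta_\gamma$. Using $g(\delta_\gamma) = \gamma/2$, $k \le n$, and $k - \Ic(V) \ge \gamma n$, we obtain
\begin{align}
  \gamma n \le \tfrac{\gamma n}{2} + 2 + 2n e^{-n\delta_\gamma^2/2}.
\end{align}
For $n \ge 8/\gamma$ we have $2 \le \gamma n/4$. For $n \ge (2/\delta_\gamma^2)\ln(8/\gamma)$ we have $e^{-n\delta_\gamma^2/2} \le \gamma/8$, so $2n e^{-n\delta_\gamma^2/2} \le \gamma n/4$. The right-hand side is then at most $\gamma n$, so we need strict inequality somewhere to close the contradiction. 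The cleanest source is the blow-up step: strict $\tau < \delta_\gamma/2$ gives a strict exponent there, and then the final inequality fails strictly. This bookkeeping is the only delicate point; if it proves awkward, I would weaken the threshold slightly, or note that equality in all bounds simultaneously is impossible because $2 \le \gamma n/4$ is strict for $n > 8/\gamma$.

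For the uniform bound I would split by $n$. For $n \ge n_0$ the bound is already proved and $C(\gamma) \ge 1$. For $n < n_0$, use $F \le 1 = e^{n_0E}e^{-n_0E} \le C(\gamma)e^{-nE}$.

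For the pure-state fidelities, I would cite $F_{\mathrm{av}} = (DF+1)/(D+1) \le F + 1/D$, with the worst-case fidelity below the average. To show $2^{-k} \le e^{-nE}$:
\begin{itemize}
  \item $\gamma n \le k - \Ic(V) \le 2k$ gives $k \ge \gamma n/2$, so $2^{-k} \le e^{-(\gamma/2)n\ln 2}$.
  \item $g(\delta) \ge h(\delta) \ge \delta$ on $[0,1/2]$, and monotonicity of $g$ beyond that, give $\delta_\gamma \le \gamma/2$, so $E(\gamma) \le \gamma^2/8$.
  \item $\gamma^2/8 \le \tfrac12\gamma\ln 2$ for $\gamma \le 1$.
\end{itemize}
Summing the two bounds gives $(C(\gamma)+1)e^{-nE(\gamma)}$.
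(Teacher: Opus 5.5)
Your proposal is correct and follows the paper's proof essentially step for step. It argues by contradiction from $\tau<\delta_\gamma/2$, takes the needed strictness from the blow-up bound via $\delta_\gamma-\tau>\delta_\gamma/2$ (exactly where the paper gets it), substitutes into the master inequality at $\delta=\delta_\gamma$, handles $n<n_0$ through $F^\star\le 1$, and treats the pure-state fidelities via $k\ge\gamma n/2$ and $\delta_\gamma\le\gamma/2$. Your fallback for strictness also works, since $n_0\ge(8/\gamma^2)\ln 8>8/\gamma$. The inequality $g(\delta)\ge\delta$ follows directly from $\delta\log 3\ge\delta$, so no case split is needed.
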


\begin{proof}
Suppose toward a contradiction that $F^\star > e^{-n \delta_\gamma^2/2}$ and $n \geq n_0(\gamma)$. Then $2n\tau^2 = \ln(1/F^\star) < n \delta_\gamma^2/2$, that is, $\tau < \delta_\gamma/2$, so~\eqref{eq:masterblowup} yields
\begin{align}
  1 - \mu\bigl(\Gamma_{\delta_\gamma n} A_\ML\bigr) \;\leq\; e^{-2n(\delta_\gamma - \tau)^2} \;<\; e^{-n\delta_\gamma^2/2} .
\end{align}
Substituting into~\eqref{eq:master} with $\delta = \delta_\gamma$, and using $k \le n$, $g(\delta_\gamma) = \gamma/2$ and $k - \Ic(V) \geq \gamma n$:
\begin{align}
  \gamma n \;<\; \frac{\gamma n}{2} + 2 + 2 n\, e^{-n \delta_\gamma^2/2} .
\end{align}
For $n \geq 8/\gamma$ we have $2 \leq \gamma n/4$, and for $n \geq (2/\delta_\gamma^2)\ln(8/\gamma)$ we have $2n\, e^{-n\delta_\gamma^2/2} \leq \gamma n/4$; the right-hand side above is then at most $\gamma n$, so the strict inequality gives $\gamma n < \gamma n$, a contradiction. Hence $F^\star \leq e^{-n\delta_\gamma^2/2}$ for $n \geq n_0(\gamma)$; for smaller $n$ use $F^\star \leq 1 \leq C(\gamma) e^{-nE(\gamma)}$.
\end{proof}

We now record the consequences announced in the introduction.

\begin{corollary}[Strong converse at zero rate for antidegradable Pauli channels]
\label{cor:antideg}
Let each $\cN_i$ be an antidegradable Pauli channel. Then every $(n,k)$ stabilizer coding scheme with $k \geq \gamma n$, $\gamma \in (0,1]$, has $F \leq C(\gamma) e^{-nE(\gamma)}$, with $C(\gamma)$ and $E(\gamma)$ as in Theorem~\ref{thm:decay}. Consequently the $\varepsilon$-quantum capacity~\eqref{eq:epscap} within the class of stabilizer coding schemes satisfies $Q_\varepsilon = 0$ for every $\varepsilon \in (0,1)$; in the memoryless case $Q_\varepsilon = 0 = Q(\cN)$, so the strong converse property holds here in the strict sense. This applies to the qubit depolarizing channel with error probability $p \in [1/4, 3/4]$.
\end{corollary}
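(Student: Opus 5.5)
The plan is to reduce the corollary to Theorem~\ref{thm:decay}. The only gap to close is that $\Ic(V) \leq 0$ for every isometric encoder when all factors are antidegradable. Then $k \geq \gamma n$ gives $k \geq \Ic(V) + \gamma n$, and Theorem~\ref{thm:decay} yields $F \leq C(\gamma)e^{-nE(\gamma)}$ directly. The capacity statements follow from the definition~\eqref{eq:epscap}.

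First, I will show that the product $\cN^{(n)} = \cN_1\otimes\cdots\otimes\cN_n$ is antidegradable. If $\cN_i = \Theta_i\circ\cN_i^c$, then a complementary channel of the product is $\cN_1^c\otimes\cdots\otimes\cN_n^c$, since the environments tensor together. Hence $\cN^{(n)} = (\Theta_1\otimes\cdots\otimes\Theta_n)\circ(\cN^{(n)})^c$. The choice of Stinespring dilation is immaterial, because complementary channels are unique up to isometries on the environment and those can be absorbed into $\Theta$.

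Second, I will show that antidegradability forces nonpositive coherent information. For the pure input $\phi_V$ with output systems $B^n$ and environment $E$, the state on $RB^nE$ is pure. Applying the antidegrading map $\Theta$ to $E$ produces a state on $RB'$ equal to the state on $RB^n$. Data processing for the mutual information then gives $I(R\!:\!B^n) \leq I(R\!:\!E)$. Purity gives $H(RB^n) = H(E)$ and $H(RE) = H(B^n)$, so this inequality rearranges to $H(B^n) - H(E) \leq H(E) - H(B^n)$. Therefore $\Ic(V) = H(B^n)-H(E) \leq 0$.

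Third, I will derive the capacity statements. Fix $\varepsilon\in(0,1)$ and any $\gamma>0$. Every scheme with $k\geq\gamma n$ has $F \leq C(\gamma)e^{-nE(\gamma)} < 1-\varepsilon$ for all large $n$, so $\varepsilon$-good schemes eventually have $k<\gamma n$. Hence $Q_\varepsilon \leq \gamma$ for every $\gamma>0$, and $Q_\varepsilon = 0$. In the memoryless case $Q(\cN)=0$, since $Q^{(n)}\leq 0$ by the same argument, and $Q^{(n)}\geq0$ as noted in Section~\ref{sec:notation}.

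For the depolarizing channel I will cite the antidegradability on $[1/4,3/4]$ recorded in Section~\ref{sec:pauli}. If $\gamma>1$ were allowed, the hypotheses are monotone, so restricting to $\gamma\in(0,1]$ costs nothing. I expect no real obstacle. The only care needed is the mutual-information rearrangement in the second step and checking that the complementary channel of the product is the product of the complementary channels.
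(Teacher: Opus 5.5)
Your proposal is correct and follows the paper's proof. Antidegradability of the product gives $\Ic(V)\le Q^{(n)}\le 0$, so Theorem~\ref{thm:decay} applies to every scheme with $k\ge\gamma n$, and $Q_\varepsilon=0$ follows by letting $\gamma\downarrow 0$. The only cosmetic difference is that you apply data processing to the mutual information and rearrange using purity, whereas the paper applies it to the coherent information directly and uses the duality $I(R\rangle E^n)=-I(R\rangle B^n)$.
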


\begin{proof}
We first check that $Q^{(n)} = 0$. Complementary channels compose over tensor products, so if $\cN_i = \Theta_i \circ \cN_i^c$ for channels $\Theta_i$, then $\cN^{(n)} = (\Theta_1 \otimes \cdots \otimes \Theta_n) \circ (\cN^{(n)})^c$ and the product is again antidegradable. Let $\phi_{RA^n}$ be a pure input, with dilated output pure on $R B^n E^n$. Antidegradability means $\omega_{RB^n} = (\id_R \otimes \Theta)(\omega_{RE^n})$, so data processing gives $I(R\rangle B^n) \leq I(R\rangle E^n)$, while purity gives the duality $I(R\rangle E^n) = -I(R\rangle B^n)$. Hence $I(R\rangle B^n) \leq 0$ for every input and $Q^{(n)} \leq 0$; pure inputs attain $0$, so $Q^{(n)} = 0$. In particular $\Ic(V) \leq 0$ for every encoder, so Theorem~\ref{thm:decay} applies to any $k \geq \gamma n$ and gives the stated bound.

For the $\varepsilon$-capacity, fix $\varepsilon \in (0,1)$ and $\gamma \in (0,1]$. Since $C(\gamma) e^{-nE(\gamma)} < 1 - \varepsilon$ for all sufficiently large $n$, no $(n,k)$ stabilizer scheme with $k \geq \gamma n$ is $\varepsilon$-good once $n$ is that large, so the supremum in~\eqref{eq:epscap} is below $\gamma n$ and $Q_\varepsilon \leq \gamma$. Letting $\gamma \downarrow 0$ gives $Q_\varepsilon = 0$.
\end{proof}

\begin{corollary}[The strong converse property within the stabilizer class]
\label{cor:iid}
Let $\cN_i = \cN$ for all $i$ (memoryless Pauli channel) and let $Q_{\mathrm{stab}}(\cN)$ be as in~\eqref{eq:qstab}. Then every sequence of $(n,k_n)$ stabilizer coding schemes with $k_n \geq n\bigl(Q_{\mathrm{stab}}(\cN) + \gamma\bigr)$, $\gamma \in (0,1]$, satisfies $F_n \leq C(\gamma)e^{-nE(\gamma)}$, with $C(\gamma)$ and $E(\gamma)$ as in Theorem~\ref{thm:decay}. Consequently the $\varepsilon$-quantum capacity~\eqref{eq:epscap} within the class of stabilizer coding schemes satisfies
\begin{align}
  Q_\varepsilon \;=\; Q_{\mathrm{stab}}(\cN) \qquad \text{for every } \varepsilon \in (0,1) :
\end{align}
the class has the strong converse property, and its $\varepsilon$-capacity is Hamada's conditional capacity. Since $Q_{\mathrm{stab}}(\cN) \leq Q(\cN)$, the quantum capacity is in particular a strong converse rate for the class.
\end{corollary}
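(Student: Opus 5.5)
The plan is to get the upper bound on $Q_\varepsilon$ from Theorem~\ref{thm:decay}, which measures overshoot against the code's own $\Ic(V)$, and the lower bound from Hamada's achievability result. The one link still needed is $\Ic(V) \leq n\,Q_{\mathrm{stab}}(\cN)$ for every $(n,k)$ stabilizer scheme at every finite $n$, not just asymptotically.

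\emph{Step 1 (reducing $\Ic(V)$ to a code-space quantity).} By Definition~\ref{def:code}, $V$ has range $\cC = \cH_0$. Hence $\rho_V = VV^\dagger/D = \Pi_0/\tr\Pi_0$, so $\Ic(V) = \Ic(\Pi_\cC/\tr\Pi_\cC, \cN^{\otimes n})$ and does not depend on the chosen isometry. By the remark after Lemma~\ref{lem:sectors}, full joint eigenspaces with arbitrary eigenvalue patterns are covered, so $\cC \in \mathsf S_n$. Writing $a_n := \max_{\cC\in\mathsf S_n}\Ic(\Pi_\cC/\tr\Pi_\cC,\cN^{\otimes n})$, we therefore get $\Ic(V) \leq a_n$.

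\emph{Step 2 (superadditivity, the main point).} Let $\cC_1 \in \mathsf S_{n}$ and $\cC_2 \in \mathsf S_{m}$ with stabilizers $S_1, S_2$. Then $\cC_1\otimes\cC_2$ is the full joint eigenspace of the isotropic subspace $S_1\oplus S_2 \subseteq \FF_2^{2(n+m)}$, so it lies in $\mathsf S_{n+m}$. Its normalized projector is the product state $\Pi_{\cC_1}/\tr\Pi_{\cC_1}\otimes\Pi_{\cC_2}/\tr\Pi_{\cC_2}$. Since the channel is memoryless, $\cN^{\otimes(n+m)} = \cN^{\otimes n}\otimes\cN^{\otimes m}$, and coherent information is additive on product inputs to product channels: both $H(B)$ and $H(E)$ split. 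Hence $a_{n+m} \geq a_n + a_m$. Fekete's lemma then shows that the limit in~\eqref{eq:qstab} exists and equals $\sup_n a_n/n$. In particular $\Ic(V) \leq a_n \leq n\,Q_{\mathrm{stab}}(\cN)$ for every $n$. This is the step I expect to need the most care: the finite-$n$ bound, not just the limit, is what lets Theorem~\ref{thm:decay} be applied without any slack term.

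\emph{Step 3 (decay and upper bound).} Suppose $k_n \geq n(Q_{\mathrm{stab}}+\gamma)$. Step~2 gives $k_n \geq \Ic(V)+\gamma n$, so Theorem~\ref{thm:decay} yields $F_n \leq F^\star \leq C(\gamma)e^{-nE(\gamma)}$. For fixed $\varepsilon<1$ this drops below $1-\varepsilon$ for large $n$. Hence no scheme with $k\geq n(Q_{\mathrm{stab}}+\gamma)$ is $\varepsilon$-good eventually, so $Q_\varepsilon\leq Q_{\mathrm{stab}}+\gamma$. Letting $\gamma\downarrow0$ gives $Q_\varepsilon\leq Q_{\mathrm{stab}}$.

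\emph{Step 4 (lower bound).} By~\cite[Cor.~1]{hamada05}, concatenated stabilizer codes achieve any rate below $Q_{\mathrm{stab}}(\cN)$ with fidelity tending to one. These are full-sector stabilizer codes in the channel's Pauli basis, so they belong to our class. If Hamada's criterion is average or minimum fidelity rather than entanglement fidelity, I would convert it using the standard relations recalled after Definition~\ref{def:code}; in any case Proposition~\ref{prop:fidelity} says the ML decoder is optimal for entanglement fidelity. Thus for every $\varepsilon\in(0,1)$, schemes of rate $Q_{\mathrm{stab}}-\gamma$ are $\varepsilon$-good for large $n$, giving $Q_\varepsilon\geq Q_{\mathrm{stab}}$.

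Finally, $Q_{\mathrm{stab}}\leq Q(\cN)$ follows from $a_n\leq Q^{(n)}$ and~\eqref{eq:lsd}. This makes $Q(\cN)$ a strong converse rate for the class.
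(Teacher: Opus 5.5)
Your proposal is correct and follows the paper's proof essentially step for step. You bound $\Ic(V)$ by the $n$-letter stabilizer coherent information, get $\Ic(V)\le n\,Q_{\mathrm{stab}}(\cN)$ at every finite $n$ from superadditivity and Fekete's lemma, invoke Theorem~\ref{thm:decay} for the upper bound on $Q_\varepsilon$, and use Hamada's achievability for the lower bound; the only difference is that you verify superadditivity directly (via the stabilizer $S_1\oplus S_2$ and additivity of coherent information on product inputs), where the paper cites \cite[Eq.~(58)]{hamada05}.
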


\begin{proof}
Write $Q^{(n)}_{\mathrm{stab}} := \max_{\cC \in \mathsf{S}_n} \Ic(\Pi_\cC/\tr\Pi_\cC, \cN^{\otimes n})$, so that $\Ic(V) \leq Q^{(n)}_{\mathrm{stab}}$ for every $(n,k)$ stabilizer scheme, by~\eqref{eq:Icode} and Proposition~\ref{prop:fidelity}. Pasting two stabilizer codes side by side shows $Q_{\mathrm{stab}}^{(n+m)} \geq Q^{(n)}_{\mathrm{stab}} + Q^{(m)}_{\mathrm{stab}}$~\cite[Eq.~(58)]{hamada05}, so Fekete's lemma gives $Q^{(n)}_{\mathrm{stab}} \leq n\, Q_{\mathrm{stab}}(\cN)$ for every $n$. A scheme with $k_n \geq n(Q_{\mathrm{stab}} + \gamma)$ therefore has $k_n \geq \Ic(V) + \gamma n$, and Theorem~\ref{thm:decay} applies.

For the $\varepsilon$-capacity, fix $\varepsilon \in (0,1)$. Given $\gamma \in (0,1]$, the displayed bound falls below $1-\varepsilon$ for large $n$, so no $\varepsilon$-good stabilizer scheme has rate $Q_{\mathrm{stab}} + \gamma$ or more once $n$ is large; hence $Q_\varepsilon \leq Q_{\mathrm{stab}} + \gamma$, and $\gamma \downarrow 0$ gives $Q_\varepsilon \leq Q_{\mathrm{stab}}$. The reverse inequality is Hamada's achievability~\cite{hamada05}: concatenated stabilizer codes attain the rate $Q_{\mathrm{stab}}(\cN)$ with fidelity tending to one, and such schemes are $\varepsilon$-good for every $\varepsilon$ once $n$ is large.
\end{proof}

\begin{remark}[What Corollary~\ref{cor:iid} does and does not settle]
\label{rem:notequal}
Hamada~\cite{hamada05} determined the class's capacity at vanishing error, $\lim_{\varepsilon\to 0} Q_\varepsilon = Q_{\mathrm{stab}}$; Corollary~\ref{cor:iid} adds that tolerating a constant error does not raise it, and that the fidelity in fact collapses exponentially above it. What remains open is whether the restriction to stabilizer codes costs anything at all, that is, whether $Q_{\mathrm{stab}}(\cN) = Q(\cN)$; this is conjectured in~\cite{hamada05}. Should it hold, $Q_\varepsilon = Q(\cN)$ for every $\varepsilon < 1$ and the strong converse for Pauli channels would follow outright---but only for stabilizer codes, so that Problem~\ref{prob:core} would still stand between this and the general statement.
\end{remark}

\begin{corollary}[Shared randomness]
\label{cor:randomized}
Consider a scheme given by an ensemble $\bigl(S^{(j)}, V^{(j)}, \cD^{(j)}\bigr)_j$ of $(n,k)$ stabilizer coding schemes, with $j$ drawn from a distribution $(p_j)_j$ available to both encoder and decoder, and let $A_\ML^{(j)}$ be the maximum-likelihood event~\eqref{eq:AML} of the $j$-th member. Then the average entanglement fidelity $F = \sum_j p_j F_j$ satisfies, for every $\delta \in (0,3/4]$,
\begin{align}
  k \;\leq\; \sum_j p_j \Ic\bigl(V^{(j)}\bigr) + n\, g(\delta) + 2 + 2k \sum_j p_j \Bigl( 1 - \mu\bigl(\Gamma_{\delta n} A_\ML^{(j)}\bigr) \Bigr) ,
  \label{eq:mastermixed}
\end{align}
and, if $k \geq \Ic(V^{(j)}) + \gamma n$ for every $j$ and some $\gamma \in (0,1]$, it satisfies the bound $F \leq C(\gamma)\, e^{-nE(\gamma)}$ of Theorem~\ref{thm:decay}.
\end{corollary}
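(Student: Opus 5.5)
Both assertions should follow by applying the single-code results member by member and then averaging over $j$. Shared randomness only makes the scheme a convex combination of stabilizer schemes. Fidelity is linear in that combination, and the right-hand side of Theorem~\ref{thm:master} is affine in the member-dependent quantities $\Ic(V^{(j)})$ and $\mu(\Gamma_{\delta n} A_\ML^{(j)})$. The parameters $n$, $k$, $\delta$ and the channel $\cN^{(n)}$ are common to all members.

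For~\eqref{eq:mastermixed} I will fix $\delta \in (0,3/4]$. For each $j$, the triple $(S^{(j)}, V^{(j)}, \cD^{(j)})$ is an $(n,k)$ stabilizer coding scheme in the sense of Definition~\ref{def:code}. Theorem~\ref{thm:master} therefore gives
\begin{align}
  k \;\leq\; \Ic(V^{(j)}) + n\,g(\delta) + 2 + 2k\bigl(1 - \mu(\Gamma_{\delta n} A_\ML^{(j)})\bigr)
\end{align}
for every $j$. Multiplying by $p_j$ and summing, with $\sum_j p_j = 1$, yields~\eqref{eq:mastermixed}. Nothing is lost, since the terms $k$, $n\,g(\delta)$ and $2$ do not depend on $j$.

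For the decay bound I will use $F = \sum_j p_j F_j$, which holds because the decoder knows $j$, so the entanglement fidelity is the $p$-average of the member fidelities. Proposition~\ref{prop:fidelity} gives $F_j \le F^\star(S^{(j)})$. The hypothesis $k \geq \Ic(V^{(j)}) + \gamma n$ holds for every $j$, so Theorem~\ref{thm:decay} gives $F^\star(S^{(j)}) \le C(\gamma) e^{-nE(\gamma)}$ for every $n$. The key point is that $C(\gamma)$, $E(\gamma)$ and $n_0(\gamma)$ depend on $\gamma$ alone, not on the code. The bound is therefore uniform in $j$ and survives averaging, giving $F \le C(\gamma)e^{-nE(\gamma)}$. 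The same argument transfers the bound to average and worst-case fidelities.

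I do not expect a real obstacle; the main thing to verify is this uniformity in $j$, which Theorem~\ref{thm:decay} supplies. If one wanted the stronger, averaged hypothesis $k \ge \sum_j p_j \Ic(V^{(j)}) + \gamma n$, I would instead run the joint argument. In that version $J$ is handed to the decoder as a classical register, the flag becomes $u^\star_J(e)$, and Lemmas~\ref{lem:cost} and~\ref{lem:oneshot} are applied conditionally on $J$, using that coherent information conditioned on a classical $J$ is the $p$-average. The difficulty there is that only the averaged blow-up deficit is controlled, which requires a Markov-type argument over $j$. The statement as given does not require this.
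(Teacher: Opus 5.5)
Your proposal is correct and follows the paper's proof: apply Theorem~\ref{thm:master} to each member and average with weights $p_j$ to get~\eqref{eq:mastermixed}, then use $F=\sum_j p_j F_j$ and the bound $F_j \le C(\gamma)e^{-nE(\gamma)}$ from Theorem~\ref{thm:decay}, which is uniform in $j$. Your closing remark about the averaged hypothesis goes beyond what the statement requires, and you rightly set it aside.
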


\begin{proof}
The entanglement fidelity is affine in the coding scheme, so $F = \sum_j p_j F_j$ with $F_j$ the fidelity of the $j$-th member. Each member is an $(n,k)$ stabilizer coding scheme, so Theorem~\ref{thm:master} applies to it and yields~\eqref{eq:master} with $A_\ML^{(j)}$ and $\Ic(V^{(j)})$ in place of $A_\ML$ and $\Ic(V)$; averaging these inequalities with weights $p_j$ gives~\eqref{eq:mastermixed}. Likewise Theorem~\ref{thm:decay} gives $F_j \leq C(\gamma)e^{-nE(\gamma)}$ for every $j$, a bound uniform in $j$, so the average obeys it too.
\end{proof}

Note that the ensemble is not itself a stabilizer coding scheme: there is no single stabilizer group, hence no single maximum-likelihood event and no exact fidelity formula as in Proposition~\ref{prop:fidelity}. This is why the master inequality survives only in the averaged form~\eqref{eq:mastermixed}, one blow-up term per member, whereas the exponential bound, being a uniform constant over the class, survives verbatim. It is the latter that we use.

The exponential bound thus extends to \emph{incoherent} shared-randomness mixtures and, up to a counting factor, to \emph{coherent} superpositions of codes with non-overlapping code spaces, even though such ensembles are not themselves stabilizer coding schemes.

\begin{corollary}[Superpositions of orthogonal stabilizer codes]
\label{cor:superposition}
Let $V = \sum_{i=1}^{B} z_i V_i$, with $\sum_i |z_i|^2 = 1$, where every $V_i$ is an isometric encoder onto the code space of an $(n,k)$ stabilizer code (the stabilizer groups may differ) and the ranges of the $V_i$ are pairwise orthogonal, so that $V$ is an isometry. If $k \geq \Ic(V_i) + \gamma n$ for every $i$ and some $\gamma \in (0,1]$---in particular whenever $k \geq Q^{(n)} + \gamma n$---then for every decoder $\cD$,
\begin{align}
  F(V, \cD) \;\leq\; B\, C(\gamma)\, e^{-n E(\gamma)} ,
\end{align}
with $C(\gamma)$ and $E(\gamma)$ as in Theorem~\ref{thm:decay}. The strong converse therefore extends to coherent superpositions of subexponentially many, $B \leq e^{o(n)}$, mutually orthogonal stabilizer codes.
\end{corollary}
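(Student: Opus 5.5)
The plan is to reduce the superposition to the $B$ individual stabilizer codes by showing that the fidelity of $V$ is at most the sum of the optimal fidelities $F^\star(S_i)$ of its components. Theorem~\ref{thm:decay} bounds each of these by $C(\gamma)e^{-nE(\gamma)}$, since the hypothesis $k\ge \Ic(V_i)+\gamma n$ is assumed for each $i$ separately. The parenthetical case then follows from $\Ic(V_i)\le Q^{(n)}$.

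For the reduction, write the branch of error $e$ as
\begin{align}
(\Idop\otimes\sigma_e V)\ket\Phi=\sum_i z_i\,\ket{\psi_{i,e}}, \qquad \ket{\psi_{i,e}}:=(\Idop\otimes\sigma_e V_i)\ket\Phi .
\end{align}
Fix a decoder $\cD$ with Stinespring isometry $W$. The branch fidelity is
\begin{align}
f_e=\bigl\|(\bra\Phi\otimes\Idop_E)(\Idop\otimes W)\textstyle\sum_i z_i\ket{\psi_{i,e}}\bigr\|^2 ,
\end{align}
and by the triangle inequality followed by Cauchy--Schwarz,
\begin{align}
f_e\le \Bigl(\sum_i |z_i|\, f_{i,e}^{1/2}\Bigr)^2\le \Bigl(\sum_i|z_i|^2\Bigr)\Bigl(\sum_i f_{i,e}\Bigr)=\sum_i f_{i,e},
\end{align}
where $f_{i,e}:=f_e(V_i,\cD)$ is the branch fidelity of the $i$-th code under the same decoder. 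Averaging over $\mu$ gives
\begin{align}
F(V,\cD)\le\sum_i F(V_i,\cD)\le \sum_i F^\star(S_i).
\end{align}
The last step is Proposition~\ref{prop:fidelity}, which applies because each $(S_i,V_i,\cD)$ is a bona fide $(n,k)$ stabilizer coding scheme: $\cD$ is an arbitrary channel there, and $V_i$ is an isometry onto a full sector.

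Combining with Theorem~\ref{thm:decay} yields $F\le B\,C(\gamma)e^{-nE(\gamma)}$. If $B\le e^{o(n)}$, this still decays exponentially, which is the final sentence of the statement.

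The main obstacle is the cross terms between different $i$. A naive decomposition would lose them or require the decoder to separate the codes. The Cauchy--Schwarz step handles them at the price of the factor $B$, and notably does not even use orthogonality of the ranges; orthogonality is needed only to make $V$ an isometry. One should also check that the normalization $\sum_i|z_i|^2=1$ is exactly what the Cauchy--Schwarz step consumes, which it is.
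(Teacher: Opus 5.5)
Your proposal is correct and follows the paper's proof essentially step for step. The triangle inequality plus Cauchy--Schwarz on the branch amplitudes gives $f_e(V,\cD)\le\sum_i f_e(V_i,\cD)$, averaging yields $F(V,\cD)\le\sum_i F(V_i,\cD)$, and Proposition~\ref{prop:fidelity} together with Theorem~\ref{thm:decay} bounds each term; the only difference is cosmetic (a Stinespring isometry of $\cD$ in place of the paper's Kraus operators), and your remark that orthogonality of the ranges is needed only to make $V$ an isometry is accurate.
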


\begin{proof}
Fix a decoder $\cD$ with Kraus operators $\{M_a\}$ and an error pattern $e$, and write $\psi_e := (\Idop\otimes\sigma_e V)\ket\Phi = \sum_i z_i \psi_e^{(i)}$ with $\psi_e^{(i)} := (\Idop\otimes\sigma_e V_i)\ket\Phi$. By the triangle inequality in $\ell^2$ and the Cauchy--Schwarz inequality,
\begin{align}
  f_e(V,\cD) \;=\; \Bigl\| \bigl(\langle\Phi|(\Idop\otimes M_a)\psi_e\rangle\bigr)_a \Bigr\|_2^2
  \;\leq\; \Bigl( \sum_i |z_i| \sqrt{f_e(V_i,\cD)} \Bigr)^{\!2}
  \;\leq\; \sum_i f_e(V_i, \cD) \,.
\end{align}
Averaging over $e$ gives $F(V, \cD) \leq \sum_i F(V_i, \cD)$. Each $(V_i, \cD)$ is a stabilizer coding scheme at rate $k$ with $k \geq \Ic(V_i) + \gamma n$, so $F(V_i, \cD) \leq F_i^\star \leq C(\gamma)\, e^{-nE(\gamma)}$ by Proposition~\ref{prop:fidelity} and Theorem~\ref{thm:decay}.
\end{proof}

\begin{remark}[Robustness]
\label{rem:robust}
Suppose a (possibly non-stabilizer) encoder produces a state $\tau_{RA^n}$ with $T(\tau, \cV(\proj\Phi)) \leq \zeta$, for some stabilizer encoder $\cV$ in the class above. Then the fidelity of the scheme exceeds that of the corresponding stabilizer scheme by at most $\zeta$, since $|\!\bra\Phi (\id\otimes\cD\cN^{(n)})(\tau - \cV(\proj\Phi))\ket\Phi\!| \leq \zeta$ by contractivity. Hence any $\zeta \to 0$ preserves the qualitative strong converse, while the exponential bound is retained with the same exponent, $F \leq (C(\gamma)+1)\,e^{-nE(\gamma)}$, as long as $\zeta \leq e^{-cn}$ with $c \geq E(\gamma)$: codes that close to the stabilizer class inherit the full conclusion.
\end{remark}

\begin{remark}[Interpretation]
The stabilizer constructions historically used to establish achievable rates and superadditivity for Pauli channels lie in the class~\cite{dss98,smithsmolin07}, so in the memoryless case the converse of Theorem~\ref{thm:decay} matches them exactly. It need not cover every recent construction: the symmetric-subspace optimizers of~\cite{agarwaletal26} that currently give the best depolarizing threshold are permutation-invariant states not presented as full stabilizer-sector projections, and so are not known to lie in the class. We also stress that Theorem~\ref{thm:master} makes no memorylessness assumption: arbitrary product Pauli noise, for instance time-varying noise levels, is covered.
\end{remark}

\medskip
We close this section with the formulation promised in the introduction, which decouples the argument from the stabilizer structure entirely.

\begin{remark}[When blowing up beats the trivial bound]
\label{rem:deflation}
It is instructive to run the machinery on an arbitrary scheme $(V, \cD)$ of fidelity $F$, with no sharpness input. For $\beta \in (0,F)$, Markov's inequality applied to $F = \sum_e \mu(e) f_e(V,\cD)$ shows that $A_\beta := \{e : f_e(V,\cD) \geq \beta\}$ has $\mu(A_\beta) \geq (F-\beta)/(1-\beta)$, so $(A_\beta, \cD)$ is a $(1-\beta)$-core of constant mass. Choosing $\delta = n^{-1/4}$ in~\eqref{eq:coremaster} makes both $n\,g(\delta) = O(n^{3/4}\log n)$ and the blow-up error vanish relative to $n$ (here $\tau = O(n^{-1/2})$), and letting $\beta \uparrow F$ yields
\begin{align}
  k \bigl( 1 - 2(1 - F) - o(1) \bigr) \;\leq\; \Ic(V) + o(n) ,
\end{align}
which is precisely the quantitative weak converse obtained by applying Lemma~\ref{lem:oneshot} to $(V,\cD)$ directly, with no blowing up at all. The method gains over this trivial bound exactly when decoding success \emph{concentrates}---when cores exist with $\theta \to 0$ whose mass is not exponentially smaller than the fidelity. Proposition~\ref{prop:fidelity} asserts that stabilizer codes concentrate perfectly.
\end{remark}

\section{The limits of decoder-side sharpening}
\label{sec:limits}
\label{sec:restriction}

How restrictive is the stabilizer assumption, and can it be removed by a better decoder analysis? The exponential converse extends beyond the stabilizer class to shared-randomness mixtures, orthogonal superpositions and exponentially small perturbations (Corollaries~\ref{cor:randomized} and~\ref{cor:superposition}, Remark~\ref{rem:robust}), so by Theorem~\ref{thm:cores} any violating sequence must maintain its fidelity by coherent hedging across error patterns rather than by heavy near-deterministic cores. Classically no such hedging exists---maximum a posteriori decoders may be taken deterministic, and blowing up then applies to \emph{all} codes~\cite{agk76}---but quantumly it can be \emph{forced}, through a trade-off expressing that one decoder output cannot overlap well with many orthogonal targets.

\begin{lemma}[Orthogonal-target trade-off]
\label{lem:rigidity}
Let $\Phi$ be maximally entangled on $\cH_R \otimes \cH_L$ with $\dim\cH_R = D$, let $\psi \in \cH_R \otimes \cH_B$ be a unit vector, and let $A_1, \dots, A_m$ be unitaries on $\cH_R$ that are pairwise orthogonal in the trace inner product, $\tr(A_w^\dagger A_{w'}) = 0$ for $w \neq w'$. Then for every channel $\widetilde\cD$ from $B$ to $L$,
\begin{align}
 \sum_{w=1}^{m} \bra\Phi\bigl(\id\otimes\widetilde\cD\bigr)\Bigl(\proj{(A_w \otimes \Idop_B)\psi}\Bigr)\ket\Phi \;\leq\; 1 .
\end{align}
\end{lemma}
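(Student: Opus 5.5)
The plan is to move everything onto the decoder's side by passing to the Heisenberg picture, and then to use that the $A_w$ are part of an orthogonal operator basis. First, write each summand as an expectation in $\psi$. Let $\widetilde\cD^\dagger$ be the adjoint (unital, completely positive) map of $\widetilde\cD$, and set
\begin{align}
  X := (\id_R \otimes \widetilde\cD^\dagger)(\proj\Phi) \;\geq\; 0 ,
\end{align}
an operator on $\cH_R \otimes \cH_B$. Then the $w$-th term equals $\bra\psi (A_w^\dagger \otimes \Idop_B)\, X\, (A_w \otimes \Idop_B) \ket\psi$. It therefore suffices to prove the operator inequality
\begin{align}
  \sum_{w=1}^m (A_w^\dagger \otimes \Idop_B)\, X\, (A_w \otimes \Idop_B) \;\leq\; \Idop_{RB} .
\end{align}

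Second, complete the set. The operators $B_w := A_w/\sqrt D$ are orthonormal in the Hilbert--Schmidt inner product, since each $A_w$ is unitary, so $\tr(A_w^\dagger A_w) = D$. Extend them to an orthonormal basis $\{B_j\}_{j=1}^{D^2}$ of $\cB(\cH_R)$; the added elements need not be unitary, and nothing requires them to be. Completeness, $\sum_j (B_j)_{ab}\overline{(B_j)_{cd}} = \delta_{ac}\delta_{bd}$, gives for every operator $Y$ on $\cH_R\otimes\cH_B$
\begin{align}
  \sum_j (B_j^\dagger \otimes \Idop)\, Y\, (B_j \otimes \Idop) \;=\; \Idop_R \otimes \tr_R Y .
\end{align}
This is the same completeness that drives Lemma~\ref{lem:sectors}(4). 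Since $X \geq 0$, every extra term $(B_j^\dagger\otimes\Idop) X (B_j\otimes\Idop)$ is positive, and dropping them yields
\begin{align}
  \sum_{w=1}^m (A_w^\dagger \otimes \Idop) X (A_w \otimes \Idop) \;\leq\; D\,\Idop_R \otimes \tr_R X .
\end{align}

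Third, compute the marginal. We have $\tr_R X = \widetilde\cD^\dagger(\tr_R \proj\Phi) = \widetilde\cD^\dagger(\Idop_L/D) = \Idop_B/D$, by unitality of the adjoint of a trace-preserving map. Hence the right-hand side above is $\Idop_{RB}$, and taking the expectation in the unit vector $\psi$ gives the claim.

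The only delicate point is the middle step. One might try to bound each term separately, or to use the ricochet identity~\eqref{eq:ricochet} to move $A_w$ onto $L$. That route fails because the decoder does not commute with it. The obstacle is resolved by working with $X$ and dropping positive completion terms. Everything else is bookkeeping: the order of operators when passing to the adjoint, and checking that $\widetilde\cD^\dagger$ acts only on the $B\leftrightarrow L$ factor, so that the partial trace over $R$ commutes with it.
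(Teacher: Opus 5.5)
Your proof is correct. It looks like a different route from the paper's, but it is the paper's argument dualized to the Heisenberg picture.

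\textbf{The paper's route.} The paper stays in the Schr\"odinger picture and keeps $A_w$ on $R$, where it commutes with $\id\otimes\widetilde\cD$, and absorbs it into the target. With $\sigma := (\id\otimes\widetilde\cD)(\proj\psi)$ and $\ket{\Phi_w} := (A_w^\dagger\otimes\Idop)\ket\Phi$, the $w$-th term is $\langle\Phi_w|\sigma|\Phi_w\rangle$. The ricochet identity~\eqref{eq:ricochet} shows the $\Phi_w$ are orthonormal. Bessel's inequality then gives $\sum_w\langle\Phi_w|\sigma|\Phi_w\rangle\le\tr\sigma=1$.

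\textbf{How your argument maps onto it.} Your operator inequality is exactly the dual statement:
$\sum_w (A_w^\dagger\otimes\Idop)X(A_w\otimes\Idop) = (\id\otimes\widetilde\cD^\dagger)\bigl(\sum_w\proj{\Phi_w}\bigr)\le(\id\otimes\widetilde\cD^\dagger)(\Idop_{RL}) = \Idop_{RB}$.
Completing $\{A_w/\sqrt D\}$ to a full operator basis and then discarding the added positive terms is Parseval's identity followed by truncation. That is, you re-derive Bessel's inequality for the family $\{\Phi_w\}$.

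\textbf{The step you flagged as delicate.} The ``delicate point'' you describe does not actually arise. The ricochet identity is needed only to evaluate the overlaps $\langle\Phi_w|\Phi_{w'}\rangle$, never to move $A_w$ across the decoder. Neither the basis extension nor the adjoint channel is needed.

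\textbf{Trade-off.} Your version states the bound explicitly as a decoder-only operator inequality holding for all $\psi$ at once, and it isolates unitality of $\widetilde\cD^\dagger$ as the form in which trace preservation enters. The paper's version is shorter and more direct.
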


\begin{proof}
Set $\sigma := (\id\otimes\widetilde\cD)(\proj\psi)$ and $\ket{\Phi_w} := (A_w^\dagger\otimes\Idop)\ket\Phi$. Since $A_w$ acts on $R$ alone, it commutes with $\id\otimes\widetilde\cD$, so the $w$-th summand equals $\langle\Phi_w|\sigma|\Phi_w\rangle$. By~\eqref{eq:ricochet}, $\langle\Phi_w|\Phi_{w'}\rangle = \tr(A_w A_{w'}^\dagger)/D = 0$ for $w \neq w'$, and $\|\Phi_w\| = 1$, so the vectors $\Phi_w$ are orthonormal and $\sum_w \langle\Phi_w|\sigma|\Phi_w\rangle \leq \tr \sigma = 1$.
\end{proof}

The lemma combines the orthogonality of unitarily displaced maximally entangled states, familiar from dense coding and quantum authentication~\cite{bcgst02}, with Bessel's inequality; related overlap arguments occur in quantitative approximate error correction~\cite{benyoreshkov10}. New is only the use to which it is put in the proposition below.

\begin{proposition}[Fidelity without sharp cores]
\label{prop:nogo}
Let $\cN$ be the memoryless qubit depolarizing channel with error probability $p \in (0,3/4)$ such that the hashing rate $R_h := 1 - h(p) - p\log 3$ is positive, and fix $0 < R < R_h$, $c \in (0,1)$ and $\theta \in (0, 1-c)$. Then for all sufficiently large $n$ there is an $(n,k)$ coding scheme with isometric encoder and $k = \lceil R(n-1)\rceil$ whose entanglement fidelity satisfies
\begin{align}
 F \;\geq\; c\,\Bigl(1 - \frac{2p}{3}\Bigr)(1 - \varepsilon_n), \qquad \varepsilon_n \to 0 ,
\end{align}
while \emph{every} $\theta$-core $(A, \widetilde\cD)$ for its encoder has mass
\begin{align}
  \mu(A) \;\leq\; \frac1v\, \mu_{\max}^{\,k},
  \qquad v := \frac{\bigl(\sqrt{1-\theta}-\sqrt c\,\bigr)^2}{1-c} > 0, \qquad \mu_{\max} := \max\{1-p,\, p/3\} < 1 .
\end{align}
In particular, for every pair of constants $(c,\theta)$ with $\theta < 1-c$, a fidelity bounded below by the constant $c\,(1-2p/3)$, up to a vanishing correction, coexists with the exponential lightness of all $\theta$-cores.
\end{proposition}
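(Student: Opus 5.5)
The plan is to build the scheme as a coherent superposition of a good stabilizer code and a deliberately fragile, uncoded encoder, with orthogonal ranges separated by one flag qubit. This explains the $n-1$ in $k=\lceil R(n-1)\rceil$ and the factor $1-2p/3$. Lemma~\ref{lem:rigidity} will then make every core light, while the good component carries a constant fraction $c$ of the fidelity.

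\textbf{Construction.} Use qubit $1$ as a flag and qubits $2,\dots,n$ as data.
\begin{itemize}
\item $V_{\mathrm{g}}$ encodes the $k$ logical qubits into the data qubits with a stabilizer code of rate $R<R_h$ and fidelity $1-\varepsilon_n$ under $\cN^{\otimes(n-1)}$. Such codes exist by the hashing bound (random stabilizer codes with syndrome decoding). Then $V_{\mathrm{g}}$ tensors on $\ket0$ for the flag.
\item $V_{\mathrm{b}}$ places the $k$ logical qubits \emph{uncoded} on data qubits $2,\dots,k+1$, sets the remaining data qubits to $\ket0$, and sets the flag to $\ket1$.
\item $V:=\sqrt c\,V_{\mathrm{g}}+\sqrt{1-c}\,V_{\mathrm{b}}$. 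This is an isometry because the ranges are orthogonal through the flag.
\end{itemize}

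\textbf{Fidelity lower bound.} The decoder measures the flag in the $Z$ basis. On outcome $0$ it runs the hashing decoder on the data qubits; on outcome $1$ it outputs anything.
\begin{itemize}
\item A flag error of type $I$ or $Z$ (probability $1-p+p/3=1-2p/3$) leaves the flag unflipped, up to a relative phase that the measurement destroys.
\item The measurement removes all cross terms between the two branches. Hence the good component contributes at least $c\,(1-2p/3)(1-\varepsilon_n)$.
\item The remaining terms are nonnegative, which gives the stated bound on $F$.
\end{itemize}

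\textbf{Core bound, triangle step.} Fix a $\theta$-core $(A,\widetilde\cD)$ and $e\in A$. Exactly as in the proof of Corollary~\ref{cor:superposition}, expand in Kraus operators and use the triangle inequality in $\ell^2$:
\[
\sqrt{f_e(V,\widetilde\cD)}\le\sqrt c\,\sqrt{f_e(V_{\mathrm{g}},\widetilde\cD)}+\sqrt{1-c}\,\sqrt{f_e(V_{\mathrm{b}},\widetilde\cD)}.
\]
Bound the first branch fidelity by $1$ and use $f_e(V,\widetilde\cD)\ge1-\theta$. This yields $f_e(V_{\mathrm{b}},\widetilde\cD)\ge v$, with $v>0$ precisely because $\theta<1-c$.

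\textbf{Core bound, counting step.} Split $e=(e',w)$, where $w\in(\FF_2^2)^k$ is the error on the $k$ uncoded data qubits and $e'$ is the error on the flag and the other data qubits.
\begin{itemize}
\item For fixed $e'$, the uncoded branch is $(\Idop\otimes\sigma_w\sigma_{e'}V_{\mathrm{b}})\ket\Phi$. By~\eqref{eq:ricochet} this equals $(A_w\otimes\Idop)\psi_{e'}$, with $A_w=\pm\sigma_w^T$ acting on $R$ and $\psi_{e'}$ independent of $w$.
\item The $4^k$ Pauli operators $A_w$ are pairwise trace-orthogonal unitaries. Lemma~\ref{lem:rigidity} therefore gives $\sum_w f_{(e',w)}(V_{\mathrm{b}},\widetilde\cD)\le1$.
\item Consequently at most $1/v$ values of $w$ satisfy $(e',w)\in A$.
\item Each such pattern has probability $\mu(e',w)\le\mu(e')\,\mu_{\max}^k$.
\end{itemize}
Summing over $e'$ gives $\mu(A)\le\mu_{\max}^k/v$.

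\textbf{Expected obstacle.} The delicate step is identifying the uncoded branch as $(A_w\otimes\Idop)\psi_{e'}$. The transpose, the signs, and the fact that $e'$ acts on the ancillas and the flag but not on the logical qubits all have to be tracked, so that the state $\sigma$ in Lemma~\ref{lem:rigidity} is truly independent of $w$. A second point to check is that the hashing codes exist at exactly $k=\lceil R(n-1)\rceil$ for all large $n$ with $\varepsilon_n\to0$; this follows from standard random-stabilizer-code arguments since $R<R_h$.
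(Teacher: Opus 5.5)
Your proposal is correct and follows the paper's proof step for step. It uses the same selector-qubit superposition of a hashing stabilizer code with an uncoded junk encoder, the same flag-measurement bound $F \geq c\,(1-2p/3)F_g$, the same $\ell^2$ triangle step giving $f_e(V_{\mathrm b},\widetilde\cD) \geq v$, and then Lemma~\ref{lem:rigidity} followed by counting. The one difference is minor: you condition on the full error $e'$ off the data block, so the $w$-independence of $\psi_{e'}$ follows at once from the tensor-product structure, whereas the paper groups patterns by the junk code's syndrome and invokes Lemma~\ref{lem:sectors} to see that the remaining $Z$-components act only by a sign; both routes give $\mu(A) \leq \mu_{\max}^{k}/v$.
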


\begin{proof}[Proof sketch]
A selector qubit couples, with amplitudes $\sqrt c$ and $\sqrt{1-c}$, a full-sector stabilizer encoder of the same rate to a junk encoder with scrambled logical content; measuring the selector and decoding the good branch gives the fidelity bound. For the core bound, a pattern in a $\theta$-core (with $\theta<1-c$) satisfies $f_e \ge 1-\theta > c$, which forces a nonzero contribution from the junk branch; Lemma~\ref{lem:rigidity} then limits the number of junk-logical classes that can occur within each syndrome to at most $1/v$, and counting shows each contributes an exponentially small share of the measure. See Appendix~\ref{app:nogo}.
\end{proof}

The construction locates the obstruction precisely: a same-rate full-sector stabilizer code is visibly contained in the encoder with constant amplitude, and what fails, over all decoders, is decoder-side access to it. It does not threaten the strong converse---it lives below capacity, and Theorem~\ref{thm:cores} forbids a heavy sharp component from retaining essentially the same rate above capacity---but it shows that any route to the general statement through the present method must extract sharpness on the \emph{encoder} side. This is the corrected form of the open problem.

\begin{problem}[Sharp-code extraction]
\label{prob:core}
Prove or refute: for every $c, \theta \in (0,1)$ there exist $\lambda(c,\theta) > 0$ and $\nu_n = o(n)$ such that whenever an $(n,k)$ coding scheme with isometric encoder over a product Pauli channel has entanglement fidelity at least $c$, there exists an $(n, k')$ coding scheme with isometric encoder, $k' \geq k - \nu_n$, admitting a $\theta$-core of mass at least $\lambda(c,\theta)\, e^{-\nu_n}$.
\end{problem}

An affirmative answer would give the strong converse for the quantum capacity of every memoryless Pauli channel with respect to \emph{all} isometrically encoded codes: extraction would turn a scheme of rate $Q(\cN)+\gamma$ and fidelity at least $c$ into one of rate $Q(\cN)+\gamma/2$ carrying a heavy $\theta$-core, contradicting Theorem~\ref{thm:cores}. The decomposable cases are easy---Corollary~\ref{cor:superposition} handles encoders that split into orthogonal stabilizer blocks---but a general argument is genuinely hard: the two-element case of Lemma~\ref{lem:rigidity} is tight at conditional fidelity $\tfrac12$, so only a many-branch argument can work, and the stability theorem for approximate homomorphisms that this seems to require~\cite{gowershatami17} we could not establish in the form needed. We nonetheless believe Problem~\ref{prob:core} has an affirmative answer.

Finally, genuine quantum blowing-up lemmas do not shortcut this. Osborne and Winter~\cite{osbornewinter09} fatten a \emph{subspace} by its low-weight Pauli displacements, and Talagrand--KKL-type inequalities have reached the quantum hypercube~\cite{rouzewirthzhang24,changli26}, but concentration enters our argument through Lemma~\ref{lem:boost}, whose decoder must act on the certificate by applying $\sigma_{u^\star(e)}$; the subspace join forgets which displacement was used, and recovering it from a superposition of displaced branches meets exactly the hedging above. The barrier for general codes is not the absence of quantum concentration but of a sharp object to concentrate.

\section{Discussion}
\label{sec:discussion}

\subsection{Two-way classical assistance must be excluded}
\label{sec:feedback}

\begin{remark}
\label{rem:feedback}
No bound of the form of Theorem~\ref{thm:decay} can hold in the presence of free two-way classical communication, even for stabilizer codes, and even in the antidegradable regime. For the depolarizing channel with $p$ slightly above $1/4$, the Choi state is isotropic with singlet fraction $1 - p > 1/2$, hence distillable by the recurrence-plus-hashing protocols of~\cite{bbpssw96,bdsw96,devetakwinter05}; via teleportation, the two-way assisted quantum capacity $Q_2$ is therefore strictly positive at rates bounded away from zero, while Corollary~\ref{cor:antideg} pins the unassisted stabilizer $\varepsilon$-capacity to zero for all $\varepsilon < 1$. In the proof, the exclusion of such assistance is visible in Lemma~\ref{lem:cost}: the flag---one sparse piece of receiver-side side information---is charged at its entropy, and it is precisely this accounting that unbounded interaction would circumvent.
\end{remark}

\subsection{Comparison with partial-transposition bounds}
\label{sec:pptcomparison}

\begin{remark}
\label{rem:ppt}
For $p \in [1/4, 1/2)$ the depolarizing channel $\cN_p$ is antidegradable, so Corollary~\ref{cor:antideg} gives a strong converse at rate $0$ for stabilizer codes. By contrast, all general strong converse rates for the unassisted quantum capacity known to us are based on Rains- or PPT-type relaxations~\cite{holevowerner01,tww17,wangfangduan19,bertawilde18}, and these are bounded away from $0$ in this regime, even after regularization. Indeed, the Choi state $\rho_p$ of $\cN_p$ is isotropic with singlet fraction $1-p > 1/2$ and hence has distillable entanglement $E_D(\rho_p) > 0$~\cite{bbpssw96,bdsw96}; since the Rains bound is an upper bound on distillable entanglement~\cite{rains99,rains01}, applying it to $\rho_p^{\otimes n}$ gives $\frac1n R(\rho_p^{\otimes n}) \geq E_D(\rho_p)$ for every $n$, and choosing maximally entangled inputs shows that the regularized Rains \emph{information} of $\cN_p$ is at least $E_D(\rho_p) > 0$. Within the stabilizer class, Theorem~\ref{thm:decay} is therefore strictly stronger than any bound derived from PPT relaxations in this window; outside the class, the two approaches are incomparable and complementary, since~\cite{tww17} applies to all codes.
\end{remark}

\subsection{Outlook}

Several directions suggest themselves. The exponent $E(\gamma) = \frac12 g^{-1}(\gamma/2)^2$ has not been optimized: the factor $2$ in $g^{-1}(\gamma/2)$ stems from crudely absorbing the continuity terms, and a more careful bookkeeping should give $E$ approaching $\frac12 g^{-1}(\gamma)^2$ for small $\gamma$; determining the optimal strong converse exponent for stabilizer codes, in analogy with the classical theory, is open. Beyond product noise, the blowing-up lemma holds for Markov measures with transportation-cost methods~\cite{marton96}, so Theorem~\ref{thm:master} should extend to Pauli noise with memory of Markov type; we have not pursued this. The extension to CWS codes appears to be the cheapest test of whether the boundary of the tractable class is where Section~\ref{sec:restriction} locates it. A different way to obtain the sharpness that Problem~\ref{prob:core} asks for is to let the \emph{channel} supply it: for erasure noise the pattern is handed to the receiver, branch fidelities are defined channel-side, and no-cloning forces the fidelities on complementary erasure patterns to be anti-correlated for \emph{every} code; combining this rigidity with the blowing-up lemma on the (classical, product) pattern measure might upgrade the almost-all-codes strong converse of~\cite{wildewinter14} to all codes, and we consider the erasure channel the most promising target beyond the Pauli class. Finally, Proposition~\ref{prop:nogo} shows that below capacity, constant fidelity can coexist with the absence of heavy cores; whether the same coexistence is possible at rates above the $n$-letter coherent information---where Theorem~\ref{thm:cores} forbids the cores but not, so far, the fidelity---is the strong converse question itself, and small-blocklength numerical searches for such coherently hedging codes above capacity would provide evidence for or against Problem~\ref{prob:core} and the conjecture alike.

\paragraph*{Acknowledgements:}
This manuscript was prepared in close collaboration with the large language model Claude Fable 5.0 and Opus 4.8 (Anthropic) and ChatGPT 5.6 (OpenAI). The mathematical content was developed in dialogue with Claude Fable: the author posed the problems, set the direction of the investigation, and supplied corrections and consistency checks, while the proof ideas and their execution throughout the paper originated with the model, which also drafted and typeset the manuscript and carried out the literature searches. The revisions were done with Opus. Bibliographic entries were checked against the original sources. ChatGPT acted as an adversarial referee and found several small inconsistencies in earlier drafts and made stylistic suggestions that improved the presentation. All definitions, statements, proofs and references have been verified by the author, who takes responsibility for any remaining errors. The author also thanks Andreas Winter for helpful comments on a previous draft. MT is supported by the NRF Investigatorship award (NRF-NRFI10-2024-0006). 

\appendix

\section{Extension to qudits}
\label{sec:qudits}

Let $d$ be a prime and $\cH_d = \mathbb{C}^d$ with computational basis $\{\ket j\}_{j \in \FF_d}$. Set $\omega := e^{2\pi i/d}$ and define the clock and shift operators $Z\ket j = \omega^j \ket j$, $X \ket j = \ket{j + 1}$. For $v = (a,b) \in \FF_d^2$ define the displacement operator $D_v := \tau^{ab} X^a Z^b$, where $\tau := \omega^{(d+1)/2}$ for odd $d$ (a $d$-th root of unity with $\tau^2 = \omega$) and $\tau := i$ for $d = 2$, recovering the convention of Section~\ref{sec:pauli}. On $n$ qudits, $D_v := D_{v_1} \otimes \cdots \otimes D_{v_n}$ for $v \in \FF_d^{2n}$. With the symplectic form $\langle u, v\rangle = \sum_i (a_i b_i' - b_i a_i') \bmod d$ of Section~\ref{sec:pauli} one has, using $ZX = \omega XZ$,
\begin{align}
  D_u D_v = \omega^{-\langle u,v \rangle} D_v D_u, \qquad
  D_u D_v = \eta(u,v)\, D_{u+v} \quad \text{with } |\eta(u,v)| = 1,
  \qquad D_v^\dagger = \eta'\, D_{-v},
\end{align}
and, for odd $d$, $\eta(u,v) = \tau^{-\langle u, v\rangle'}$ for an integer representative $\langle u,v\rangle'$ of the form, so that $D$ restricted to any isotropic subspace is an exact group homomorphism (as $\tau$ has order $d$); for $d = 2$ one uses the ordered products~\eqref{eq:Qdef} instead. A \emph{mixed displacement channel} (mixed-unitary Pauli channel) on one qudit is $\cN_i = \sum_{v\in\FF_d^2} \mu_i(v)\, D_v (\cdot) D_v^\dagger$, and we consider arbitrary products $\cN^{(n)}$ with product error measure $\mu$ on $\FF_d^{2n} = \prod_i \FF_d^2$. Stabilizer codes are defined exactly as in Section~\ref{sec:stabilizer}: an isotropic $\FF_d$-subspace $S \le \FF_d^{2n}$ of dimension $n - k$, sectors given by the $d^{n-k}$ joint eigenspaces (via the projectors $\Pi_s = \prod_j d^{-1}\sum_{m \in \FF_d} \omega^{-m s_j} Q_{g_j}^m$), the code space one full sector of dimension $D = d^k$, an arbitrary isometric encoder onto it, and an arbitrary decoder.

\begin{theorem}[Qudit master inequality and decay]
\label{thm:qudit}
Let $d$ be prime and let $\cN^{(n)}$ be a product of mixed displacement channels on $(\mathbb C^d)^{\otimes n}$. Define $g_d(\delta) := h(\delta) + \delta \log (d^2 - 1)$ for $\delta \in (0, 1 - d^{-2}]$, on which range $g_d$ increases onto $(0, 2\log d]$. Then for every $(n,k)$ stabilizer coding scheme (with $\log D = k \log d$) and every $\delta \in (0, 1 - d^{-2}]$,
\begin{align}
  \log D \;\leq\; \Ic(V) + n\, g_d(\delta) + 2 + 2 \log D\, \bigl(1 - \mu(\Gamma_{\delta n} A_\ML)\bigr) ,
\end{align}
with~\eqref{eq:masterblowup} unchanged. Consequently, if the code overshoots by $\log D - \Ic(V) \geq \gamma n$ with $\gamma \in (0, \log d]$, then $F \le C_d(\gamma)\exp(-n E_d(\gamma))$. Here $\delta_\gamma := g_d^{-1}(\gamma/2)$ and $E_d(\gamma) = \frac12 \delta_\gamma^2$ and $C_d(\gamma) := e^{\,n_0^{(d)}(\gamma) E_d(\gamma)}$ with
\begin{align}
  n_0^{(d)}(\gamma) := \Bigl\lceil \max\Bigl\{ \frac{8}{\gamma},\; \frac{2}{\delta_\gamma^2}\ln\frac{8\log d}{\gamma} \Bigr\} \Bigr\rceil .
\end{align}
Corollaries~\ref{cor:antideg}--\ref{cor:randomized} hold mutatis mutandis. Theorem~\ref{thm:cores} holds with $g_d$ in place of $g$ and $\gamma \in (0,\log d]$, but with the admissible core imperfection rescaled to $\theta \leq \gamma/(16 \log d)$ and the threshold to $n_1^{(d)}(\gamma) := \lceil \max\{16/\gamma,\, (2/\delta_\gamma^2)\ln(16\log d/\gamma)\}\rceil$.
\end{theorem}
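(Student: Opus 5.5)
The proof will retrace the qubit argument step by step and change only the places where the alphabet size $d$ enters. I will do it in three stages: first the stabilizer structure and exact optimal fidelity, then the blowing-up and flag machinery, then the numerics.

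\emph{Stage one: sector structure and Proposition~\ref{prop:fidelity} for prime $d$.} For odd $d$, the phase formula $\eta(u,v)=\tau^{-\langle u,v\rangle'}$ makes $g\mapsto Q_g:=D_g$ an exact homomorphism on the isotropic $S$. For $d=2$ the ordered products~\eqref{eq:Qdef} do this job. The $\Pi_s$ are then the spectral projectors of a faithful representation of the abelian group $S\cong\FF_d^{n-k}$. Conjugation by $D_e$ multiplies $Q_{g_j}$ by $\omega^{\pm\langle e,g_j\rangle}$, so it shifts sectors by the syndrome, and all sectors have equal dimension $d^n/d^{n-k}=d^k$. Nondegeneracy of the descended form on $L=S^\perp/S$, which is $2k$-dimensional over $\FF_d$, gives $\tr_\cC U_\ell=0$ for $\ell\ne0$. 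This is the usual argument: pick $m'$ with $\omega^{\langle m,m'\rangle}\neq1$ and conjugate. It follows that the $d^{2k}=D^2$ logical unitaries are trace-orthogonal and form an operator basis of $\cB(\cC)$, so the twirl identity~\eqref{eq:twirl} holds verbatim. From here the proof of Proposition~\ref{prop:fidelity} goes through word for word; the sum rule~\eqref{eq:sumrule} uses only the twirl and trace preservation. Conjugations now read $D_e(\cdot)D_e^\dagger$ and phases are still irrelevant. This yields the $0$-core $(A_\ML,\cD_\ML)$ with mass $F^\star$.

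\emph{Stage two: blowing-up, flag and one-shot converse.} Lemma~\ref{lem:blowup} is stated for arbitrary finite product spaces, so~\eqref{eq:masterblowup} is unchanged. The ball-volume bound~\eqref{eq:volume} becomes $\sum_{j\le\delta n}\binom nj(d^2-1)^j\le 2^{n g_d(\delta)}$ for $\delta\le 1-d^{-2}$; this is the $q$-ary bound with $q=d^2$. Lemma~\ref{lem:boost} is unchanged after replacing $\sigma_u$ by $D_u^\dagger$, since $D_u^\dagger D_e\propto D_{e-u}$. Lemmas~\ref{lem:cost} and~\ref{lem:oneshot} never used qubits, with $\log D=k\log d$. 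Chaining them as in Theorem~\ref{thm:coremaster} with $\theta=0$ gives the displayed master inequality. This includes the trivial case $\varepsilon>1-D^{-2}$, via $2\log D\cdot D^{-2}\le1$ and $\Ic(V)\ge-\log D$.

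\emph{Stage three: decay and the Theorem~\ref{thm:cores} analogue.} This is the only step needing care. Assume $F^\star>e^{-n\delta_\gamma^2/2}$. Then the right-hand side of the master inequality contains the term $2\log D\,e^{-n\delta_\gamma^2/2}$, and we now only have $\log D\le n\log d$ rather than $k\le n$. So I must absorb $2n\log d\,e^{-n\delta_\gamma^2/2}\le\gamma n/4$, which holds for $n\ge(2/\delta_\gamma^2)\ln(8\log d/\gamma)$. Together with $2\le\gamma n/4$ this gives the contradiction $\gamma n<\gamma n$, hence $n_0^{(d)}$. The condition $\gamma\le\log d$ is harmless, since overshoot at most $2\log d$ and monotonicity allow restriction. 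For the cores version, the term $2\log D\,\theta\le2n\log d\,\theta$ must be at most $\gamma n/8$, which forces $\theta\le\gamma/(16\log d)$; the threshold changes in the same way. The corollaries follow exactly as before, since their proofs only invoke the decay bound, antidegradability and Fekete.

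I expect the main obstacle to be Stage one: verifying for odd $d$ that the displacement phases really give a homomorphism on $S$, and that nontrivial logical operators are traceless so that the twirl covers all of $\cB(\cC)$. Everything downstream is bookkeeping.
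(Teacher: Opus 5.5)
Your proposal is correct and follows the paper's proof essentially step for step. It uses the same sector-structure and twirl verification for prime $d$, the alphabet-independent blowing-up lemma with the $(d^2-1)$-ary ball volume, and the $D_u^\dagger$ flag correction. It also isolates the same single quantitative change: $\log D\le n\log d$ replaces $k\le n$. That change is what produces the $\log d$ inside the logarithms of $n_0^{(d)}$ and $n_1^{(d)}$, and the rescaled imperfection $\theta\le\gamma/(16\log d)$.
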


\begin{proof}
The proof of the qubit case carries over with the following, purely notational, modifications. (i)~\emph{Sector structure:} in Lemma~\ref{lem:sectors}, involutions are replaced by unitaries of order $d$ and signs $(-1)^{(\cdot)}$ by characters $\omega^{(\cdot)}$; the destabilizers again permute the $d^{n-k}$ sectors transitively, so each has dimension $d^k$. In part (3), $\ell \ne 0$ admits $m'$ with $\langle m, m'\rangle = c \neq 0$, and $\tr_\cC U_\ell = \omega^{c} \tr_\cC U_\ell$ forces the trace to vanish; orthogonality reads $\tr_\cC(U_\ell^\dagger U_{\ell'}) = \eta\, D\,\delta_{\ell\ell'}$ with a phase $\eta$. In part (4) the character sum becomes $\sum_{\ell \in L} \omega^{\langle \ell, \ell'\rangle} = d^{2k}\, \delta_{\ell',0}$. (ii)~\emph{Fidelity formula:} Proposition~\ref{prop:fidelity} is verbatim, with $|L| = d^{2k} = D^2$ logical classes. (iii)~\emph{Blowing up:} Lemma~\ref{lem:blowup} is alphabet-independent and unchanged; the flag alphabet per site now has $d^2 - 1$ nonzero symbols, so the ball volume~\eqref{eq:volume} holds with $3$ replaced by $d^2 - 1$ and range $\delta \le 1 - d^{-2}$, giving $\log|G_\delta| \le n g_d(\delta) + 1$. (iv)~\emph{Flag correction:} on flag $u$ the decoder applies $D_{-u} = \eta\, D_u^\dagger$, shifting the error to $e - u \in A_\ML$. (v)~\emph{Accounting:} Lemmas~\ref{lem:cost} and~\ref{lem:oneshot} are dimension-agnostic, with $\log D = k \log d$ and $D^2 = d^{2k}$ logical classes in~\eqref{eq:oneshot}. The assembly in Theorems~\ref{thm:coremaster}, \ref{thm:master} and~\ref{thm:decay} is unchanged in form, using $Q^{(n)} \ge 0$ and $\log D \le n \log d$, so that $\gamma \in (0, \log d]$ and $\gamma/2 \le \log d \le 2 \log d = g_d(1 - d^{-2})$ lies in the range of $g_d$. One quantitative change does occur, and it is the only one. In the qubit proofs the prefactor $2k$ of the error term is bounded by $2n$; here the corresponding prefactor is $2\log D \le 2 n \log d$. Every allocation of the form ``$2k\,x \le \gamma n/8$'' therefore becomes ``$2 n \log d \cdot x \le \gamma n/8$'', which is what rescales the admissible core imperfection from $\gamma/16$ to $\gamma/(16\log d)$ and inserts the factor $\log d$ inside the logarithms in $n_0^{(d)}$ and $n_1^{(d)}$. The exponent $E_d(\gamma)$ is unaffected, since it is fixed by $g_d(\delta_\gamma) = \gamma/2$ alone.
\end{proof}

\begin{remark}[Prime powers]
\label{rem:primepower}
For $d = p^m$ a prime power, the same statement holds for stabilizer codes defined over the finite field $\FF_{d}$ in the standard way~\cite{gottesman97,ashikhminknill01,ketkar06}. The proof uses only the following four structural facts, all of which hold for the $\FF_d$-linear stabilizer formalism: a nice unitary error basis indexed by $\FF_d^{2n}$ with commutation governed by a nondegenerate symplectic (trace-)form; equal-dimensional joint eigenspaces of an isotropic subgroup, permuted transitively by displacements; the logical operators forming a complete orthogonal operator basis on a full sector, with the associated twirl; and submultiplicativity of the weight. It is worth saying why one should work over $\FF_d$ directly, rather than regarding each $\FF_{p^m}$-qudit as a block of $m$ qudits of prime dimension $p$ and appealing to Theorem~\ref{thm:qudit}. Under the latter view the noise acting on the $m$ subcoordinates of a single physical qudit is in general correlated, so the channel is \emph{not} a product over the $nm$ resulting sites and Theorem~\ref{thm:qudit} does not apply verbatim. Concentration is available at the level of the $n$ physical qudits, which is exactly what the $\FF_d$ formulation provides: the error measure is a product over those $n$ sites, with local alphabet $\FF_d^2$. Products of qudits of \emph{different} prime dimensions should likewise be covered---by the Chinese remainder theorem, subgroups of the corresponding product error group factor over the primes, and the blowing-up lemma is indifferent to the alphabet, which may vary from site to site---but we have not carried out the details. What we prove are the qubit case of Sections~\ref{sec:prelim}--\ref{sec:main} and the prime-dimensional case of Theorem~\ref{thm:qudit}; the prime-power statement rests on the structural facts listed above, for which we refer to~\cite{ashikhminknill01,ketkar06}.
\end{remark}

\section{Proof of the sector structure lemma}
\label{app:sectors}

\begin{proof}[Proof of Lemma~\ref{lem:sectors}]
(1) Each $Q_{g_j}$ is a Hermitian involution, so $(\Idop \pm Q_{g_j})/2$ are the orthogonal projectors onto its two eigenspaces, for the eigenvalues $\pm 1$. The $Q_{g_j}$ commute, hence so do all these projectors, and $\Pi_s$ is the orthogonal projector onto the subspace where $Q_{g_j}$ has eigenvalue $(-1)^{s_j}$ simultaneously for every $j$. Two distinct labels $s \ne s'$ disagree in some coordinate $j$, and the corresponding factors of $\Pi_s$ and $\Pi_{s'}$ are then projectors onto orthogonal eigenspaces of $Q_{g_j}$, so $\Pi_s \Pi_{s'} = 0$. Moreover
\begin{align}
  \Idop \;=\; \prod_{j=1}^{n-k}\Bigl( \frac{\Idop + Q_{g_j}}{2} + \frac{\Idop - Q_{g_j}}{2} \Bigr) \;=\; \sum_{s \in \FF_2^{n-k}} \Pi_s ,
\end{align}
by expanding the product and collecting the $2^{n-k}$ terms according to which sign is chosen in each factor. The dimension count uses~\eqref{eq:sectorshift} below: applying it with $e = t(s)$ gives $\sigma_{t(s)} \Pi_0 \sigma_{t(s)} = \Pi_s$, so all the $\Pi_s$ are unitarily conjugate and have equal rank. As there are $2^{n-k}$ of them and they sum to the identity on a $2^n$-dimensional space, each has rank $2^n/2^{n-k} = 2^k$.

(2) By~\eqref{eq:paulirules}, conjugating a Pauli operator by $\sigma_e$ leaves it unchanged if the two commute and flips its sign if they anticommute, so $\sigma_e \sigma_{g_j}\sigma_e = (-1)^{\langle e, g_j\rangle}\sigma_{g_j}$, using $\sigma_e^2 = \Idop$. Since $Q_{g_j} = \pm \sigma_{g_j}$ the same identity holds for $Q_{g_j}$, the sign cancelling on both sides. Conjugation by $\sigma_e$ therefore maps the $j$-th factor of $\Pi_s$ to
\begin{align}
  \frac{\Idop + (-1)^{s_j}\, \sigma_e Q_{g_j} \sigma_e}{2} \;=\; \frac{\Idop + (-1)^{s_j + \langle e, g_j\rangle}\, Q_{g_j}}{2} ,
\end{align}
and multiplying over $j$ gives
\begin{align}
 \sigma_e\, \Pi_s\, \sigma_e = \Pi_{s + \sigma^{\mathrm{syn}}(e)} .
 \label{eq:sectorshift}
\end{align}
So conjugation by $\sigma_e$ permutes the sectors by translation by the syndrome of $e$; it fixes every sector precisely when $\sigma^{\mathrm{syn}}(e) = 0$, that is, when $e \in S^\perp$.

(3) Let $\ell \in L$ and let $m \in S^\perp$ represent it. Then $\sigma^{\mathrm{syn}}(m) = 0$, so $\sigma_m$ preserves $\cC$ by (2); being unitary and mapping $\cC$ onto itself, it restricts to a unitary $U_\ell$ on $\cC$. To see that the restriction does not depend on the representative, note that two representatives of $\ell$ differ by some $g \in S$, and $\sigma_{m+g} = \pm\, \sigma_m Q_g$ by~\eqref{eq:paulirules} and~\eqref{eq:Qdef}. Now $\cC = \cH_0$ is by definition the subspace on which every $Q_{g_j}$ acts as $+1$, and $Q$ is a homomorphism, so $Q_g|_\cC = \Idop_\cC$ for every $g \in S$. Hence the two restrictions agree up to a sign, and the map $\rho \mapsto U_\ell \rho\, U_\ell^\dagger$ --- which is all we ever use --- is unambiguous.

Now suppose $\ell \neq 0$. By the nondegeneracy of the descended form on $L$ established above, there is $m' \in S^\perp$ with $\langle m, m'\rangle = 1$. Since $m' \in S^\perp$, conjugation by $\sigma_{m'}$ fixes $\Pi_0$ by~\eqref{eq:sectorshift}; since $\langle m,m'\rangle = 1$, it flips the sign of $\sigma_m$. Using cyclicity of the trace and $\sigma_{m'}^2 = \Idop$,
\begin{align}
 \tr\bigl(\Pi_0 \sigma_m \Pi_0\bigr) \;=\; \tr\bigl(\sigma_{m'}\, \Pi_0 \sigma_m \Pi_0\, \sigma_{m'}\bigr) \;=\; (-1)^{\langle m', m\rangle} \tr\bigl(\Pi_0 \sigma_m \Pi_0\bigr) \;=\; -\tr\bigl(\Pi_0 \sigma_m \Pi_0\bigr) ,
\end{align}
so this trace vanishes; it equals $\tr_\cC U_\ell$, whence $\tr_\cC U_\ell = 0$. For the orthogonality relation, $U_\ell^\dagger U_{\ell'} = \pm\, U_{\ell+\ell'}$ by~\eqref{eq:paulirules}: if $\ell = \ell'$ this is $\Idop_\cC$, of trace $D$, and otherwise $\ell + \ell' \neq 0$ and the trace vanishes by what we just proved.

(4) By (3) the $|L| = D^2$ operators $\{U_\ell\}_{\ell \in L}$ are pairwise orthogonal in the Hilbert--Schmidt inner product $(M,N) \mapsto \tr_\cC(M^\dagger N)$ on $\cB(\cC)$, hence linearly independent; since $\dim \cB(\cC) = D^2$ as well, they form an orthogonal basis. Write $M \in \cB(\cC)$ as $M = \sum_{\ell'} c_{\ell'} U_{\ell'}$. Conjugation permutes this basis up to signs, $U_\ell U_{\ell'} U_\ell^\dagger = (-1)^{\langle \ell, \ell'\rangle} U_{\ell'}$, where $\langle\cdot,\cdot\rangle$ now denotes the descended form on $L$ and the sign ambiguities of (3) cancel between $U_\ell$ and $U_\ell^\dagger$. Averaging over $\ell$,
\begin{align}
  \frac{1}{D^2}\sum_{\ell \in L} U_\ell M U_\ell^\dagger \;=\; \sum_{\ell'} c_{\ell'} \Bigl( \frac{1}{D^2}\sum_{\ell \in L} (-1)^{\langle \ell, \ell'\rangle} \Bigr) U_{\ell'} \;=\; c_0 U_0 ,
\end{align}
where we used the character sum $\sum_{\ell \in L}(-1)^{\langle \ell, \ell'\rangle} = D^2\, \delta_{\ell',0}$: for $\ell' \neq 0$ nondegeneracy makes $\ell \mapsto \langle \ell,\ell'\rangle$ a nonzero linear functional on $L$, so it vanishes on exactly half of $L$ and equals $1$ on the other half. Taking traces on both sides and using $\tr_\cC U_0 = D$ together with $\tr_\cC U_{\ell'} = 0$ for $\ell' \ne 0$ gives $c_0 = \tr_\cC(M)/D$, so the average equals $D^{-1}\tr_\cC(M)\, \Idop_\cC$. This is~\eqref{eq:twirl} for operators supported on $\cC$; a general $\rho \in \cB(\cH_R\otimes\cC)$ is a sum of products $A \otimes M$, on each of which the twirl acts only in the second factor, so~\eqref{eq:twirl} follows by linearity.
\end{proof}

% =====================================================================
% Bibliography
% All entries verified against original sources (arXiv/publisher pages),
% July 2026.
% =====================================================================
\section{Proof of the no-go proposition}
\label{app:nogo}

This appendix proves Proposition~\ref{prop:nogo}.

\begin{proof}
\emph{Construction.} Label the qubits $\{1,\dots,n\}$, with qubit $1$ a selector, $D_0 := \{2, \dots, k+1\}$ a data block and $P := \{k+2,\dots,n\}$ a padding block. Let $V_g : \mathbb C^{2^k} \to \cH_2^{\otimes(n-1)}$ be a stabilizer encoder on qubits $2,\dots,n$ achieving fidelity $F_g \to 1$ with some decoder $\cD_g$; such codes exist for $R < R_h$ by the hashing bound~\cite{bdsw96,devetakwinter05}. Define isometries $W_g\ket x := \ket 0_1 \otimes V_g \ket x$ and $W_j \ket x := \ket 1_1 \otimes \ket{x}_{D_0} \otimes \ket{0\cdots0}_P$, whose ranges are orthogonal, and the \emph{gambling encoder}
\begin{align}
  W := \sqrt c\, W_g + \sqrt{1-c}\, W_j .
\end{align}

\emph{Fidelity.} Let the decoder measure qubit $1$ in the computational basis; on outcome $0$ it applies $\cD_g$ to qubits $2,\dots,n$, on outcome $1$ it outputs the data block. The selector components of the two branches are orthogonal for every error, so the measurement produces no cross terms. If the error has no $X$-component on the selector, the good branch yields outcome $0$ with its full weight $c$ and is then decoded with its branch fidelity under $(V_g, \cD_g)$; all other contributions are nonnegative. Hence $F \geq c\, \mu_1(\{\Idop,Z\})\, F_g = c\,\bigl(1 - \tfrac{2p}{3}\bigr)F_g$, and $F_g \to 1$.

\emph{Branch splitting.} Let $(A, \widetilde\cD)$ be a $\theta$-core, with $\theta < 1-c$ as in the statement, and let $\{M_a\}$ be Kraus operators of $\widetilde\cD$. For any $e$, write $\psi_e := (\Idop\otimes\sigma_e W)\ket\Phi = \sqrt c\, g_e + \sqrt{1-c}\, j_e$ with the unit vectors $g_e := (\Idop\otimes\sigma_eW_g)\ket\Phi$ and $j_e := (\Idop\otimes\sigma_eW_j)\ket\Phi$. The triangle inequality in $\ell^2$ gives
\begin{align}
  \sqrt{f_e(W,\widetilde\cD)} \,=\, \Bigl\| \bigl(\langle\Phi|(\Idop\otimes M_a)\psi_e\rangle\bigr)_a \Bigr\|_2
  \;\leq\; \sqrt{c\, f^g_e} + \sqrt{(1-c)\, f^j_e}\, ,
\end{align}
where $f^{g}_e$ and $f^j_e$ denote the same expression with $g_e$ and $j_e$ in place of $\psi_e$. For $e \in A$ we have $f_e(W, \widetilde\cD) \ge 1-\theta$ and $f^g_e \le 1$, hence
\begin{align}
  f^j_e \;\geq\; \frac{\bigl(\sqrt{1-\theta} - \sqrt c\,\bigr)^2}{1-c} \;=\; v \;>\; 0 \,,
  \label{eq:junkgood}
\end{align}
where positivity holds because $\theta < 1-c$.

\emph{Rigidity.} The junk isometry $W_j$ is a stabilizer encoder: its code space is the full joint eigenspace of the stabilizer spanned by the $Z$-operators on $\{1\}\cup P$ (with eigenvalue $-1$ on qubit $1$ and $+1$ on $P$), the data block $D_0$ carries the logical algebra, and the syndrome reads off the $X$-components on $\{1\}\cup P$. Group the error patterns by this syndrome: for $x \in \FF_2^{n-k}$ let $F_x$ be the set of patterns whose $X$-components on $\{1\}\cup P$ equal $x$, and let $e_x \in F_x$ be the pattern with these $X$-components and nothing else. Any $e \in F_x$ differs from $e_x$ by an element of the symplectic complement of the junk stabilizer, which by Lemma~\ref{lem:sectors} preserves the junk code space and acts on it, up to sign, as the logical Pauli operator $\sigma_{w(e)}$ given by the $D_0$-component $w(e) \in \FF_2^{2k}$ of $e$; that is, $\sigma_{e-e_x} W_j = \pm W_j\, \sigma_{w(e)}$. Using~\eqref{eq:ricochet} to move $\sigma_{w(e)}$ to the reference system,
\begin{align}
  j_e \,=\, \pm\,(\Idop\otimes\sigma_{e_x} W_j \sigma_{w(e)})\ket\Phi \,=\, \pm\,(\sigma_{w(e)}^T \otimes \Idop)\, j_{e_x} \,,
\end{align}
so $f^j_e$ depends on $e$ only through $x$ and $w(e)$; write $f^j(x,w)$ for its common value. Distinct Pauli operators are orthogonal in the trace inner product, so Lemma~\ref{lem:rigidity}, applied with $\psi = j_{e_x}$ and the family $\{\sigma_w^T : w \in \FF_2^{2k}\}$, gives
\begin{align}
  \sum_{w \in \FF_2^{2k}} f^j(x, w) \;\leq\; 1 \qquad \text{for every } x .
\end{align}
Combined with~\eqref{eq:junkgood}, within each syndrome group at most $1/v$ of the $4^k$ logical classes $w$ can meet the core $A$.

\emph{Counting.} Fix $x$ and one admissible class $w$. The set $\{e \in F_x : w(e) = w\}$ fixes every component of $e$ on $D_0$ and the $X$-components on $\{1\}\cup P$, leaving the $Z$-components on $\{1\}\cup P$ unconstrained; since $\mu$ is a product measure, its mass is at most $\mu_{\max}^{\,k}\, \prod_{i\in\{1\}\cup P} \mu_1\bigl(\{v : v \text{ has } X\text{-component } x_i\}\bigr)$. Summing over the at most $1/v$ admissible classes and then over $x$, whose sum telescopes to $1$,
\begin{align}
 \mu(A) \;\le\; \frac1v\, \mu_{\max}^{\,k} \,.
\end{align}
\end{proof}


\begin{thebibliography}{99}

\bibitem{lloyd97} S. Lloyd, ``Capacity of the noisy quantum channel,'' \href{https://doi.org/10.1103/PhysRevA.55.1613}{Physical Review A \textbf{55}, 1613--1622 (1997)}.

\bibitem{shor02} P. W. Shor, ``The quantum channel capacity and coherent information,'' Lecture notes, MSRI Workshop on Quantum Computation (2002).

\bibitem{devetak05} I. Devetak, ``The private classical capacity and quantum capacity of a quantum channel,'' \href{https://doi.org/10.1109/TIT.2004.839515}{IEEE Transactions on Information Theory \textbf{51}, 44--55 (2005)}.

\bibitem{schumacher96} B. Schumacher, ``Sending entanglement through noisy quantum channels,'' \href{https://doi.org/10.1103/PhysRevA.54.2614}{Physical Review A \textbf{54}, 2614--2628 (1996)}.

\bibitem{bkn00} H. Barnum, E. Knill, and M. A. Nielsen, ``On quantum fidelities and channel capacities,'' \href{https://doi.org/10.1109/18.850671}{IEEE Transactions on Information Theory \textbf{46}, 1317--1329 (2000)}.

\bibitem{dss98} D. P. DiVincenzo, P. W. Shor, and J. A. Smolin, ``Quantum-channel capacity of very noisy channels,'' \href{https://doi.org/10.1103/PhysRevA.57.830}{Physical Review A \textbf{57}, 830--839 (1998)}.

\bibitem{smithsmolin07} G. Smith and J. A. Smolin, ``Degenerate quantum codes for Pauli channels,'' \href{https://doi.org/10.1103/PhysRevLett.98.030501}{Physical Review Letters \textbf{98}, 030501 (2007)}.

\bibitem{cubitt15} T. Cubitt, D. Elkouss, W. Matthews, M. Ozols, D. P\'erez-Garc\'ia, and S. Strelchuk, ``Unbounded number of channel uses may be required to detect quantum capacity,'' \href{https://doi.org/10.1038/ncomms7739}{Nature Communications \textbf{6}, 6739 (2015)}.

\bibitem{agk76} R. Ahlswede, P. G\'acs, and J. K\"orner, ``Bounds on conditional probabilities with applications in multi-user communication,'' \href{https://doi.org/10.1007/BF00535682}{Zeitschrift f\"ur Wahrscheinlichkeitstheorie und verwandte Gebiete \textbf{34}, 157--177 (1976)}.

\bibitem{marton86} K. Marton, ``A simple proof of the blowing-up lemma,'' \href{https://doi.org/10.1109/TIT.1986.1057176}{IEEE Transactions on Information Theory \textbf{32}, 445--446 (1986)}.

\bibitem{marton96} K. Marton, ``Bounding $\bar d$-distance by informational divergence: a method to prove measure concentration,'' \href{https://doi.org/10.1214/aop/1039639365}{Annals of Probability \textbf{24}, 857--866 (1996)}.

\bibitem{mcdiarmid89} C. McDiarmid, ``On the method of bounded differences,'' in \emph{Surveys in Combinatorics}, \href{https://doi.org/10.1017/CBO9781107359949.008}{London Mathematical Society Lecture Note Series \textbf{141}, 148--188}, Cambridge University Press (1989).

\bibitem{ck11} I. Csisz\'ar and J. K\"orner, \emph{Information Theory: Coding Theorems for Discrete Memoryless Systems}, 2nd ed., \href{https://doi.org/10.1017/CBO9780511921889}{Cambridge University Press (2011)}.

\bibitem{morganwinter14} C. Morgan and A. Winter, ``\,`Pretty strong' converse for the quantum capacity of degradable channels,'' \href{https://doi.org/10.1109/TIT.2013.2288971}{IEEE Transactions on Information Theory \textbf{60}, 317--333 (2014)}.

\bibitem{tww17} M. Tomamichel, M. M. Wilde, and A. Winter, ``Strong converse rates for quantum communication,'' \href{https://doi.org/10.1109/TIT.2016.2615847}{IEEE Transactions on Information Theory \textbf{63}, 715--727 (2017)}

\bibitem{wangfangduan19} X. Wang, K. Fang, and R. Duan, ``Semidefinite programming converse bounds for quantum communication,'' \href{https://doi.org/10.1109/TIT.2018.2874031}{IEEE Transactions on Information Theory \textbf{65}, 2583--2592 (2019)}

\bibitem{bertawilde18} M. Berta and M. M. Wilde, ``Amortization does not enhance the max-Rains information of a quantum channel,'' \href{https://doi.org/10.1088/1367-2630/aac153}{New Journal of Physics \textbf{20}, 053044 (2018)}

\bibitem{wildewinter14} M. M. Wilde and A. Winter, ``Strong converse for the quantum capacity of the erasure channel for almost all codes,'' Proceedings of the 9th Conference on the Theory of Quantum Computation, Communication and Cryptography (TQC 2014), \href{https://doi.org/10.4230/LIPIcs.TQC.2014.52}{LIPIcs vol.~27, 52--66 (2014)}

\bibitem{kdww21} E. Kaur, S. Das, M. M. Wilde, and A. Winter, ``Resource theory of unextendibility and nonasymptotic quantum capacity,'' \href{https://doi.org/10.1103/PhysRevA.104.022401}{Physical Review A \textbf{104}, 022401 (2021)}.

\bibitem{khanianhirche25} Z. B. Khanian and C. Hirche, ``On strong converse bounds for the private and quantum capacities of anti-degradable channels,'' \href{https://arxiv.org/abs/2507.15661}{arXiv:2507.15661} (2025).

\bibitem{holevowerner01} A. S. Holevo and R. F. Werner, ``Evaluating capacities of bosonic Gaussian channels,'' \href{https://doi.org/10.1103/PhysRevA.63.032312}{Physical Review A \textbf{63}, 032312 (2001)}.

\bibitem{rains99} E. M. Rains, ``Bound on distillable entanglement,'' \href{https://doi.org/10.1103/PhysRevA.60.179}{Physical Review A \textbf{60}, 179--184 (1999)}.

\bibitem{rains01} E. M. Rains, ``A semidefinite program for distillable entanglement,'' \href{https://doi.org/10.1109/18.959270}{IEEE Transactions on Information Theory \textbf{47}, 2921--2933 (2001)}.

\bibitem{bbpssw96} C. H. Bennett, G. Brassard, S. Popescu, B. Schumacher, J. A. Smolin, and W. K. Wootters, ``Purification of noisy entanglement and faithful teleportation via noisy channels,'' \href{https://doi.org/10.1103/PhysRevLett.76.722}{Physical Review Letters \textbf{76}, 722--725 (1996)}.

\bibitem{bdsw96} C. H. Bennett, D. P. DiVincenzo, J. A. Smolin, and W. K. Wootters, ``Mixed-state entanglement and quantum error correction,'' \href{https://doi.org/10.1103/PhysRevA.54.3824}{Physical Review A \textbf{54}, 3824--3851 (1996)}.

\bibitem{devetakwinter05} I. Devetak and A. Winter, ``Distillation of secret key and entanglement from quantum states,'' \href{https://doi.org/10.1098/rspa.2004.1372}{Proceedings of the Royal Society A \textbf{461}, 207--235 (2005)}.

\bibitem{devetakshor05} I. Devetak and P. W. Shor, ``The capacity of a quantum channel for simultaneous transmission of classical and quantum information,'' \href{https://doi.org/10.1007/s00220-005-1317-6}{Communications in Mathematical Physics \textbf{256}, 287--303 (2005)}.

\bibitem{fannes73} M. Fannes, ``A continuity property of the entropy density for spin lattice systems,'' \href{https://doi.org/10.1007/BF01646490}{Communications in Mathematical Physics \textbf{31}, 291--294 (1973)}.

\bibitem{audenaert07} K. M. R. Audenaert, ``A sharp continuity estimate for the von Neumann entropy,'' \href{https://doi.org/10.1088/1751-8113/40/28/S18}{Journal of Physics A: Mathematical and Theoretical \textbf{40}, 8127--8136 (2007)}.

\bibitem{alickifannes04} R. Alicki and M. Fannes, ``Continuity of quantum conditional information,'' \href{https://doi.org/10.1088/0305-4470/37/5/L01}{Journal of Physics A: Mathematical and General \textbf{37}, L55--L57 (2004)}.

\bibitem{winter16} A. Winter, ``Tight uniform continuity bounds for quantum entropies: conditional entropy, relative entropy distance and energy constraints,'' \href{https://doi.org/10.1007/s00220-016-2609-8}{Communications in Mathematical Physics \textbf{347}, 291--313 (2016)}.

\bibitem{alhejjismith20} M. A. Alhejji and G. Smith, ``A tight uniform continuity bound for equivocation,'' in Proceedings of the IEEE International Symposium on Information Theory (ISIT), 2270--2274 (2020), \href{https://arxiv.org/abs/1909.00787}{arXiv:1909.00787}.

\bibitem{wilde20} M. M. Wilde, ``Optimal uniform continuity bound for conditional entropy of classical--quantum states,'' \href{https://doi.org/10.1007/s11128-019-2563-4}{Quantum Information Processing \textbf{19}, 61 (2020)}.

\bibitem{bertalamitomamichel25} M. Berta, L. Lami, and M. Tomamichel, ``Continuity of entropies via integral representations,'' \href{https://doi.org/10.1109/TIT.2025.3527858}{IEEE Transactions on Information Theory \textbf{71}, 1896--1908 (2025)}

\bibitem{audenaertetal25} K. Audenaert, B. Bergh, N. Datta, M. G. Jabbour, \'A. Capel, and P. Gondolf, ``Continuity bounds for quantum entropies arising from a fundamental entropic inequality,'' \href{https://doi.org/10.1109/TIT.2025.3586478}{IEEE Transactions on Information Theory \textbf{71}, 7029--7038 (2025)}

\bibitem{fvdg99} C. A. Fuchs and J. van de Graaf, ``Cryptographic distinguishability measures for quantum-mechanical states,'' \href{https://doi.org/10.1109/18.761271}{IEEE Transactions on Information Theory \textbf{45}, 1216--1227 (1999)}.

\bibitem{gottesman97} D. Gottesman, \emph{Stabilizer Codes and Quantum Error Correction}, PhD thesis, Caltech (1997), \href{https://arxiv.org/abs/quant-ph/9705052}{arXiv:quant-ph/9705052}.

\bibitem{ashikhminknill01} A. Ashikhmin and E. Knill, ``Nonbinary quantum stabilizer codes,'' \href{https://doi.org/10.1109/18.959288}{IEEE Transactions on Information Theory \textbf{47}, 3065--3072 (2001)}.

\bibitem{niwalee25} R. Niwa and J. Y. Lee, ``Coherent information for CSS codes under decoherence,'' \href{https://doi.org/10.1103/PhysRevA.111.032402}{Physical Review A \textbf{111}, 032402 (2025)}

\bibitem{kann26} T. Kann, M. R. Bloch, S. Kudekar, and R. Urbanke, ``Stabilizer-code channel transforms beyond repetition codes for improved hashing bounds,'' \href{https://arxiv.org/abs/2601.15505}{arXiv:2601.15505} (2026).

\bibitem{gabidulin67} E. M. Gabidulin, ``Limits for the decoding error probability when linear codes are used in memoryless channels,'' Problems of Information Transmission \textbf{3}, 43--48 (1967).

\bibitem{agarwaletal26} A. Agarwal, A. R. Kalra, S. Lee, D. Leung, L. Schaeffer, P. Sinha, and G. Smith, ``Enhanced quantum capacity thresholds from symmetry,'' \href{https://arxiv.org/abs/2605.09138}{arXiv:2605.09138} (2026).

\bibitem{hamada05} M. Hamada, ``Information rates achievable with algebraic codes on quantum discrete memoryless channels,'' IEEE Transactions on Information Theory \textbf{51}, 4263--4277 (2005), \href{https://arxiv.org/abs/quant-ph/0207113}{arXiv:quant-ph/0207113}.

\bibitem{ketkar06} A. Ketkar, A. Klappenecker, S. Kumar, and P. K. Sarvepalli, ``Nonbinary stabilizer codes over finite fields,'' \href{https://doi.org/10.1109/TIT.2006.883612}{IEEE Transactions on Information Theory \textbf{52}, 4892--4914 (2006)}

\bibitem{bruss98} D. Bru{\ss}, D. P. DiVincenzo, A. Ekert, C. A. Fuchs, C. Macchiavello, and J. A. Smolin, ``Optimal universal and state-dependent quantum cloning,'' \href{https://doi.org/10.1103/PhysRevA.57.2368}{Physical Review A \textbf{57}, 2368--2378 (1998)}.

\bibitem{ssrw17} D. Sutter, V. B. Scholz, A. Winter, and R. Renner, ``Approximate degradable quantum channels,'' \href{https://doi.org/10.1109/TIT.2017.2754268}{IEEE Transactions on Information Theory \textbf{63}, 7832--7844 (2017)}

\bibitem{lls18} F. Leditzky, D. Leung, and G. Smith, ``Quantum and private capacities of low-noise channels,'' \href{https://doi.org/10.1103/PhysRevLett.120.160503}{Physical Review Letters \textbf{120}, 160503 (2018)}

\bibitem{dklp02} E. Dennis, A. Kitaev, A. Landahl, and J. Preskill, ``Topological quantum memory,'' \href{https://doi.org/10.1063/1.1499754}{Journal of Mathematical Physics \textbf{43}, 4452--4505 (2002)}.

\bibitem{bcgst02} H. Barnum, C. Cr\'epeau, D. Gottesman, A. Smith, and A. Tapp, ``Authentication of quantum messages,'' \href{https://doi.org/10.1109/SFCS.2002.1181969}{Proceedings of the 43rd Annual IEEE Symposium on Foundations of Computer Science (FOCS 2002), pp. 449--458 (2002)}.

\bibitem{osbornewinter09} T. J. Osborne and A. Winter, ``A quantum generalisation of Talagrand's inequality,'' online research note (2009), available at \url{https://tjoresearchnotes.wordpress.com/2009/02/13/a-quantum-generalisation-of-talagrands-inequality/}.

\bibitem{rouzewirthzhang24} C. Rouz\'e, M. Wirth, and H. Zhang, ``Quantum Talagrand, KKL and Friedgut's theorems and the learnability of quantum Boolean functions,'' \href{https://doi.org/10.1007/s00220-024-04981-0}{Communications in Mathematical Physics \textbf{405}, 95 (2024)}

\bibitem{changli26} F. Chang and P. Li, ``Quantum Talagrand-type inequalities via variance decay,'' \href{https://arxiv.org/abs/2601.01900}{arXiv:2601.01900} (2026).

\bibitem{gowershatami17} W. T. Gowers and O. Hatami, ``Inverse and stability theorems for approximate representations of finite groups,'' Sbornik: Mathematics \textbf{208}, 1784--1817 (2017), \href{https://arxiv.org/abs/1510.04085}{arXiv:1510.04085}

\bibitem{benyoreshkov10} C. B\'eny and O. Oreshkov, ``General conditions for approximate quantum error correction and near-optimal recovery channels,'' \href{https://doi.org/10.1103/PhysRevLett.104.120501}{Physical Review Letters \textbf{104}, 120501 (2010)}.

\bibitem{cws09} A. Cross, G. Smith, J. A. Smolin, and B. Zeng, ``Codeword stabilized quantum codes,'' \href{https://doi.org/10.1109/TIT.2008.2008136}{IEEE Transactions on Information Theory \textbf{55}, 433--438 (2009)}

\bibitem{rhss97} E. M. Rains, R. H. Hardin, P. W. Shor, and N. J. A. Sloane, ``A nonadditive quantum code,'' \href{https://doi.org/10.1103/PhysRevLett.79.953}{Physical Review Letters \textbf{79}, 953--954 (1997)}.

\bibitem{ycl08} S. Yu, Q. Chen, C. H. Lai, and C. H. Oh, ``Nonadditive quantum error-correcting code,'' \href{https://doi.org/10.1103/PhysRevLett.101.090501}{Physical Review Letters \textbf{101}, 090501 (2008)}

\bibitem{tbr16} M. Tomamichel, M. Berta, and J. M. Renes, ``Quantum coding with finite resources,'' \href{https://doi.org/10.1038/ncomms11419}{Nature Communications \textbf{7}, 11419 (2016)}.

\end{thebibliography}
\end{document}